\newtheorem{proposition}{Proposition}
\journal{Journal of Computational Physics}
\begin{document}

\begin{frontmatter}

\title{Efficient Statistically Accurate Algorithms for the Fokker-Planck Equation in Large Dimensions}

\author{Nan Chen\fnref{myfootnote}}
\address{Department of Mathematics and Center for Atmosphere
Ocean Science, Courant Institute of Mathematical Sciences, New York University, New York, NY, USA.}
\fntext[myfootnote]{Corresponding author}
\ead{chennan@cims.nyu.edu}

\author{Andrew J. Majda}
\address{Department of Mathematics and Center for Atmosphere
Ocean Science, Courant Institute of Mathematical Sciences, New York University, New York, NY, USA, \\ Center for Prototype
Climate Modeling, New York University Abu Dhabi, Saadiyat Island, Abu Dhabi, UAE.}
\ead{jonjon@cims.nyu.edu}




\begin{abstract}
Solving the Fokker-Planck equation for high-dimensional complex turbulent dynamical systems is an important and practical issue. However, most traditional methods suffer from the curse of dimensionality and have difficulties in capturing the fat tailed highly intermittent probability density functions (PDFs) of complex systems in turbulence, neuroscience and excitable media. In this article, efficient statistically accurate algorithms are developed for solving both the transient and the equilibrium solutions of Fokker-Planck equations associated with high-dimensional nonlinear turbulent dynamical systems with conditional Gaussian structures. The algorithms involve a hybrid strategy that requires only a small number of ensembles. Here, a conditional Gaussian mixture in a high-dimensional subspace via an extremely efficient parametric method is combined with a judicious non-parametric Gaussian kernel density estimation in the remaining low-dimensional subspace. Particularly, the parametric method provides closed analytical formulae for determining the conditional Gaussian distributions in the high-dimensional subspace and is therefore computationally efficient and accurate. The full non-Gaussian PDF of the system is then given by a Gaussian mixture. Different from the traditional particle methods, each conditional Gaussian distribution here covers a significant portion of the high-dimensional PDF. Therefore a small number of ensembles is sufficient to recover the full PDF, which overcomes the curse of dimensionality. Notably, the mixture distribution has a significant skill in capturing the transient behavior with fat tails of the high-dimensional non-Gaussian PDFs, and this facilitates the algorithms in accurately describing the intermittency and extreme events in complex turbulent systems. It is shown in a stringent set of test problems that the method only requires an order of $O(100)$ ensembles to successfully recover the highly non-Gaussian transient PDFs in up to $6$ dimensions with only small errors.

\end{abstract}

\begin{keyword}
Fokker-Planck equation, high-dimensional non-Gaussian PDFs, intermittency, conditional Gaussian structures, hybrid method, Gaussian mixture
\MSC[2010] 35Q84\sep 37F99 \sep 76F55 \sep 65C05
\end{keyword}

\end{frontmatter}


\section{Introduction}

The Fokker-Planck equation describes the time evolution of the probability density function (PDF) of complex systems with noise \citep{gardiner1985stochastic, risken1989fokker}. Solving the Fokker-Planck equation for both the steady state and transient phases in high dimensions is an important problem in science, engineering, finance, and many other areas. In addition to the large dimensions, strong non-Gaussianity due to the nonlinear coupling and state-dependent noise in the underlying  dynamical systems is another salient feature of the PDFs in many applications, such as geophysical and engineering turbulence, neuroscience and excitable media \cite{majda2016introduction, lindner2004effects}. Examples include the prediction of extreme events \citep{chen2014predicting, cousins2014quantification, ghil2011extreme, palmer2002quantifying} and rare events \citep{mohamad2016probabilistic, cousins2016reduced, thual2016simple}, the uncertainty quantification of the systems with intermittent instability \citep{majda2012lessons, branicki2012quantifying, greco2009statistical} and the characterization of other non-Gaussian events in nature \citep{neelin2010long, huang2001application}. These intermittency and extreme/rare events usually result in strong skewness and fat tails in the non-Gaussian PDFs.

Now let's consider a general nonlinear dynamical system with noise,
\begin{equation}\label{General_SDE}
  d\mathbf{u} = \mathbf{F}(\mathbf{u},t)dt + \boldsymbol\Sigma(\mathbf{u},t)d\mathbf{W},
\end{equation}
with state variables $\mathbf{u}\in \mathbb{R}^N$, noise matrix $\boldsymbol\Sigma\in\mathbb{R}^{N\times K}$ and white noise $\mathbf{W}\in\mathbb{R}^K$. The following partial differential equation (PDE) is the so-called Fokker-Planck equation \citep{gardiner1985stochastic, risken1989fokker} that describes the evolution of the smooth PDF $p(\mathbf{u},t)$  associated with \eqref{General_SDE},
\begin{equation}\label{Fokker_Planck_Equation}
\begin{split}
  \frac{\partial}{\partial t} p(\mathbf{u},t) &= -\nabla_\mathbf{u}\big(\mathbf{F}(\mathbf{u},t)p(\mathbf{u},t)\big) + \frac{1}{2}\nabla_\mathbf{u}\cdot\nabla_\mathbf{u}(\mathbf{Q}(\mathbf{u},t)p(\mathbf{u},t)),\\
  p_t\big|_{t=t_0} &= p_0(\mathbf{u}),
\end{split}
\end{equation}
with $\mathbf{Q} = \boldsymbol\Sigma\boldsymbol\Sigma^T$.

Since there is no general analytical solution for the Fokker-Planck equation \eqref{Fokker_Planck_Equation}, various numerical approaches are developed for solving the evolution of $p(\mathbf{u},t)$. Among these methods, finite element and finite difference are widely used. However, the enormous computational cost makes these PDE solvers impractical for systems with dimension larger than three \citep{pichler2013numerical, kumar2006solution, spencer1993numerical}. Another well-known approach of solving $p(\mathbf{u},t)$ is through the direct Monte Carlo simulation of \eqref{General_SDE}. Unfortunately, the same curse of dimensionality problem appears, where the sample size increases in an exponential rate as the dimension $N$ \citep{robert2004monte, daum2003curse}. In addition, a substantial number of Monte Carlo samples is already required even in the low-dimensional scenarios in order to recover the fat tails of the highly intermittent non-Gaussian PDFs with accuracy \citep{ackerman2010fat}. On the other hand, there are a few methods that work for the approximate solutions of the Fokker-Planck equation with dimension larger than three for some special types of the dynamical systems. For example, asymptotic expansion with truncations can be applied to systems with multiscale structures. The solution of the Fokker-Planck equation associated with the truncated system provides a good approximation for the time evolution of large-scale or slowly varying variables \citep{gardiner1985stochastic, majda1999models, majda2001mathematical, majda2006stochastic}. With extra conditions for both nonlinear and noise terms, splitting methods also provide reasonably good estimations of the PDF for systems with weak non-Gaussianity \citep{er2011methodology, er2012state}. In addition, orthogonal functions and tensor decompositions have been applied to solve the steady state solution of \eqref{Fokker_Planck_Equation} for some class of models \citep{von2000calculation, sun2014numerical, risken1989fokker}.

In this article, efficient statistically accurate algorithms are developed for solving the Fokker-Planck equation associated with high-dimensional nonlinear
turbulent dynamical systems with conditional Gaussian structures \citep{chen2016filtering}. Decomposing $\mathbf{u}$ in \eqref{General_SDE} into two groups of variables $\mathbf{u}=(\mathbf{u}_\mathbf{I},\mathbf{u}_\mathbf{II})$ with $\mathbf{u}_{\mathbf{I}}\in R^{N_{\mathbf{I}}}$ and $\mathbf{u}_{\mathbf{II}}\in R^{N_{\mathbf{II}}}$, the conditional Gaussian systems are characterized by the fact that once a single trajectory of $\mathbf{u}_\mathbf{I}(s\leq t)$ is given, $\mathbf{u}_\mathbf{II}(t)$ conditioned on $\mathbf{u}_\mathbf{I}(s\leq t)$ becomes a Gaussian process. Despite the conditional Gaussianity, the coupled systems remain highly nonlinear and is able to capture strong non-Gaussian features such as skewed or fat-tailed distributions as observed in nature \citep{chen2016filtering}. One of the desirable features of such conditional Gaussian system is that it allows closed analytical formulae for solving the conditional distribution $p(\mathbf{u}_\mathbf{II}(t)|\mathbf{u}_\mathbf{I}(s\leq t))$ based on a Bayesian framework \citep{liptser2001statistics}. Note that most turbulent dynamical systems contain only a small dimension of the observed variables $\mathbf{u}_{\mathbf{I}}$ that represent large scales or surface variables while the dimension of $\mathbf{u}_{\mathbf{II}}$ can be very large. Applications of the conditional Gaussian systems to strongly nonlinear systems include predicting the intermittent time-series of the Madden-Julian oscillation (MJO) and monsoon intraseasonal variabilities \citep{chen2014predicting, chen2015predicting, chen2015predicting2}, filtering the stochastic skeleton model for the MJO \citep{chen2016filtering2}, and recovering the turbulent ocean flows with noisy observations from Lagrangian tracers \citep{chen2014information, chen2015noisy, chen2016model}. Other studies that also fit into the conditional Gaussian framework includes the cheap exactly solvable forecast models in dynamic stochastic superresolution of sparsely observed turbulent systems \citep{branicki2013dynamic, keating2012new}, stochastic superparameterization for geophysical turbulence \citep{majda2014new}, physics constrained nonlinear regression models \citep{majda2012physics, harlim2014ensemble} and blended particle filters for large-dimensional chaotic systems \citep{majda2014blended}.

Different from the traditional particle methods, these efficient statistically accurate algorithms employ only a small number $L$ of ensembles. In fact, all that is required in the algorithms is $L$ independent trajectories of the low-dimensional variables $\mathbf{u}_\mathbf{I}$. Here a hybrid strategy is involved in these efficient statistically accurate algorithms, where a conditional Gaussian mixture with $L$ components in the high-dimensional subspace of $\mathbf{u}_{\mathbf{II}}$ via an extremely efficient parametric method is combined with a judicious non-parametric Gaussian kernel density estimation in the low-dimensional subspace of $\mathbf{u}_{\mathbf{I}}$. Despite the high dimensionality, each component of the conditional Gaussian mixture is computed via the closed analytical formulae and the $L$ components can even be solved in a parallel way due to their independence. Therefore, this parametric method for solving the conditional Gaussian mixture in the high-dimensional subspace is computationally efficient and accurate. Then combining each component of the conditional Gaussian mixture of $\mathbf{u}_{\mathbf{II}}$ with the corresponding Gaussian distribution of $\mathbf{u}_{\mathbf{I}}$ from the Gaussian kernel method results in a Gaussian mixture for the full PDF $p(\mathbf{u}_\mathbf{I}, \mathbf{u}_\mathbf{II})$. One of the compelling features of the algorithms is that each conditional Gaussian distribution is able to cover a significant portion of the high-dimensional PDF $p(\mathbf{u}_\mathbf{II})$. This is the fundamental reason that a small number of ensembles is sufficient in recovering the full PDF, which greatly ameliorates the curse of dimensionality. In particular, the mixture distribution has a significant skill in capturing the fat tails of the high-dimensional non-Gaussian PDFs that are associated with the intermittency and extreme events in the turbulent systems. In addition, the solution resulting from the algorithms converges to the PDE solution of the Fokker-Planck equation with no intrinsic barrier as in those approximate methods. Practically, with $L\sim O(100)$, this hybrid approach is able to recover the joint PDF with $\mbox{Dim}(\mathbf{u}_{\mathbf{I}})\leq3$ and $\mbox{Dim}(\mathbf{u}_{\mathbf{II}})\sim O(10)$. Note that the idea of adopting hybrid methods have also been applied in data assimilation and ensemble prediction in high dimensional turbulent systems with non-Gaussian features \citep{majda2014blended, sapsis2013blending, sapsis2013statistically,  slivinski2015hybrid, hamill2000hybrid}.

The remaining of this article is organized as follows. A general framework of the turbulent dynamical systems with conditional Gaussian structures is shown in Section \ref{Sec:Systems} with concrete examples that involve intermittency and extreme events. Section \ref{Sec:Algorithm} presents the efficient statistically accurate algorithms. Section \ref{Sec:Performance} includes the performance tests of the algorithms for high-dimensional non-Gaussian turbulent dynamical systems. Conclusion and discussions are given in Section \ref{Sec:Conclusion}. The details of an important family of the conditional Gaussian systems with energy-conserving nonlinear interactions that appears in many turbulent applications are shown in the Appendix.

\section{High-Dimensional Conditional Gaussian Models Exhibiting Nonlinear Dynamics with Extreme Events, Intermittency and Other Complex Non-Gaussian Features}\label{Sec:Systems}

The general framework of high dimensional conditional Gaussian models is given as follows \citep{liptser2001statistics, chen2016filtering}:
\begin{subequations}\label{Conditional_Gaussian_System}
\begin{align}
    d\mathbf{u}_{\mathbf{I}} &= [\mathbf{A}_0(t,\mathbf{u}_{\mathbf{I}})+\mathbf{A}_1(t,\mathbf{u}_{\mathbf{I}})\mathbf{u}_{\mathbf{II}}]dt + \boldsymbol{\Sigma}_{\mathbf{I}}(t,\mathbf{u}_{\mathbf{I}})d\mathbf{W}_{\mathbf{I}}(t),\label{Conditional_Gaussian_System1}\\
    d\mathbf{u}_{\mathbf{II}} &= [\mathbf{a}_0(t,\mathbf{u}_{\mathbf{I}})+\mathbf{a}_1(t,\mathbf{u}_{\mathbf{I}})\mathbf{u}_{\mathbf{II}}]dt + \boldsymbol{\Sigma}_{\mathbf{II}}(t,\mathbf{u}_{\mathbf{I}})d\mathbf{W}_{\mathbf{II}}(t), \label{Conditional_Gaussian_System2}
\end{align}
\end{subequations}
where the state variables are written in the form $\mathbf{u} = (\mathbf{u}_{\mathbf{I}}, \mathbf{u}_{\mathbf{II}})$ with both $\mathbf{u}_{\mathbf{I}}\in R^{N_\mathbf{I}}$ and $\mathbf{u}_{\mathbf{II}}\in R^{N_\mathbf{II}}$ being multidimensional variables. In \eqref{Conditional_Gaussian_System}, $\mathbf{A}_0, \mathbf{A}_1, \mathbf{a}_0, \mathbf{a}_1, \boldsymbol{\Sigma}_{\mathbf{I}}$ and $\boldsymbol{\Sigma}_{\mathbf{II}}$ are vectors and matrices that depend only on time $t$ and the state variables $\mathbf{u}_{\mathbf{I}}$, and $\mathbf{W}_{\mathbf{I}}(t)$ and $\mathbf{W}_{\mathbf{II}}(t)$ are independent Wiener processes. The systems in \eqref{Conditional_Gaussian_System} are named as conditional Gaussian systems due to the fact that once $\mathbf{u}_{\mathbf{I}}(s)$ for $s\leq t$ is given, $\mathbf{u}_{\mathbf{II}}(t)$ conditioned on $\mathbf{u}_{\mathbf{I}}(s)$ becomes a Gaussian process with mean $\mathbf{\bar{u}}_{\mathbf{II}}(t)$ and covariance $\mathbf{R}_{\mathbf{II}}(t)$, i.e.,
\begin{equation}\label{CG_PDF}
    p\big(\mathbf{u}_{\mathbf{II}}(t)|\mathbf{u}_{\mathbf{I}}(s\leq t)\big) \sim \mathcal{N}(\mathbf{\bar{u}}_{\mathbf{II}}(t), \mathbf{R}_{\mathbf{II}}(t)).
\end{equation}
Despite the conditional Gaussianity, the coupled system \eqref{Conditional_Gaussian_System} remains highly nonlinear and is able to capture the strong non-Gaussian features as observed in nature \citep{chen2016filtering}. One of the desirable features of the conditional Gaussian system \eqref{Conditional_Gaussian_System} is that the conditional distribution in \eqref{CG_PDF} has the following closed analytical form \citep{liptser2001statistics},
\begin{equation}\label{CG_Result}
\begin{aligned}
        d\mathbf{\bar{u}}_{\mathbf{II}}(t) = & [\mathbf{a}_0(t,\mathbf{u}_{\mathbf{I}})+\mathbf{a}_1(t,\mathbf{u}_{\mathbf{I}})\mathbf{\bar{u}}_{\mathbf{II}}]dt+(\mathbf{R}_{\mathbf{II}}\mathbf{A}^{*}_1(t,\mathbf{u}_{\mathbf{I}}))
        (\boldsymbol{\Sigma}_{\mathbf{I}}\boldsymbol{\Sigma}_{\mathbf{I}}^*)^{-1}(t,\mathbf{u}_{\mathbf{I}})\times\\
        &\qquad\qquad\qquad\qquad
    [d\mathbf{u}_{\mathbf{I}}-(\mathbf{A}_0(t,\mathbf{u}_{\mathbf{I}})+\mathbf{A}_1(t,\mathbf{u}_{\mathbf{I}})\mathbf{\bar{u}}_{\mathbf{II}})dt],\\
    d\mathbf{R}_{\mathbf{II}}(t) = & \left\{\mathbf{a}_1(t,\mathbf{u}_{\mathbf{I}})\mathbf{R}_{\mathbf{II}}+\mathbf{R}_{\mathbf{II}}\mathbf{a}^{*}_1(t,\mathbf{u}_{\mathbf{I}})+(\boldsymbol{\Sigma}_{\mathbf{II}}\boldsymbol{\Sigma}_{\mathbf{II}}^*)(t,\mathbf{u}_{\mathbf{I}})\right.\\
    &\qquad\qquad\left.-(\mathbf{R}_{\mathbf{II}}\mathbf{A}^{*}_1(t,\mathbf{u}_{\mathbf{I}}))
    (\boldsymbol{\Sigma}_{\mathbf{I}}\boldsymbol{\Sigma}_{\mathbf{I}}^*)^{-1}(t,\mathbf{u}_{\mathbf{I}})(\mathbf{R}_{\mathbf{II}}\mathbf{A}^{*}_1(t,\mathbf{u}_{\mathbf{I}}))^*\right\}dt.
\end{aligned}
\end{equation}

In most geophysical and engineering turbulent dynamical systems, the nonlinear terms are quadratic and the total energy in the nonlinear terms is conserved \citep{majda2016introduction, majda1999models, majda2001mathematical, majda2015statistical, harlim2014ensemble, majda2012physics}. The nonlinear interactions in the turbulent dynamical systems allow the energy transfer between different scales that induces  intermittent instabilities. On the other hand, such linear instabilities are mitigated by energy-conserving quadratic nonlinear interactions that transfer energy back to the linearly stable modes where it is dissipated, resulting in a statistical steady state. Note that in the absence of such energy-conserving nonlinear interactions, the nonlinear turbulent systems will necessarily suffer from non-physical finite-time blow up of statistical solutions as well as pathological behavior of the related invariant measure \citep{majda2012fundamental}. The abstract form of such kind of turbulent dynamical systems is as follows:
\begin{equation}\label{EnergyConserveModel}
  d\mathbf{u} = \big[ (\mathbf{L}+\mathbf{D})\mathbf{u} + \mathbf{B}(\mathbf{u},\mathbf{u}) + \mathbf{F}(t) \big] dt + \boldsymbol{\Sigma}(t,\mathbf{u})d\mathbf{W}(t),
\end{equation}
where $\mathbf{L}$ is a skew-symmetric linear operator representing the $\beta$ effect of Earth's curvature and topography while $\mathbf{D}$ is a negative definite symmetric operator representing dissipative processes such as surface drag, radiative damping and viscosity, etc \citep{salmon1998lectures, thompson2006scaling, majda2006nonlinear, vallis2017atmospheric}. The quadratic operator $\mathbf{B}(\mathbf{u},\mathbf{u})$ conserves the energy by itself so that it satisfies the following:
\begin{equation*}
  \mathbf{u}\cdot\mathbf{B}(\mathbf{u},\mathbf{u})  = 0.
\end{equation*}

A rich class of turbulent models with energy-conserving quadratic nonlinear interactions in \eqref{EnergyConserveModel} belong to the conditional Gaussian systems \eqref{Conditional_Gaussian_System}. See \ref{Appendix:A} for details. In the remaining of this section, we provide a few examples of the conditional Gaussian turbulent dynamical systems with energy-conserving quadratic nonlinear interactions and these models will also be used for the performance tests of the efficient statistically accurate algorithms in Section  \ref{Sec:Performance}. The parameters in these test models are listed in Table \ref{Table_Parameter} and the trajectories and the PDF at the equilibrium states are shown in Figures \ref{Models}--\ref{Models2}. Note that although the focus here is on the conditional Gaussian turbulent dynamical systems with energy-conserving quadratic nonlinear interactions, the algorithms to be developed in Section  \ref{Sec:Performance} work for the general conditional Gaussian systems \eqref{Conditional_Gaussian_System}.\medskip

\emph{1. The noisy Lorenz 63 (L-63) Model:}
\begin{subequations}\label{Lorenz_63}
\begin{align}
    dx &= \sigma( y - x )dt + \sigma_xdW_x, \\
    dy &= \big(x ( \rho - z ) - y \big)dt + \sigma_ydW_y, \\
    dz &= (x y - \beta z)dt + \sigma_zdW_z,
\end{align}
\end{subequations}
The noisy version of L-63 model involves the energy-conserving nonlinear interactions through the quadratic terms $-xz$ and $xy$ in the $y$ and $z$ equations. The system \eqref{Lorenz_63} belongs to the conditional Gaussian framework when $\mathbf{u}_{\mathbf{I}} = x, \mathbf{u}_{\mathbf{II}} = (y,z)^T$ or $\mathbf{u}_{\mathbf{I}} = (y,z)^T, \mathbf{u}_{\mathbf{II}} = x$. With the classical choice of the parameters $\rho=28, \sigma = 10, \beta = 8/3$ \cite{lorenz1963deterministic} and a moderate noise level for all the three noise coefficients $\sigma_x = \sigma_y = \sigma_z = 10$, both the chaotic behaviors in the trajectories and a noisy version of the Lorenz attractor with the butterfly profile can be seen in Column (a) of Figure \ref{Models}.\medskip\medskip

\emph{2. A 4D stochastic climate model:}
\begin{subequations}\label{4D_Test_Model}
\begin{align}
    dx_1 &= \Big(  -x_2(L_{12}  + a_1 x_1 + a_2 x_2) + d_1x_1 + F_1 \notag\\
         &\qquad\qquad\qquad\qquad\qquad\qquad + L_{13} y_1  + b_{123} x_2 y_1\Big) dt + \sigma_{x_1}dW_{x_1}, \\
    dx_2 &= \Big(  +x_1(L_{12}  + a_1 x_1 + a_2 x_2) + d_2x_2 + F_2 \notag\\
         &\qquad\qquad\qquad\qquad\qquad\qquad + L_{24} y_2  + b_{213} x_1 y_1\Big) dt + \sigma_{x_2}dW_{x_1}, \\
    dy_1 &= \Big(-L_{13} x_1 + b_{312} x_1 x_2 + F_3 - \frac{\gamma_1}{\epsilon} y_1 \Big) dt + \frac{\sigma_{y_1}}{\sqrt{\epsilon}}dW_{y_1}. \\
    dy_2 &= \Big(-L_{24} x_2 + F_4 -\frac{\gamma_2}{\epsilon}y_2\Big)dt + \frac{\sigma_{y_2}}{\sqrt{\epsilon}}dW_{y_2},
\end{align}
\end{subequations}
where $ b_{123}+  b_{213}+  b_{312} = 0$. This simple stochastic climate model \citep{majda2008applied, majda2005information} features many of the important dynamical properties of comprehensive global circulation models (GCMs) but with many fewer degree of freedom. It contains a quadratic nonlinear part that conserves energy as well as a linear operator. The linear operator includes a skew-symmetric part that mimics the Coriolis effect and topographic Rossby wave propagation, and a negative definite symmetric part that is formally similar to the dissipation such as the surface drag and radiative damping, as discussed in the general form in \eqref{EnergyConserveModel}. The two variables $x_1$ and $x_2$ can be regarded as climate variables while the other two variables $y_1$ and $y_2$ become weather variables that occur in a much faster time scale when $\epsilon$ is small. The coupling in different variables is through both linear and nonlinear terms, where the nonlinear coupling through $b_{ijk}$ produces multiplicative noise. Note that when $\epsilon\to0$, applying an explicit stochastic mode reduction results in a two-dimensional system for the climate variables \citep{majda1999models, majda2001mathematical, majda2006stochastic}.

The 4D stochastic climate model \eqref{4D_Test_Model} is a conditional Gaussian system with $\mathbf{u}_{\mathbf{I}} = (x_1, x_2)^T$ and $\mathbf{u}_{\mathbf{II}} = (y_1, y_2)^T$. Column (b) in Figure \ref{Models} shows the trajectories and the 1D marginal equilibrium PDFs of a regime with moderate $\epsilon$. In this dynamical regime, highly non-Gaussian marginal equilibrium statistics are found in both the climate variable $x_1$ and the weather variable $y_1$, which is due to the intermittency and extreme events as observed in the trajectories.\medskip\medskip

\emph{3. A nonlinear triad system with multiscale features:}
\begin{subequations}\label{Linear3D}
\begin{align}
  du_1 &= (-\gamma_1 u_1 + L_{12} u_2 + L_{13} u_3 +  I u_1 u_2 + F(t))~dt + \sigma_1 dW_1,\label{Linear3D_u1}\\
  du_2 &= (-L_{12} u_1 - \frac{\gamma_2}{\epsilon} u_2 + L_{23} u_3 - I u_1^2) dt + \frac{\sigma_2}{\epsilon^{1/2}} dW_2,\label{Linear3D_u2}\\
  du_3 &= (-L_{13} u_1 - L_{23} u_2 - \frac{\gamma_3}{\epsilon} u_3  )~dt + \frac{\sigma_3}{\epsilon^{1/2}} dW_3,\label{Linear3D_u3}
\end{align}
\end{subequations}
This nonlinear triad system is a  simple prototype nonlinear stochastic model that mimics structural features of low-frequency variability of GCMs with non-Gaussian features \citep{majda2009normal} and it was used to test the skill  for reduced nonlinear stochastic models for fluctuation dissipation theorem  \citep{majda2010low}. The triad model \eqref{Linear3D} involves a quadratic nonlinear interaction between $u_1$ and $u_2$ with energy-conserving property that induces intermittent instability. On the other hand, the coupling between $u_2$ and $u_3$ is linear and is through the skew-symmetric term with coefficient $-L_{23}$, which represents an oscillation structure of $u_2$ and $u_3$. Particularly, when $L_{23}$ is large, fast oscillations become dominant for $u_2$ and $u_3$ while the overall evolution of $u_1$ can still be slow provided that the feedback from $u_2$ and $u_3$ is damped quickly. Such multiscale structure appears in the turbulent ocean flows described for example by shallow water equation, where $u_1$ stands for the geostrophically balanced part while $u_2$ and $u_3$ mimics the fast oscillations due to the gravity waves \citep{chen2016filtering2}. The large-scale forcing $F(t)$ represents the external time-periodic input to the system, such as the seasonal effects or decadal oscillations in a long time scale \citep{vallis2017atmospheric, mantua2002pacific}.  In addition, the scaling factor $\epsilon$ plays the same role as in the 4D stochastic climate model \eqref{4D_Test_Model} that allows a difference in the memory of the three variables. In Figure \ref{Models2}, the trajectories and the PDFs in three different dynamical regimes are shown. These three dynamical regimes with parameters listed in Table \ref{Table_Parameter} have the following features:
\begin{itemize}
  \item Regime I: a weak coupling between the observed variable $u_1$ and the unobserved variables $u_2, u_3$, large intrinsic noises in $u_2$ and $u_3$, and a moderate $\epsilon$.
  \item Regime II: a strong coupling between the observed variable $u_1$ and the unobserved variables $u_2, u_3$, time-periodic forcing, and a small $\epsilon$.
  \item Regime III: same as Regime II plus strong coupling between $u_2$ and $u_3$ with fast oscillations.
\end{itemize}
Intermittency and extreme events are observed in the trajectories of all the three dynamical regimes. See Figure \ref{Models2}. But each dynamical regime has its own unique features. For example, Regime I has intermittent large-amplitude bursts of instability in $u_1$ and slow evolutions of $u_2$ and $u_3$ due to a moderate $\epsilon$. On the other hand, $u_2$ and $u_3$ in Regime II and III occur in a much faster time scale with $\epsilon=0.1$. Although similar nearly Gaussian PDFs of $u_1$ and highly non-Gaussian PDFs of $u_2$ are found in Regime II and III, the non-Gaussian features of $u_3$ are more significant and the trajectory of $u_2$ has more fluctuations in Regime III due to its strong coupling with fast oscillations between $u_2$ and $u_3$.

This nonlinear triad system belongs to the conditional Gaussian system when  $\mathbf{u}_{\mathbf{I}} = u_1, \mathbf{u}_{\mathbf{II}} = (u_2,u_3)^T$.\medskip\medskip

\emph{4. A 6D conceptual dynamical model for turbulence:}
\begin{subequations}\label{TurbulentModel}
\begin{align}
    du &= ( -d_u u + F_u  + \sum_{i}^5\, \gamma_i\, u\, v_i ) dt + \sigma_u dW_u,\label{TurbulentModel_u}\\
    dv_i &= ( -d_{v_i} v_i - \gamma_i\, u^2) dt + \sigma_{v_i} dW_{v_i},\qquad i = 1,\ldots,5. \label{TurbulentModel_v}
\end{align}
\end{subequations}
This 6D conceptual dynamical model for turbulence is motivated from \citep{majda2014conceptual} and is modified such that it fits into conditional Gaussian framework with $\mathbf{u}_{\mathbf{I}} = u, \mathbf{u}_{\mathbf{II}} = (v_1,\ldots,v_5)^T$, where $u$ is the large-scale observed variable and $v_1$ to $v_5$ represents the unobserved variables from medium to small scales. This 6D conceptual dynamical model for turbulence inherits many important features from the dynamics in \citep{majda2014conceptual}. For example, as shown in Column (c) of Figure \ref{Models}, the large-scale observed variable $u$  and the first unobserved variable $v_1$ are both nearly Gaussian while  small-scale variables $v_3, v_4$ and $v_5$ all have significant fat tails, which are a hallmark of intermittency. In addition, the small-scale turbulent flows provide feedback to large scales through the nonlinear coupling with energy-conserving property.

\begin{table}[h]\label{Table_Parameter}
\begin{center}\hspace*{-2cm}
\begin{tabular}{|l|c|c|c|c|c|c|c|c|c|c|c|c|c|c|c|}
  \hline
  \multicolumn{16}{|l|}{\qquad\qquad Noisy L-63 model \eqref{Lorenz_63}}\\\hline
             & $\sigma$   & $\rho$     & $\beta$    & $\sigma_1$ & $\sigma_2$ & $\sigma_3$ &   &   &   &   &   &  &&&\\
             & $10$       & $28$       & $8/3$      & $10$       & $10$       & $10$       &   &   &   &   &   &  &&&\\\hline
  \multicolumn{16}{|l|}{\qquad\qquad 4D stochastic climate model \eqref{4D_Test_Model}}\\\hline
  & $L_{12}$ & $L_{13}$ & $L_{24}$ & $a_1$ & $a_2$ & $d_1$ & $d_2$ & $\epsilon$ & $\sigma_1$ & $\sigma_2$ & $\sigma_3$ & $\sigma_4$ & $b_{123}$ & $b_{213}$ & $F_i$\\
  & $1$      & $0.5$    & $0.5$    & $2$   & $1$   & $-1$  & $-0.4$& $1$        & $0.5$      & $2$        & $0.5$      & $1$        & $1.5$     & $1.5$     & $0$  \\\hline
   \multicolumn{16}{|l|}{\qquad\qquad 3D nonlinear triad system with multiscale features \eqref{Linear3D}}\\\hline
  Regime     & $\gamma_1$ & $\gamma_2$ & $\gamma_3$ & $L_{12}$ & $L_{13}$ & $L_{23}$ & $\sigma_1$  & $\sigma_2$ & $\sigma_3$ & $I$  & $\epsilon$ & \multicolumn{3}{|c|}{F}&\\
         I   & $2$        & $0.2$      & $0.4$      & $0.2$    & $0.1$    & $0$       & $0.5$      & $1.2$      & $0.8$      & $5$  & 1          & \multicolumn{3}{|c|}{$2$}&\\
         II  & $2$        & $0.6$      & $0.4$      & $1$      & $0.5$    & $0$       & $0.5$      & $0.1$      & $0.1$      & $5$  & 0.1        & \multicolumn{3}{|c|}{$2 + 2\sin(2\pi t)$}&\\
         III & $2$        & $0.6$      & $0.4$      & $1$      & $1$      & $10$      & $0.5$      & $0.1$      & $0.1$      & $5$  & 0.1        & \multicolumn{3}{|c|}{$2 + 2\sin(2\pi t)$}&\\
  \hline
  \multicolumn{16}{|l|}{\qquad\qquad 6D conceptual dynamical model for turbulence \eqref{TurbulentModel}}\\\hline
          & $d_u$ & $F$ & $\sigma_u$ & $\gamma_i$ & $\sigma_{v_1}$ & $\sigma_{v_2}$ & $\sigma_{v_3}$ & $\sigma_{v_4}$ & $\sigma_{v_5}$  & $d_{v_1}$ &  $d_{v_2}$ & $d_{v_3}$ & $d_{v_4}$ & $d_{v_5}$ &\\
          & $0.1$ & $0.5$  & $2.0$   &   $0.25$   & $0.5$          &  $0.2$         &   $0.1$       &   $0.1$       &   $0.1$        &  $0.2$    &  $0.5$     &  $1.0$    & $2.0$     & $5.0$    &\\\hline
\end{tabular}
\caption{Parameters in different nonlinear test models.}\label{Table2_Linear}
\end{center}
\end{table}

\begin{figure}[!h]
{\hspace*{-4cm}\includegraphics[width=20cm]{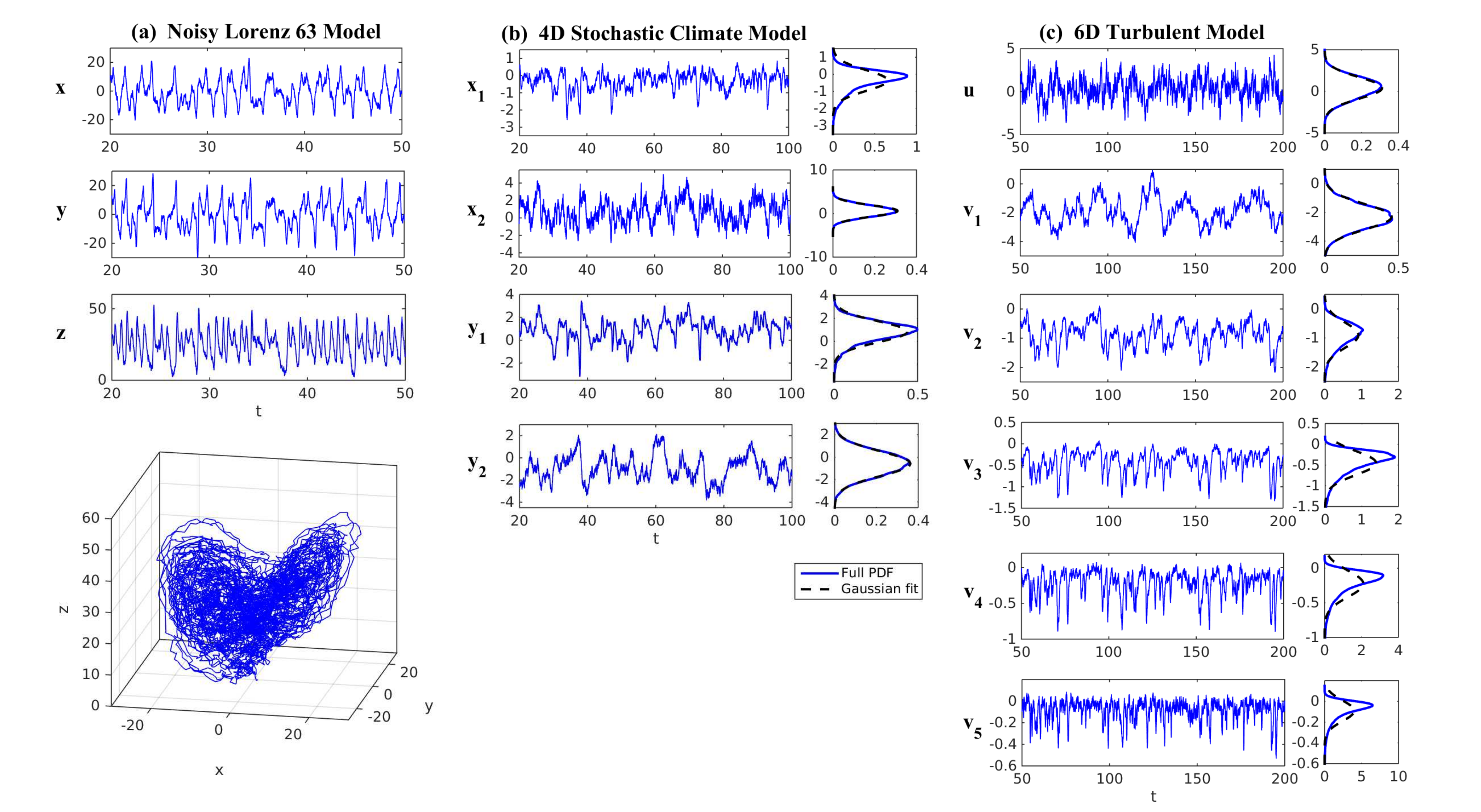}}
\caption{Trajectories and PDFs of different models. (a): Noisy Lorenz 63 model \eqref{Lorenz_63}. (b): 4D stochastic climate model \eqref{4D_Test_Model}. (c): 6D conceptual dynamical model for turbulence \eqref{TurbulentModel}.}\label{Models}
\end{figure}

\begin{figure}[!h]
{\hspace*{-1cm}\includegraphics[width=13cm]{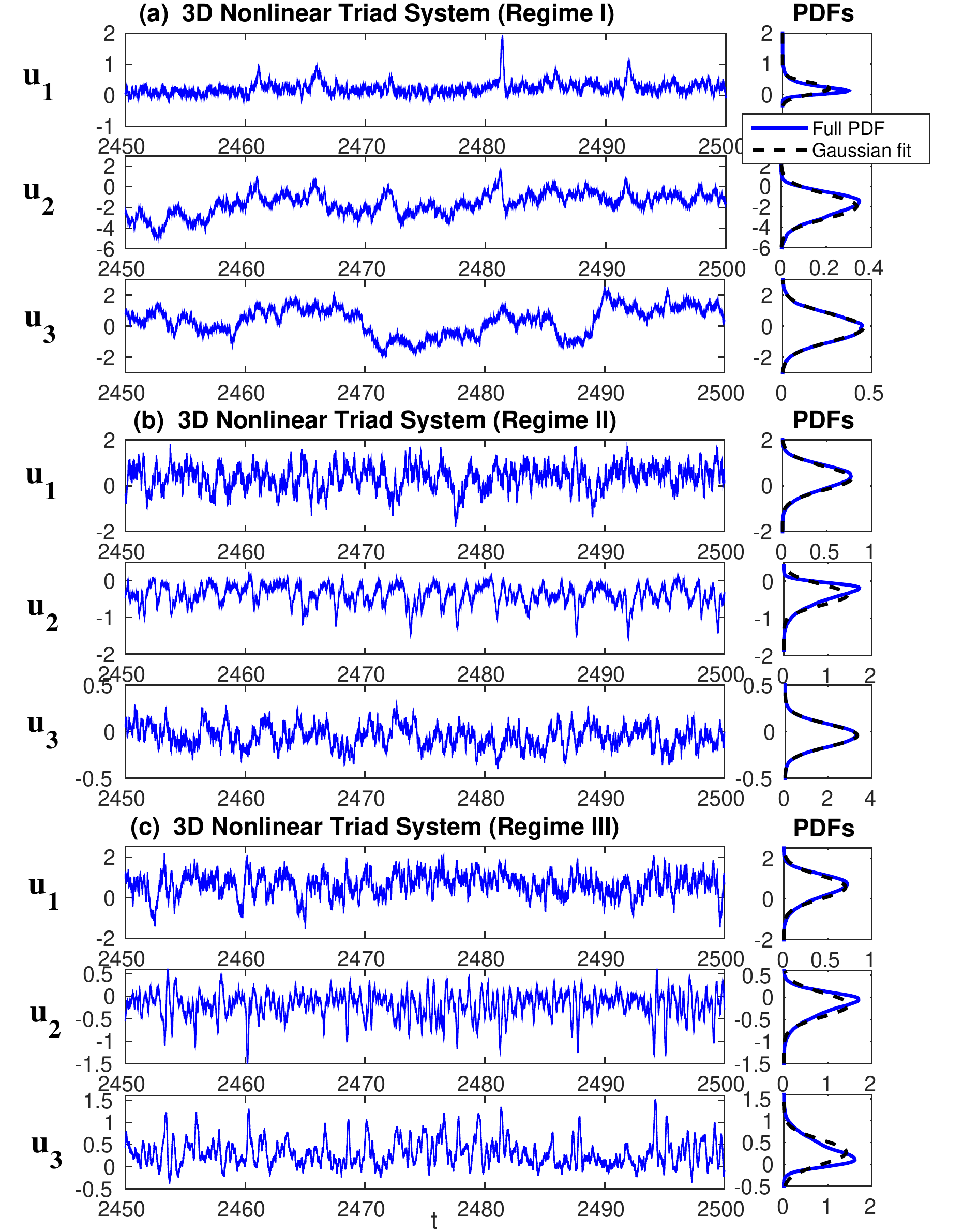}}
\caption{Trajectories and PDFs of the 3D nonlinear triad model \eqref{Linear3D} in three different dynamical regimes with parameters in Table \ref{Table_Parameter}. The PDF in each regime is computed based on one single long trajectory.}\label{Models2}
\end{figure}

\clearpage

\section{Efficient Statistically Accurate Algorithms}\label{Sec:Algorithm}
In this section, we develop efficient statistically accurate algorithms for solving the PDFs associated with the conditional Gaussian turbulent dynamical systems \eqref{Conditional_Gaussian_System}. Recall that $\mathbf{u}_\mathbf{I}\in R^{N_\mathbf{I}}$ and $\mathbf{u}_\mathbf{II}\in R^{N_\mathbf{II}}$. As in most turbulent dynamical systems, we assume the dimension $R^{N_\mathbf{I}}$ of the observed variables is low while that $R^{N_\mathbf{II}}$ of the unobserved variables can be high.

For the data in the algorithms, we generate $L$ independent trajectories in the complex stochastic dynamical systems \eqref{General_SDE}, where $L$ is a small number. In fact,
the only information that is required for these algorithms is $L$ independent trajectories of the observed variables, namely $\mathbf{u}^{1}_\mathbf{I}(s\leq t),\ldots,\mathbf{u}^{L}_\mathbf{I}(s\leq t)$. Practically, since $L$ is small, $\mathbf{u}^{1}_\mathbf{I}(s\leq t),\ldots,\mathbf{u}^{L}_\mathbf{I}(s\leq t)$ can be obtained by running a Monte Carlo simulation for the full system with $L$ samples, which is computationally affordable.
With these $L$ independent trajectories for the observed variables in hand, a hybrid strategy is developed. Here, a parametric method and a non-parametric method are used to deal with the unobserved and observed variables, respectively. Then a Gaussian mixture with block diagonal structure of each mixture component is adopted for solving the full joint PDF $p(\mathbf{u}_{\mathbf{I}}, \mathbf{u}_{\mathbf{II}})$. In the theoretical discussions below, the limit of $L$ going to infinity is taken for the purpose of  mathematical rigor. In the performance tests of the algorithms in Section \ref{Sec:Performance}, $L$ will always be order of $O(100)$. Detailed justifications of adopting such a small $L$ will also be included there.

\subsection{Parametric method for $p(\mathbf{u}_\mathbf{II})$}
First, a parametric method based on the closed form of the conditional Gaussian posterior statistics \eqref{CG_Result} is adopted for solving the PDF of the unobserved variables.
\begin{proposition}\label{Prop:Unobserved}
The marginal distribution of the unobserved variables $\mathbf{u}_{\mathbf{II}}$ at time $t$ is given by the average of the $L$ conditional Gaussian distributions,
\begin{equation*}
   p(\mathbf{u}_{\mathbf{II}}(t)) = \lim_{L\to\infty}\frac{1}{L}\sum_{i=1}^Lp(\mathbf{u}_{\mathbf{II}}(t)|\mathbf{u}^i_{\mathbf{I}}(s\leq t)).
\end{equation*}
\end{proposition}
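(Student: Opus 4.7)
The plan is to view the right-hand side as a Monte-Carlo estimator for an expectation and identify that expectation, via the law of total probability, with the marginal density on the left.

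First I would rewrite the marginal density by conditioning on the entire observed history $\mathbf{u}_{\mathbf{I}}(s\le t)$. Because the pair $(\mathbf{u}_{\mathbf{I}}(s\le t),\mathbf{u}_{\mathbf{II}}(t))$ has a well-defined joint law on path space $\times\,\mathbb{R}^{N_{\mathbf{II}}}$, and because \eqref{CG_PDF} ensures that a regular conditional density $p(\mathbf{u}_{\mathbf{II}}(t)\,|\,\mathbf{u}_{\mathbf{I}}(s\le t))$ exists (it is the Gaussian with mean $\bar{\mathbf{u}}_{\mathbf{II}}(t)$ and covariance $\mathbf{R}_{\mathbf{II}}(t)$ from \eqref{CG_Result}, both of which are measurable functionals of the path), the tower property gives
\begin{equation*}
p(\mathbf{u}_{\mathbf{II}}(t)) \;=\; \mathbb{E}\bigl[\,p(\mathbf{u}_{\mathbf{II}}(t)\,|\,\mathbf{u}_{\mathbf{I}}(s\le t))\,\bigr],
\end{equation*}
where the expectation is with respect to the law of the observed trajectory induced by \eqref{Conditional_Gaussian_System1}.

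Next I would apply the strong law of large numbers. The $L$ trajectories $\mathbf{u}^{1}_{\mathbf{I}}(s\le t),\ldots,\mathbf{u}^{L}_{\mathbf{I}}(s\le t)$ are, by construction, independent draws from the same path-space law. Fix a point $\mathbf{u}_{\mathbf{II}}\in\mathbb{R}^{N_{\mathbf{II}}}$. Then the real-valued random variables $X_i(\mathbf{u}_{\mathbf{II}}) := p(\mathbf{u}_{\mathbf{II}}\,|\,\mathbf{u}^{i}_{\mathbf{I}}(s\le t))$ are i.i.d. and nonnegative; their common mean is exactly $p(\mathbf{u}_{\mathbf{II}}(t))$ by the display above. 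Provided this mean is finite, the SLLN yields almost sure convergence of $L^{-1}\sum_{i=1}^{L}X_i(\mathbf{u}_{\mathbf{II}})$ to $p(\mathbf{u}_{\mathbf{II}}(t))$, which is the claim in the proposition interpreted pointwise.

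The main subtle point, and the one I would spend the most care on, is specifying the sense of convergence. As written, the proposition asserts convergence of densities, not of integrals against a test function; a clean statement is pointwise almost sure convergence for each fixed $\mathbf{u}_{\mathbf{II}}$, which follows immediately from SLLN. Upgrading to $L^1_{\mathbf{u}_{\mathbf{II}}}$ or uniform convergence would require an integrability or domination condition, for instance a uniform bound on the conditional covariances $\mathbf{R}_{\mathbf{II}}(t)$ away from degeneracy together with control on the conditional means; with such a bound one can apply Scheffé's lemma or a dominated-convergence argument to conclude convergence of the densities in $L^1$. I would flag this as the key technical caveat but not belabour it, since for the applications the pointwise/weak interpretation is what is actually used when one subsequently couples this marginal with the kernel-density estimate for $\mathbf{u}_{\mathbf{I}}$ to build the full joint Gaussian mixture.
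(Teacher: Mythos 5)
Your argument is correct and takes essentially the same route as the paper: both decompose the marginal of $\mathbf{u}_{\mathbf{II}}(t)$ by conditioning on the full observed path $\mathbf{u}_{\mathbf{I}}(s\leq t)$ (law of total probability) and then replace the integral against the path law by the empirical average over the $L$ independent trajectories. The paper phrases that second step informally by writing the empirical path measure as a limit of delta functions and inserting it into the integral, whereas you invoke the strong law of large numbers pointwise in $\mathbf{u}_{\mathbf{II}}$ and flag the mode-of-convergence issue explicitly --- a more careful rendering of the same idea, not a different proof.
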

\begin{proof}
The marginal distribution of $\mathbf{u}_{\mathbf{I}}$ at any fixed time $s$ has the following form
\begin{equation}\label{CG_approximation}
    p(\mathbf{u}_{\mathbf{I}}(s))= \lim_{L\to\infty} \frac{1}{L}\sum_{i=1}^L\delta\left(\mathbf{u}_{\mathbf{I}}(s)-\mathbf{u}^i_{\mathbf{I}}(s)\right).
\end{equation}
Therefore, the  distribution of $\mathbf{u}_{\mathbf{I}}(s\leq t)$ is also equal to the average of the $L$ independent trajectories
\begin{equation}\label{CG_approximation0}
    p(\mathbf{u}_{\mathbf{I}}(s\leq t))= \lim_{L\to\infty} \frac{1}{L}\sum_{i=1}^L\delta\left(\mathbf{u}_{\mathbf{I}}(s\leq t)-\mathbf{u}^i_{\mathbf{I}}(s\leq t)\right),
\end{equation}
According to the fundamental relationship between joint, marginal and conditional distributions, the marginal distribution of $\mathbf{u}_{\mathbf{II}}$ at time $t$ is given by
\begin{equation}\label{CG_approximation2}
\begin{split}
  p(\mathbf{u}_{\mathbf{II}}(t)) &= \int p\Big(\mathbf{u}_{\mathbf{I}}(s\leq t),\mathbf{u}_{\mathbf{II}}(t)\Big)\, d \mathbf{u}_{\mathbf{I}}(s\leq t)\\
   &= \int p\Big(\mathbf{u}_{\mathbf{I}}(s\leq t)\Big)\,p\Big(\mathbf{u}_{\mathbf{II}}(t)|\mathbf{u}_{\mathbf{I}}(s\leq t)\Big)\, d\mathbf{u}_{\mathbf{I}}(s\leq t).
\end{split}
\end{equation}
Inserting \eqref{CG_approximation0} into \eqref{CG_approximation2} yields
\begin{equation}\label{Gaussian_Ensemble}
    p(\mathbf{u}_{\mathbf{II}}(t)) =\lim_{L\to\infty}\frac{1}{L}\sum_{i=1}^Lp\Big(\mathbf{u}_{\mathbf{II}}(t)|\mathbf{u}^i_{\mathbf{I}}(s\leq t)\Big).
\end{equation}
\end{proof}

In \eqref{Gaussian_Ensemble}, given each observational trajectory $\mathbf{u}^i_{\mathbf{I}}(s\leq t)$, the corresponding conditional Gaussian distribution
  \begin{equation}\label{U_II_Mixture}
    p\big(\mathbf{u}_{\mathbf{II}}(t)|\mathbf{u}^i_{\mathbf{I}}(s\leq t)\big) :=p_i\big(\mathbf{u}_{\mathbf{II}}(t)\big)\sim \mathcal{N}(\mathbf{\bar{u}}^i_{\mathbf{II}}(t), \mathbf{R}^i_{\mathbf{II}}(t))
\end{equation}
is solved by the closed analytical formulae \eqref{CG_Result}. In addition, since the $L$ trajectories are independent with each other, these conditional distributions can be computed in a parallel way. Therefore, the algorithm for solving the marginal distribution $p(\mathbf{u}_{\mathbf{I}})$ is computationally efficient. Notably, as $L\to\infty$, \eqref{Gaussian_Ensemble} implies that this algorithm is consistent with solving the Fokker-Planck equation for the marginal PDF $p(\mathbf{u}_{\mathbf{II}})$.

\subsection{Non-parametric method for $p(\mathbf{u}_\mathbf{I})$}
Next, a judicious non-parametric kernel density estimation method is used for solving the PDF of the observed variables.
\begin{proposition}\label{Prop:Observed}
The PDF of $\mathbf{u}_{\mathbf{I}}$ at time $t$ is approximated by a Gaussian kernel density estimation
\begin{equation}\label{Kernel_Method0}
   p\big(\mathbf{u}_{\mathbf{I}}(t)\big)= \lim_{L\to\infty}\frac{1}{L}\sum_{i=1}^L K_\mathbf{H}\Big(\mathbf{u}_{\mathbf{I}}(t)-\mathbf{u}^i_{\mathbf{I}}(t)\Big),
\end{equation}
where $\mathbf{H}$ is the bandwidth, and $K(\cdot)$ is a Gaussian kernel
\begin{equation}\label{U_I_Mixture}
  K_\mathbf{H}\Big(\mathbf{u}_{\mathbf{I}}(t)-\mathbf{u}^i_{\mathbf{I}}(t)\Big):=p_i\big(\mathbf{u}_{\mathbf{I}}(t)\big)\sim \mathcal{N}\Big(\mathbf{u}^i_{\mathbf{I}}(t), \mathbf{H}(t)\Big).
\end{equation}
\end{proposition}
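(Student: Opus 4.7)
The plan is to reduce the claim to the classical Parzen--Rosenblatt consistency theorem for kernel density estimation, specialized to independent samples drawn from the marginal law of $\mathbf{u}_{\mathbf{I}}(t)$. Since the statement writes an equality with a limit in $L$, it is implicit that the bandwidth $\mathbf{H}$ depends on $L$ and shrinks at a suitable rate; the proof plan must make this schedule explicit.

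First, I would repeat the opening move of Proposition \ref{Prop:Unobserved}. The $L$ trajectories $\mathbf{u}^i_{\mathbf{I}}(s\le t)$ are i.i.d.\ samples from the law of $\mathbf{u}_{\mathbf{I}}(s\le t)$, so by the strong law of large numbers applied to test functions, the empirical measure converges weakly:
\[
p(\mathbf{u}_{\mathbf{I}}(t)) = \lim_{L\to\infty}\frac{1}{L}\sum_{i=1}^L \delta\bigl(\mathbf{u}_{\mathbf{I}}(t) - \mathbf{u}_{\mathbf{I}}^i(t)\bigr)
\]
in the sense of distributions. This step is essentially identical to \eqref{CG_approximation} and sets up the skeleton for the non-parametric estimator.

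Second, I would replace the Dirac delta by the Gaussian kernel $K_{\mathbf{H}}$ and argue convergence. The Gaussian family $\{K_{\mathbf{H}}\}_{\mathbf{H}\to 0}$ is a standard approximate identity, so $K_{\mathbf{H}} \to \delta$ as $\mathbf{H}\to 0$. One therefore needs to show that the double limit
\[
\frac{1}{L}\sum_{i=1}^L K_{\mathbf{H}(L)}\bigl(\mathbf{u}_{\mathbf{I}}(t) - \mathbf{u}_{\mathbf{I}}^i(t)\bigr) \longrightarrow p(\mathbf{u}_{\mathbf{I}}(t))
\]
holds along a coupled schedule $\mathbf{H}(L) \to 0$ with $L \sqrt{\det \mathbf{H}(L)} \to \infty$. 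The bias term $\mathbb{E}[K_{\mathbf{H}}\ast p] - p$ goes to zero by smoothness of the Fokker--Planck marginal $p(\mathbf{u}_{\mathbf{I}}(t))$ (guaranteed by hypoellipticity of \eqref{Fokker_Planck_Equation} under mild non-degeneracy of $\boldsymbol{\Sigma}_{\mathbf{I}}$), while the variance term is controlled by the standard $(L \sqrt{\det \mathbf{H}})^{-1}$ scaling; together these give pointwise (and, under slightly stronger assumptions, almost sure) consistency, which is exactly the Parzen--Rosenblatt theorem.

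The main obstacle is the bandwidth choice $\mathbf{H}(L)$: the error bound degrades exponentially in $\dim(\mathbf{u}_{\mathbf{I}})$, so an honest statement requires the low dimensionality of $\mathbf{u}_{\mathbf{I}}$ that is assumed throughout the paper. Practically this forces us to cite a data-driven rule (Silverman or cross-validation) for $\mathbf{H}$ rather than a universal schedule, and this is the point where the proof is more of a consistency statement than a closed-form identity. The remaining routine details---smoothness of the marginal, integrability of the Gaussian kernel, and the bias--variance decomposition---are standard and would be stated without calculation.
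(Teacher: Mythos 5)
Your proposal is correct and follows essentially the same route as the paper, which offers no formal proof for this proposition but simply remarks that as $L\to\infty$ the bandwidth shrinks to zero and the kernel estimator reduces to the Monte Carlo empirical measure, hence is consistent with the Fokker--Planck marginal. Your version supplies the standard Parzen--Rosenblatt details (the coupled schedule $\mathbf{H}(L)\to 0$ with $L\sqrt{\det\mathbf{H}(L)}\to\infty$, the bias--variance split, and the smoothness of the marginal) that the paper leaves implicit, and your observation about the low dimensionality of $\mathbf{u}_{\mathbf{I}}$ matches the paper's own caveat that kernel estimation is only used in the low-dimensional subspace.
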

In the limit $L\to \infty$, the kernel density method is simply the Monte Carlo simulation, where the bandwidth shrinks to zero. Therefore, the solution of the kernel density estimation \eqref{Kernel_Method0} is consistent with that of solving the Fokker-Planck equation for the marginal PDF $p(\mathbf{u}_{\mathbf{I}})$.

The kernel density estimation algorithm here involves a ``solve-the-equation plug-in'' approach for optimizing the bandwidth, the idea of which was originally proposed in  \citep{botev2010kernel}. A brief summary of the kernel estimation is shown in \ref{Appendix:B}. Note that the PDFs associated with turbulent dynamical systems are usually highly non-Gaussian. This indicates the failure of the simplest rule-of-thumb bandwidth estimator, which assumes the underlying density being estimated is Gaussian. The solve-the-equation approach does not impose any requirement for the underlying PDF. Therefore, it works for the non-Gaussian cases and the computational cost comes from numerically solving a scalar high order algebraic equation for the optimal bandwidth in order to minimize the asymptotic  mean integrated squared error (AMISE) in the estimator. Finally, kernel density estimations work only for a low dimension space, which is the assumption of $N_{\mathbf{I}}$ of the systems here.

\subsection{Hybrid algorithm for the joint PDF $p(\mathbf{u}_\mathbf{I}, \mathbf{u}_\mathbf{II})$}
With the algorithms for the marginal PDFs of both $\mathbf{u}_{\mathbf{I}}$ and $\mathbf{u}_{\mathbf{II}}$ in hand, a hybrid method is developed to solve the joint PDF.
\begin{proposition}\label{Prop:Both}
The joint PDF of $\mathbf{u}_{\mathbf{I}}$ and $\mathbf{u}_{\mathbf{II}}$ at time $t$ is solved using a Gaussian mixture,
\begin{equation}\label{Joint0}
    p(\mathbf{u}_{\mathbf{I}}(t),\mathbf{u}_{\mathbf{II}}(t)) = \lim_{L\to\infty} \frac{1}{L}\sum_{i=1}^L \Big(K_\mathbf{H}(\mathbf{u}_{\mathbf{I}}(t)-\mathbf{u}_{\mathbf{I}}^i(t))\cdot p(\mathbf{u}_{\mathbf{II}}(t)|\mathbf{u}^i_{\mathbf{I}}(s\leq t))\Big),
\end{equation}
where the two terms in the bracket on the right hand side are both Gaussian and are given by  \eqref{U_II_Mixture} and \eqref{U_I_Mixture}, respectively.
\end{proposition}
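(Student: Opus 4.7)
My plan is to mirror the strategy used in the proof of Proposition \ref{Prop:Unobserved}, lifting the marginalization argument from the single unobserved variable to the joint PDF. Concretely, I would start from the trajectory-level marginalization identity
$$p(\mathbf{u}_{\mathbf{I}}(t),\mathbf{u}_{\mathbf{II}}(t)) = \int p\big(\mathbf{u}_{\mathbf{I}}(s\leq t),\mathbf{u}_{\mathbf{II}}(t)\big)\, d\mathbf{u}_{\mathbf{I}}(s<t),$$
factor the integrand via Bayes as
$$p\big(\mathbf{u}_{\mathbf{I}}(s\leq t),\mathbf{u}_{\mathbf{II}}(t)\big) = p\big(\mathbf{u}_{\mathbf{I}}(s\leq t)\big)\, p\big(\mathbf{u}_{\mathbf{II}}(t)\mid\mathbf{u}_{\mathbf{I}}(s\leq t)\big),$$
and then insert the trajectory-space empirical representation \eqref{CG_approximation0} for $p(\mathbf{u}_{\mathbf{I}}(s\leq t))$, exactly as was done for the unobserved marginal.

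The $i$-th Dirac mass in \eqref{CG_approximation0} collapses the path integral by fixing $\mathbf{u}_{\mathbf{I}}(s<t)$ to the recorded sample trajectory $\mathbf{u}_{\mathbf{I}}^i(s<t)$, leaving a point mass $\delta(\mathbf{u}_{\mathbf{I}}(t)-\mathbf{u}_{\mathbf{I}}^i(t))$ at the current time multiplied by the conditional Gaussian $p(\mathbf{u}_{\mathbf{II}}(t)\mid \mathbf{u}_{\mathbf{I}}^i(s\leq t))$ supplied by \eqref{CG_Result}--\eqref{U_II_Mixture}. This yields the intermediate identity
$$p(\mathbf{u}_{\mathbf{I}}(t),\mathbf{u}_{\mathbf{II}}(t)) = \lim_{L\to\infty}\frac{1}{L}\sum_{i=1}^L \delta\big(\mathbf{u}_{\mathbf{I}}(t)-\mathbf{u}_{\mathbf{I}}^i(t)\big)\, p\big(\mathbf{u}_{\mathbf{II}}(t)\mid\mathbf{u}_{\mathbf{I}}^i(s\leq t)\big),$$
which is the raw Monte-Carlo form of the joint PDF built from independent trajectories. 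The final step is to upgrade the point mass on $\mathbf{u}_{\mathbf{I}}^i(t)$ to the Gaussian kernel $K_\mathbf{H}$ introduced in Proposition \ref{Prop:Observed}. This is legitimate because for any consistent bandwidth choice $\mathbf{H}=\mathbf{H}(L)$ with $\mathbf{H}\to 0$ as $L\to\infty$, one has $K_\mathbf{H}(\mathbf{u}_{\mathbf{I}}(t)-\mathbf{u}_{\mathbf{I}}^i(t))\to\delta(\mathbf{u}_{\mathbf{I}}(t)-\mathbf{u}_{\mathbf{I}}^i(t))$ weakly, so \eqref{Joint0} matches the expression above in the limit.

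The main obstacle I anticipate is the bookkeeping over path space rather than state space: the conditional density \eqref{CG_Result} is a functional of the entire history $\mathbf{u}_{\mathbf{I}}(s\leq t)$, so the correct marginalization must use the trajectory-valued empirical measure \eqref{CG_approximation0} rather than the one-time-slice measure \eqref{CG_approximation}. One must verify that the Bayes factorization and the subsequent integration are well-defined on this infinite-dimensional path space, which amounts to invoking the filtering formalism of \citep{liptser2001statistics} that already underlies \eqref{CG_PDF}. Once this path-space marginalization is justified, the remainder is a direct reuse of Propositions \ref{Prop:Unobserved} and \ref{Prop:Observed}, combined with the standard weak convergence of Gaussian kernels to Dirac masses.
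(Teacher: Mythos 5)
Your proposal is correct and follows essentially the same route as the paper's proof: marginalize the joint density over the path $\mathbf{u}_{\mathbf{I}}(s\leq t)$, factor via the conditional Gaussian structure, insert the trajectory-space empirical measure \eqref{CG_approximation0} so that each Dirac mass on a sample path leaves $\delta(\mathbf{u}_{\mathbf{I}}(t)-\mathbf{u}_{\mathbf{I}}^i(t))\,p(\mathbf{u}_{\mathbf{II}}(t)\mid\mathbf{u}_{\mathbf{I}}^i(s\leq t))$, and finally replace the endpoint delta by the Gaussian kernel $K_{\mathbf{H}}$, which converges back to the delta as $\mathbf{H}\to 0$ with $L\to\infty$. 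The only cosmetic difference is that you integrate over $\mathbf{u}_{\mathbf{I}}(s<t)$ keeping the endpoint explicit, whereas the paper integrates over the full path and extracts the endpoint via $p(\mathbf{u}_{\mathbf{I}}(t)\mid\mathbf{u}_{\mathbf{I}}(s\leq t))=\delta(\mathbf{u}_{\mathbf{I}}(t)-\mathbf{u}_{\mathbf{I}}^i(t))$; these are equivalent.
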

\begin{proof}
First, the joint distribution of $\mathbf{u}_{\mathbf{I}}$ and $\mathbf{u}_{\mathbf{II}}$ at time $t$ can be written as
\begin{equation}\label{Joint_derivation1}
  p\Big(\mathbf{u}_{\mathbf{I}}(t),\mathbf{u}_{\mathbf{II}}(t)\Big) = \int p\Big(\mathbf{u}_{\mathbf{II}}(t), \mathbf{u}_{\mathbf{I}}(t)\,|\,\mathbf{u}_{\mathbf{I}}(s\leq t)\Big)\, p\Big(\mathbf{u}_{\mathbf{I}}(s\leq t)\Big) \,d\mathbf{u}_{\mathbf{I}}(s\leq t)
\end{equation}
Here, according to the basic probability relationship $p(x,y|z) = p(x|z)\,p(x|y,z)$, we have the following
\begin{equation}\label{Joint_derivation2}
   p\Big(\mathbf{u}_{\mathbf{II}}(t), \mathbf{u}_{\mathbf{I}}(t)\,|\,\mathbf{u}_{\mathbf{I}}(s\leq t)\Big) =  p\Big(\mathbf{u}_{\mathbf{II}}(t) \,|\,\mathbf{u}_{\mathbf{I}}(s\leq t)\Big)\, p\Big(\mathbf{u}_{\mathbf{I}}(t)\,|\,\mathbf{u}_{\mathbf{I}}(s\leq t)\Big).
\end{equation}
The second term on the right hand side of \eqref{Joint_derivation2} is actually a delta function peaking at the conditioned value of $\mathbf{u}_\mathbf{I}$ at time $t$. In fact, if we replace the condition inside the PDF $\mathbf{u}_{\mathbf{I}}(s\leq t)$ by $\mathbf{u}^i_{\mathbf{I}}(s\leq t)$, we have
\begin{equation}\label{Joint_derivation3}
  p\Big(\mathbf{u}_{\mathbf{I}}(t)\,|\,\mathbf{u}^i_{\mathbf{I}}(s\leq t)\Big) = \delta(\mathbf{u}_{\mathbf{I}}(t)-\mathbf{u}^i_{\mathbf{I}}(t))
\end{equation}
In addition, according to \eqref{CG_approximation0}
\begin{equation}\label{Joint_derivation4}
    p(\mathbf{u}_{\mathbf{I}}(s\leq t))= \lim_{L\to\infty} \frac{1}{L}\sum_{i=1}^L\delta\left(\mathbf{u}_{\mathbf{I}}(s\leq t)-\mathbf{u}^i_{\mathbf{I}}(s\leq t)\right).
\end{equation}
Therefore, inserting \eqref{Joint_derivation2}--\eqref{Joint_derivation4} into \eqref{Joint_derivation1} yields
\begin{equation}\label{Joint_derivation5}
\begin{split}
  p\Big(\mathbf{u}_{\mathbf{I}}(t),\mathbf{u}_{\mathbf{II}}(t)\Big) &= \int p\Big(\mathbf{u}_{\mathbf{II}}(t), \mathbf{u}_{\mathbf{I}}(t)\,|\,\mathbf{u}_{\mathbf{I}}(s\leq t)\Big)\, p\Big(\mathbf{u}_{\mathbf{I}}(s\leq t)\Big) \,d\mathbf{u}_{\mathbf{I}}(s\leq t)\\
  &=\lim_{L\to\infty} \frac{1}{L}\sum_{i=1}^L\delta\Big(\mathbf{u}_{\mathbf{I}}(t)-\mathbf{u}^i_{\mathbf{I}}(t)\Big) \,p\Big(\mathbf{u}_{\mathbf{II}}(t) \,|\,\mathbf{u}^i_{\mathbf{I}}(s\leq t)\Big)
\end{split}
\end{equation}
Next, we make use of the kernel approximation $K_\mathbf{H}(\mathbf{u}_{\mathbf{I}}(t)-\mathbf{u}_{\mathbf{I}}^i(t))$ for $\delta\Big(\mathbf{u}_{\mathbf{I}}(t)-\mathbf{u}^i_{\mathbf{I}}(t)\Big)$. Note that in the limit $L\to\infty$ the bandwidth goes to zero and the kernel approximation converges to $\delta\Big(\mathbf{u}_{\mathbf{I}}(t)-\mathbf{u}^i_{\mathbf{I}}(t)\Big)$, which leads to \eqref{Joint_derivation5} that is consistent with solving the Fokker-Planck equation for the joint PDF.

\end{proof}

Since for each $i$ both $K_\mathbf{H}(\mathbf{u}_{\mathbf{I}}(t)-\mathbf{u}_{\mathbf{I}}^i(t))$ and $p(\mathbf{u}_{\mathbf{II}}(t)|\mathbf{u}^i_{\mathbf{I}}(s\leq t))$ are Gaussian distributions of $\mathbf{u}_\mathbf{I}(t)$ and $\mathbf{u}_\mathbf{II}(t)$, respectively, their combination is also a Gaussian distribution with mean and covariance given as follows:
\begin{equation}\label{MeanCov}
  \mbox{mean} = \left(
                  \begin{array}{c}
                    \mathbf{u}_{\mathbf{I}}^i(t)   \\
                    \mathbf{\bar{u}}^i_{\mathbf{II}}(t) \\
                  \end{array}
                \right).\qquad\mbox{and}\qquad
  \mbox{cov} = \left(
                  \begin{array}{cc}
                    \mathbf{H}(t) &   \\
                      & \mathbf{R}^i_{\mathbf{II}}(t) \\
                  \end{array}
                \right).
\end{equation}
Therefore, the joint distribution $p(\mathbf{u}_{\mathbf{I}}(t),\mathbf{u}_{\mathbf{II}}(t)) $ is a Gaussian mixture. Note that the covariance of each Gaussian component in \eqref{MeanCov} is a block diagonal matrix, which contains no explicit cross-covariance between $\mathbf{u}_\mathbf{I}(t)$ and $\mathbf{u}_\mathbf{II}(t)$. Nevertheless, this does not mean that the cross-correlation between  $\mathbf{u}_\mathbf{I}(t)$ and $\mathbf{u}_\mathbf{II}(t)$ is ignored in this algorithm. Each Gaussian component of $\mathbf{u}_\mathbf{II}(t)$ is solved conditioned on one historical trajectory of $\mathbf{u}_\mathbf{I}(t\leq s)$ and the information of end point of the same observational trajectory $\mathbf{u}_\mathbf{I}(t)$ is used to form the matrix in \eqref{MeanCov}. In fact, it is easy to show that the Gaussian mixture with the block diagonal covariance in \eqref{MeanCov} will converge to the same true PDF as that with a full covariance matrix by making use of the property that the bandwidth $|\mathbf{H}|\to0$ as $L\to\infty$. Details are shown in  \ref{Appendix:C}.

The theoretical justification of the above propositions is shown in the limit of $L\to\infty$. In practice, as long as the dimension $N_{\mathbf{I}}$ of the observed variables $\mathbf{u}_{\mathbf{I}}$ is low, a small number $L$ of the mixture components is sufficient to recover highly non-Gaussian joint PDFs with high accuracy. This is because each conditional Gaussian distribution covers a large portion of $p(\mathbf{u}_{\mathbf{II}}(t))$, which greatly reduces the number of ensembles and ameliorates the curse of dimensionality. It allows the algorithms to be applied to high dimensional systems with $N_\mathbf{II}\gg1$. More detailed discussions will be included in Section \ref{Sec:Performance}, for example Figure \ref{L63_MeanVar} and \ref{3D_MeanVar}. Note that the conditional distributions are obtained via the closed analytical formulae \eqref{CG_Result}, which are  computationally efficient as well.

What remains is to choose the initial values of each conditional Gaussian distribution. Assume that the initial joint distribution is completely known. Given the number $L$, samples $\mathbf{u}^i_{\mathbf{II}}(0), i=1,\ldots,L$ are drawn from $p(\mathbf{u}_{\mathbf{II}}(0))$. In many practical issues, the initial state is deterministic or contains only a small uncertainty. Therefore, each initial conditional Gaussian distribution in \eqref{CG_Result} can be set as $\mathcal{N}(\mathbf{u}^i_\mathbf{II}(0),\epsilon)$, where $\epsilon$ represents a small initial covariance. Nevertheless, in some applications, the initial distribution $p(\mathbf{u}_{\mathbf{II}}(0))$ may contain a large uncertainty. Thus, the following method is incorporated into the algorithm to form the initial conditional Gaussian distribution of each mixture component in \eqref{CG_Result}. Here, the sampled point $\mathbf{u}^i_\mathbf{II}(0)$ is again adopted as the conditional mean of each mixture component and the conditional covariance is computed by the kernel density estimation. Yet, instead of using a direct kernel density estimation for this high dimensional PDF $p(\mathbf{u}_\mathbf{II}(0))$ which is impractical, a diagonal initial covariance matrix is used here, where the variance of each dimension of $\mathbf{u}_{\mathbf{II}}$ is calculated by a 1D kernel density estimation. Since the cross-covariance is already partially reflected by the distribution of the sample points, the simplification with a diagonal covariance is a reasonable choice. In addition, it is easy to show from \eqref{CG_Result} that the conditional covariance converges exponentially fast and therefore the initial error in the conditional covariance will vanish very quickly \citep{chen2017rigorous}. Performance tests in Section \ref{Sec:Performance} will show the recovered PDFs at long, moderate and short transient phases using the strategy discussed above.

Finally, for the convenience of the readers, a pseudo code of the efficient statistically accurate algorithm developed in Proposition \eqref{Prop:Unobserved}--\eqref{Prop:Both} is provided.

\begin{algorithm}
  \caption{}\label{euclid}
  \begin{algorithmic}[0]
    \Procedure{Solving $p(\mathbf{u}_\mathbf{I},\mathbf{u}_\mathbf{II})$ at time $t$}{}
      \State 1. Set the number of samples $L$
      \State 2. Initialization:
        \State\qquad Sample $\big(\mathbf{u}^i_\mathbf{I}(0), \mathbf{u}^i_\mathbf{II}(0)\big)$ for $i =1,\ldots L$
        \State\qquad Set the initial distribution  $(\bar{\mathbf{u}}^i_\mathbf{II}(0), \boldsymbol{R}^i_\mathbf{II}(0))$ for $i =1,\ldots L$
      \State 3. Run Monte Carlo simulation for the full system \eqref{Conditional_Gaussian_System} with $L$ particles:
        \State\qquad Collect the trajectories of the observed variables $\mathbf{u}^i_\mathbf{I}(s\leq t), i=1\ldots,L$
      \State 4. Solve the $L$ conditional Gaussian distributions for $\mathbf{u}_{\mathbf{II}}$ at time $t$
      \For{$i=1,\ldots,L$ }
        \State Run the closed form \eqref{CG_Result} to reach  $p_i(\mathbf{u}_\mathbf{II})\sim\mathcal{N}(\bar{\mathbf{u}}^i_\mathbf{II}, \boldsymbol{R}^i_\mathbf{II})$ at time $t$
        \State [See Proposition \ref{Prop:Unobserved} and Equation \eqref{U_II_Mixture}]
      \EndFor
      \State 5. Solve the $L$ Gaussian distributions for $\mathbf{u}_\mathbf{I}$ at time $t$
      \State \qquad Run kernel method to reach $p_i(\mathbf{u}_\mathbf{I})= K_\mathbf{H}(\mathbf{u}_{\mathbf{I}}(t)-\mathbf{u}_{\mathbf{I}}^i(t))\sim\mathcal{N}({\mathbf{u}}^i_\mathbf{I}, \mathbf{H})$
      \State\qquad [See Proposition \ref{Prop:Observed} and Equation \eqref{U_I_Mixture}]
      \State 6. Form the joint distribution of each Gaussian mixture component
      \For{$i=1,\ldots,L$ }
        \State Use each $p_i(\mathbf{u}_\mathbf{II})$ in 4 and $p_i(\mathbf{u}_\mathbf{I})$ in 5 to form the joint distribution
        \State with mean and covariance given in \eqref{MeanCov}
      \EndFor
      \State 7. Combine the $L$ joint Gaussian distribution to form $p(\mathbf{u}_{\mathbf{I}},\mathbf{u}_{\mathbf{II}})$ at $t$
    \EndProcedure
    \State ${}^*$ To continue solving the PDF $p(\mathbf{u}_\mathbf{I},\mathbf{u}_\mathbf{II})$ at any time $t^\prime>t$, repeat Step 3 to 7 from $t$ to $t^\prime$ with initializations given by the results at time $t$.
    \State ${}^{**}$ The for loop in Step 4 and 6 can be implemented in a parallel way since the manipulations of these $L$ components are independent with each other.
  \end{algorithmic}
\end{algorithm}

\clearpage

\section{Performance tests with highly non-Gaussian features}\label{Sec:Performance}

The performance tests of the efficient statistically accurate algorithms developed in Section \ref{Sec:Algorithm} are illustrated in this section, where the four test models were described in detail in Section \ref{Sec:Systems}.

The natural way to quantify the error in the recovered PDF related to the truth is through an information measure, namely the relative entropy (or Kullback-Leibler divergence) \citep{majda2010quantifying, majda2006nonlinear, majda2011link, branicki2014non, kullback1951information}. The relative entropy is defined as
\begin{equation}\label{Relative_Entropy}
  \mathcal{P}(p(\mathbf{u}),p^M(\mathbf{u})) = \int p(\mathbf{u})\ln\frac{p(\mathbf{u})}{p^M(\mathbf{u})}d\mathbf{u},
\end{equation}
where $p(\mathbf{u})$ is the true PDF and $p^M(\mathbf{u})$ is the recovered one from the efficient statistically accurate algorithms. This asymmetric functional on probability densities $\mathcal{P}(p,p^M)\geq0$ measures lack of information in $p^M$ compared with $p$ and has many attractive features. First, $\mathcal{P}(p,p^M)\geq0$  with equality if and only if $p=p^M$. Secondly, $\mathcal{P}(p,p^M)$ is invariant under general nonlinear changes of variables. Notably, the relative entropy is a good indicator of quantifying the difference in the tails of the two PDFs, which is particularly crucial in the turbulent dynamical systems with intermittency and extreme events. On the other hand, the traditional ways of quantifying the errors, such as the relative error $\|p-p^M\|/\|p\|$, usually underestimate the lack of information in the PDF tails.

In the following performance tests, the number of the observational trajectories adopted in the efficient statistically accurate algorithm is $L=500$ unless stated explicitly otherwise. The full joint PDFs are recovered in all the tests using the algorithm but only 1D and 2D marginal PDFs are shown for the purpose of illustration. The true PDFs in the following tests are formed by Monte Carlo simulations with a huge number of particles $L_{MC} = 150,000$ in order to capture the fat tails in the non-Gaussian distributions.

\subsection{The noisy L-63 model (Equation \eqref{Lorenz_63})}.

Here, the 1D variable $x$ is treated as the observed variable $\mathbf{u}_{\mathbf{I}}$ and the 2D variables $(y,z)^T$ are the unobserved ones $\mathbf{u}_{\mathbf{II}}$. The initial distribution of this test is a multivariate Gaussian distribution with zero mean and unit variance in each direction. The time evolutions of the 1D marginal statistics are shown in Panel (a) of Figure \ref{L63_all}, where the maximum marginal variance and the minimum marginal kurtosis (kurtosis$<2$) of all the three variables occur at a transient phase $t=0.33$. Panel (b) compares the recovered 1D and 2D PDFs with the truth at this transition phase. With only $L=500$, the recovered 1D marginal PDFs succeed in capturing the bimodal characteristics of all the three marginal distributions and the recovered 2D PDFs almost perfectly match the truth which involve highly non-Gaussian features. In Panel (c), the truth and the recovered PDFs are compared at the statistical equilibrium phase $t=1.5$. In addition to the significant skill in recovering the nearly Gaussian 1D marginal PDFs, the algorithm provides an accurate estimation of the non-Gaussian 2D PDF $p(z,x)$ as well.

The skill in the recovered PDFs as a function of $L$ is reported in Figure \ref{L63_ModelError}. Panel (a) illustrates the recovered 2D PDFs with different $L$. The recovered PDFs are already qualitatively similar to the truth with only $L=50$. When $L$ is increased to $100$, the error in the recovered PDFs becomes insignificant. Panel (b) shows the lack of information in the recovered 1D and 2D PDFs related to the truth via the relative entropy \eqref{Relative_Entropy}. The lack of information is an exponential decaying function of $L$. With $L=100$, the lack of information in all the recovered 1D PDFs is below $0.05$ and that in all the recovered 2D ones is below $0.2$.

Figure \ref{L63_MeanVar} shows the posterior mean and posterior variance of $y$ and $z$ associated with each of the $L=500$ mixture components for the tests in Panels (b) and (c) of Figure \ref{L63_all}. The true PDFs and the $L=500$ Monte Carlo sample points of $y$ and $z$ generated in Step 3 of the algorithm are also shown for comparison. Note that the Monte Carlo sample points of $y$ and $z$ are actually never used in the algorithm. They are simply the byproducts of generating the sample trajectories of $x$. Figure \ref{L63_MeanVar} conveys the following messages. First, although $L=500$ sample points from Monte Carlo simulations are able to indicate a rough profile of the 2D PDF, they are still too sparse to provide an accurate estimation. Secondly, there is a compelling difference between the locations of the posterior mean and the Monte Carlo sample points, especially at the highly non-Gaussian transient phase $t=0.33$. In fact, the locations of the posterior mean are solved in an optimal way based on the Bayesian inference and therefore they contain extra information beyond the randomly scattered Monte Carlo sample points. In addition, the posterior variance of different components has distinct values and is significantly larger than zero, which implies that each component is able to cover a large portion of the PDF. These optimized Gaussian distributions with a large covariance guarantees that a small number $L$ is sufficient to recover the PDF with high accuracy. Such a large covariance in each mixture component is particularly important for dealing with the turbulent systems when the dimension $N_{\mathbf{II}}$ is large. Finally, unlike the kernel methods, the posterior variance of different components is independent with each other and does not depend on  $L$ either.  All these properties in the posterior distribution provide evidences that the algorithms developed here are able to greatly ameliorate the curse of dimensionality and work for systems with $N_{\mathbf{II}}\gg1$.

\begin{figure}[!h]
{\hspace*{-3cm}\includegraphics[width=18cm]{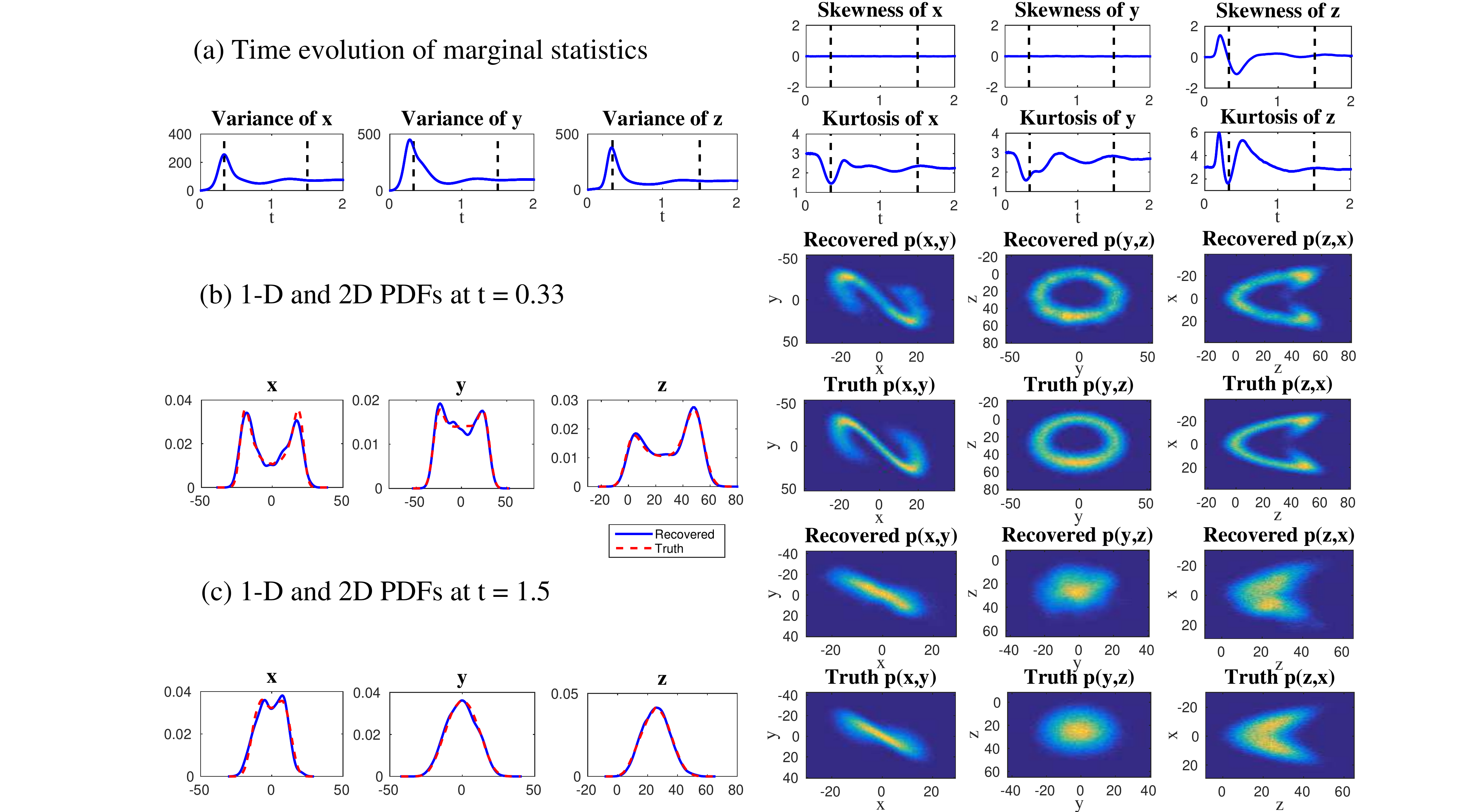}}
\caption{\textbf{L-63 model}. (a): 1D Marginal variance, skewness and kurtosis of each variable.  (b): 1D and 2D PDFs at a transition phase $t=0.33$.  (c): 1D and 2D PDFs at the statistical equilibrium phase $t=1.5$. In (b) and (c), $L=500$.}\label{L63_all}
\end{figure}

\begin{figure}[!h]
{\hspace*{-3cm}\includegraphics[width=18cm]{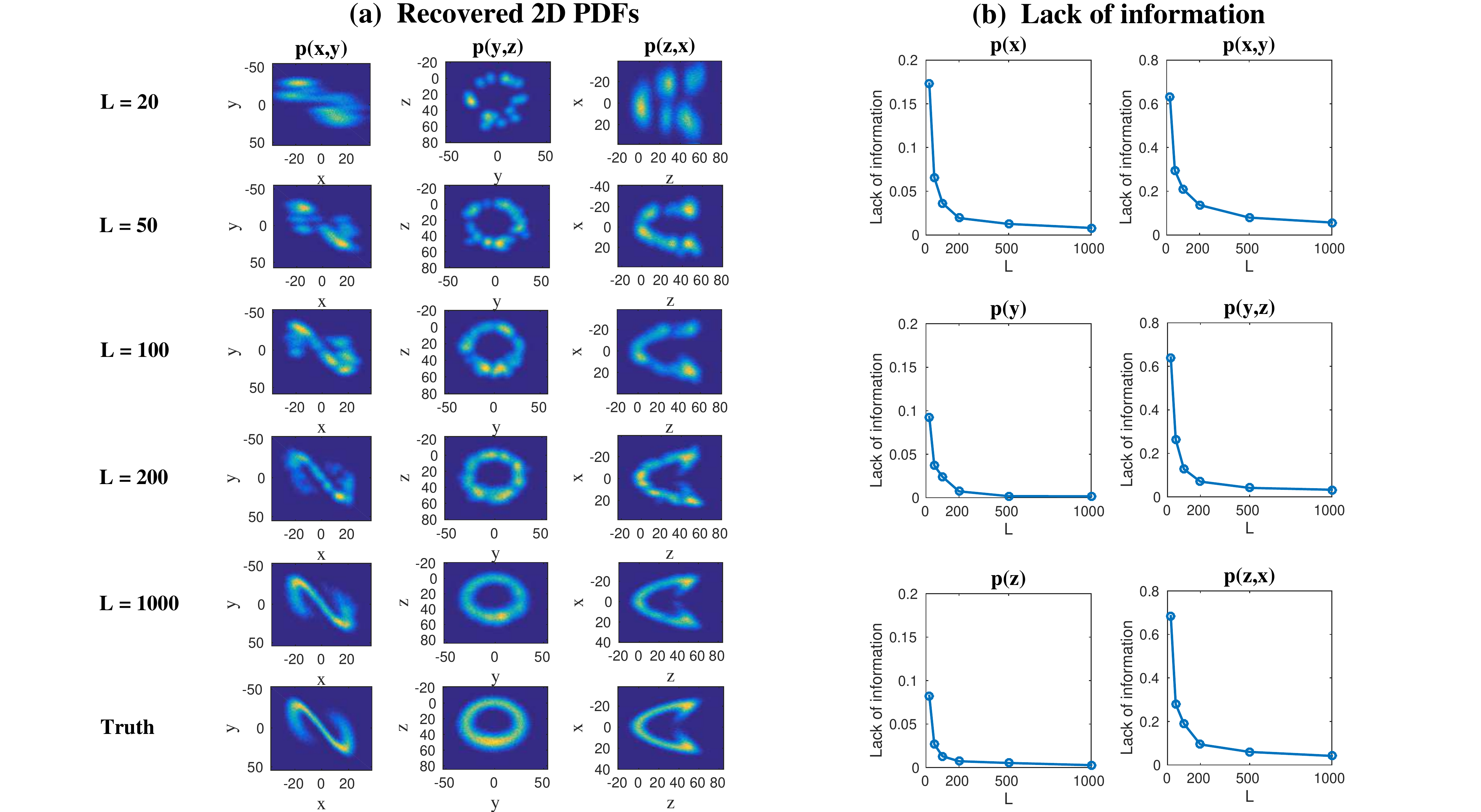}}
\caption{\textbf{L-63 model}. (a): Recovered 2D PDFs at $t=0.33$ with different $L$, and comparing with the truth. (b): The lack of information \eqref{Relative_Entropy} in the recovered 1D and 2D recovered PDFs related to the truth at $t=0.33$ as a function of $L$.}\label{L63_ModelError}
\end{figure}

\begin{figure}[!h]
{\hspace*{-3cm}\includegraphics[width=18cm]{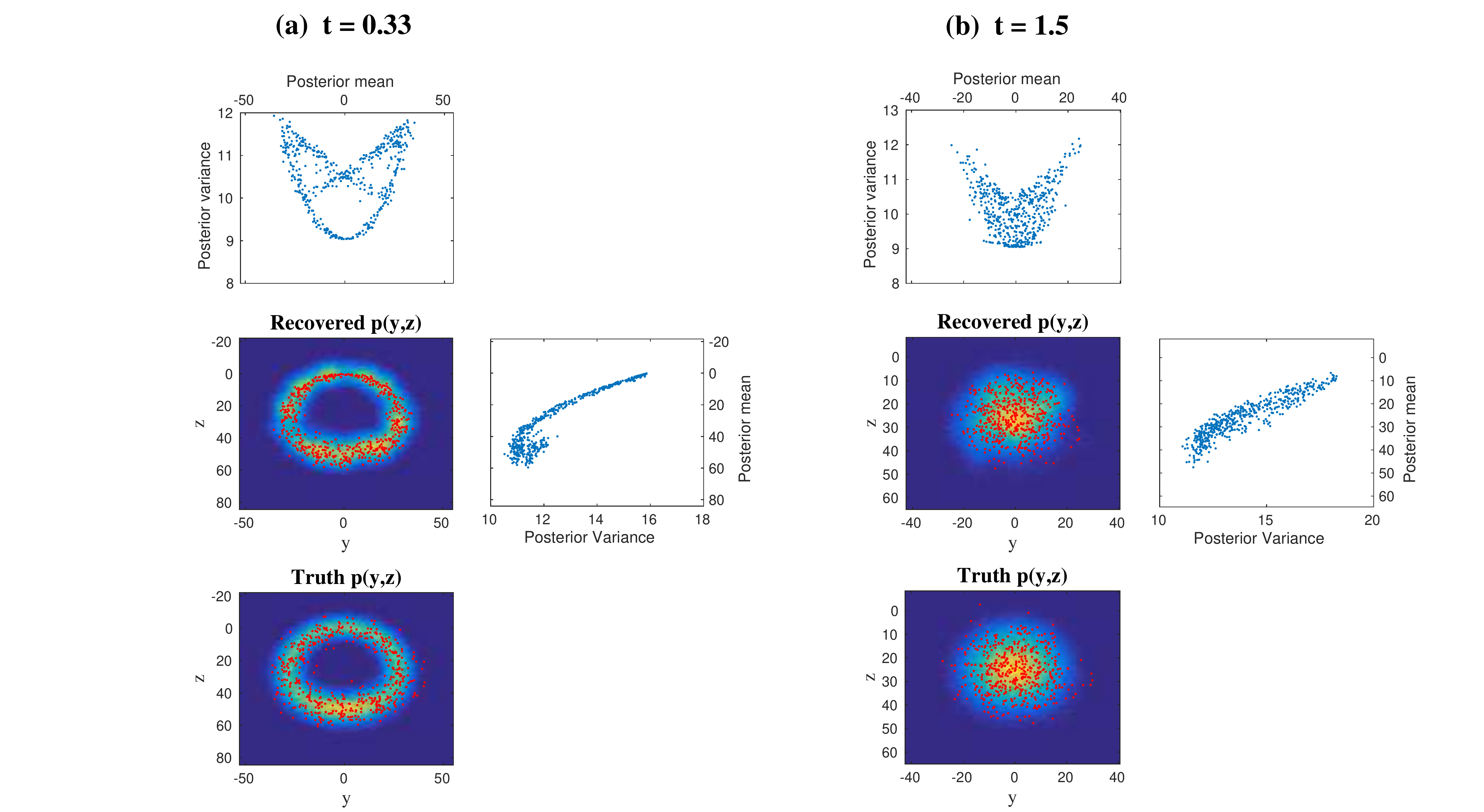}}
\caption{\textbf{L-63 model}. The red dots on top of the recovered PDF $p(y,z)$ shows the locations of $L=500$ posterior mean while those on top of the truth $p(y,z)$ are the Monte Carlo points. In addition, there are $L=500$ dots in the two panels on the top and the right sides of the recovered PDF $p(y,z)$. Each dot shows a 1D marginal posterior mean and the corresponding marginal 1D posterior variance of $y$ (top) and $z$ (right), respectively.}\label{L63_MeanVar}
\end{figure}\clearpage

\subsection{The 4D stochastic climate model (Equation \eqref{4D_Test_Model})}

In this 4D stochastic climate model, $\mathbf{u}_\mathbf{I} = (x_1,x_2)^T$ and $\mathbf{u}_\mathbf{II} = (y_1,y_2)^T$. A Gaussian initial distribution with zero mean and a diagonal covariance matrix with each diagonal entry equal to $0.1$ is adopted. Panel (a) of Figure \ref{4D_all} shows the time evolution of the skewness and kurtosis for all the four variables. The algorithm is tested in two time instants: 1) a transient phase $t=0.5$ with maximum kurtosis for $x_1$ and $y_1$ and 2) a nearly statistical equilibrium phase $t=4$. The recovered 1D and 2D PDFs are shown in Panel (b) and (c) at these two phases, respectively, with $L=500$. To illustrate the skill of recovering the tail probabilities, the comparison of the recovered 1D PDFs with the truth in the logarithm scale is also included. Clearly, the fat tails at both $t=0.5$ and $t=4$ for $x_1$ and $y_1$ are recovered accurately by the algorithm. In addition, the recovered 2D non-Gaussian PDFs and the truth also look nearly identical.

Figure \ref{4D_ModelError} shows the lack of information as a function of $L$ in the recovered 1D and 2D PDFs related to the truth. Again, the lack of information in the recovered PDF decays in an exponential fast rate and is already insignificant with  $L=100$. Note that the curve of the lack of information is similar in both the transient phase and the nearly statistical equilibrium phase, which indicates the robustness of the algorithm in recovering the PDFs at different time instants.

Finally, a comparison between using the kernel density estimation and the direct Monte Carlo in recovering the PDFs of the observed variables $x_1$ and $x_2$ is shown in Figure \ref{4D_Comparison}. With $L=100$, the kernel density estimation already succeeds in providing an accurate estimation of the 1D PDFs for both $x_1$ and $x_2$ while the histograms based on the direct Monte Carlo simulation are not even smooth, especially at the tails. It is until $L$ reaching $1000$ that the Monte Carlo simulation is able to produce a reasonably good estimation of the 1D PDFs. Nevertheless, the Monte Carlo simulation with $L=1000$ is still far from sufficient for recovering the 2D PDF while the recovered 2D PDF using the kernel density estimation is quite accurate.

\begin{figure}[!h]
{\hspace*{-3cm}\includegraphics[width=18cm]{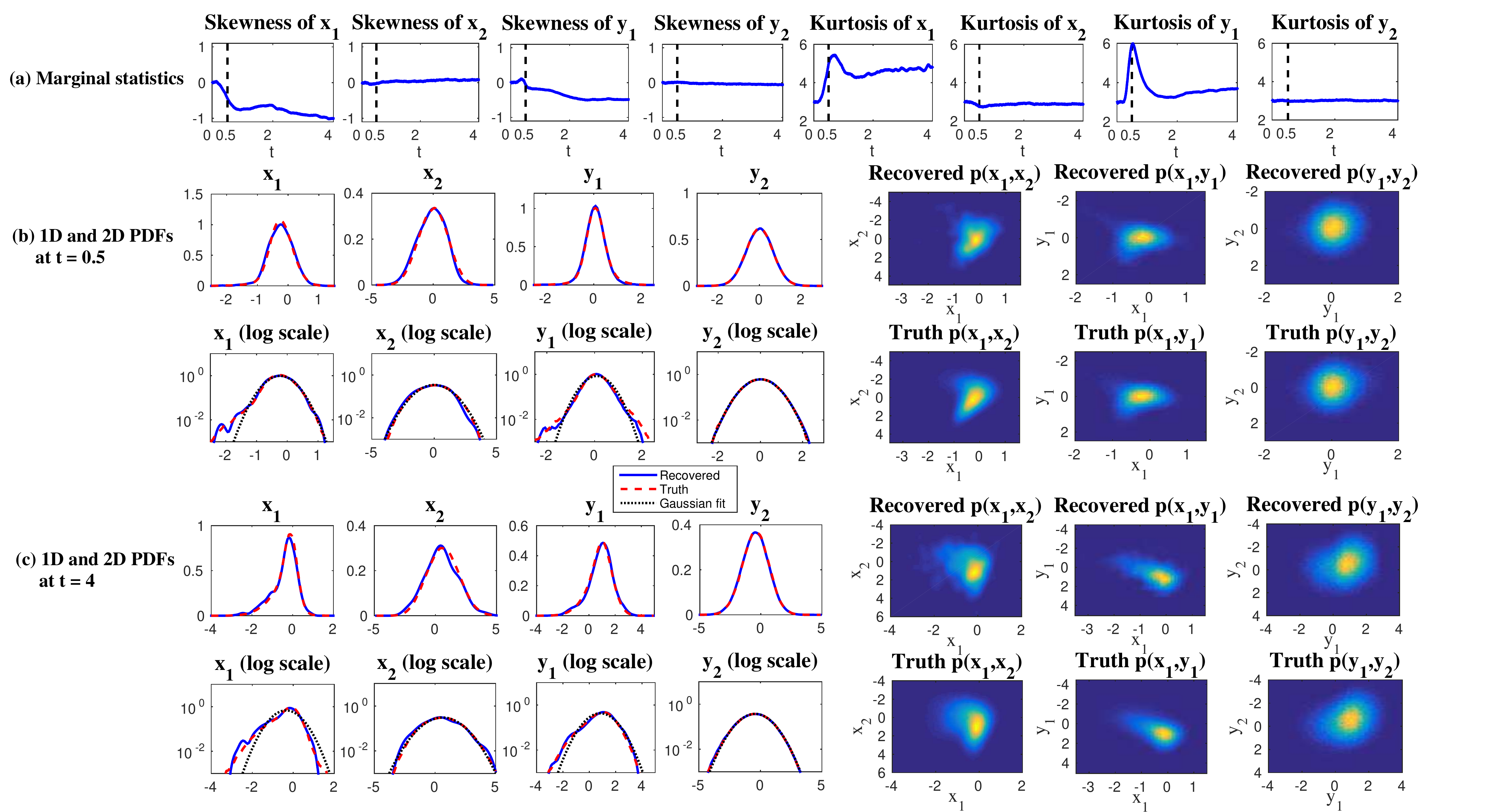}}
\caption{\textbf{4D stochastic climate model}. (a): 1D marginal skewness and kurtosis of each variable. (b): 1D and 2D PDFs at a transition phase $t = 0.5$. (c): 1D and 2D PDFs at a nearly statistical equilibrium phase $t = 4$. In (b) and (c), $L=500$.}\label{4D_all}
\end{figure}

\begin{figure}[!h]
{\hspace*{-3cm}\includegraphics[width=18cm]{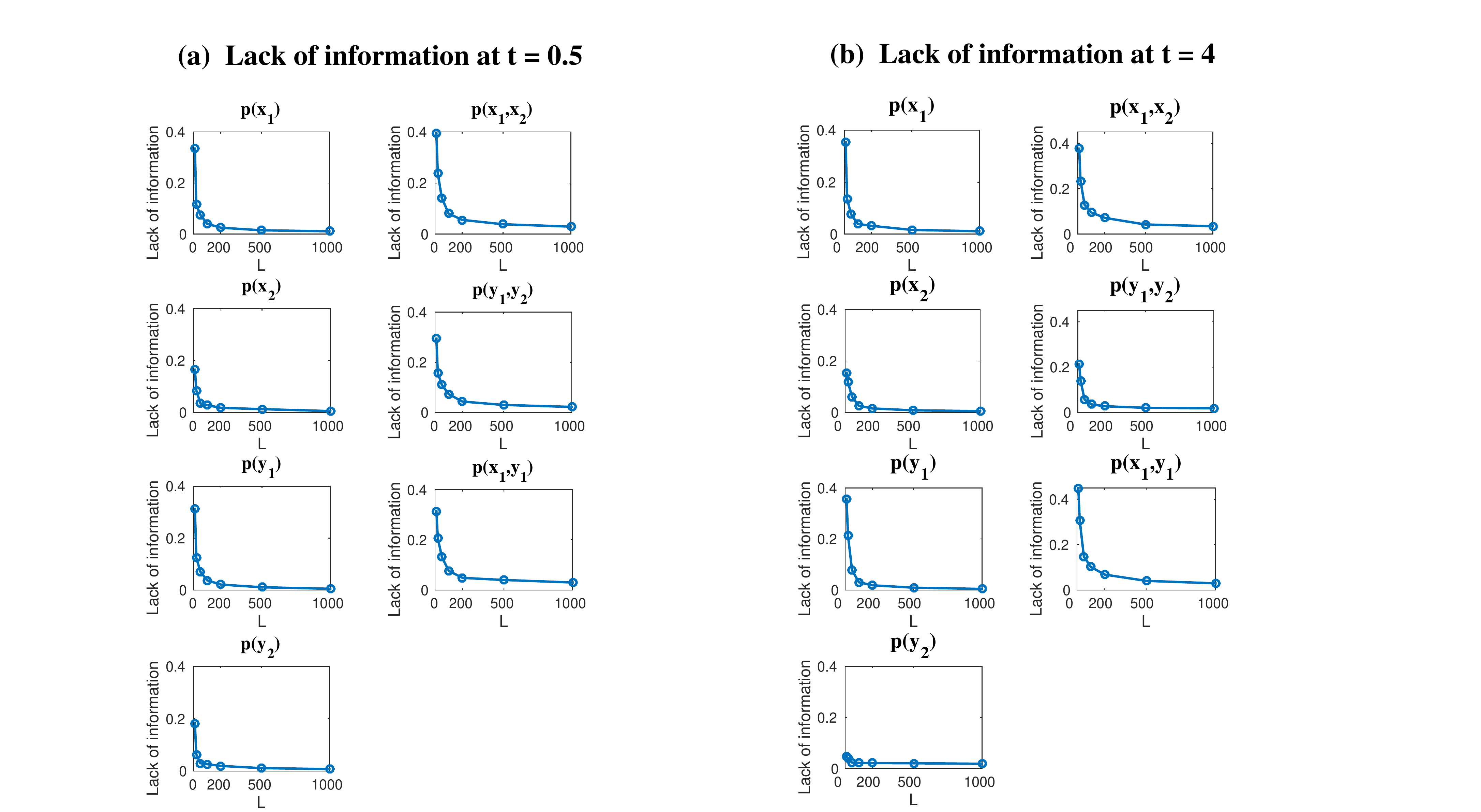}}
\caption{\textbf{4D stochastic climate model}. The lack of information \eqref{Relative_Entropy} in the recovered 1D and 2D PDFs related to the truth as a function of $L$. }\label{4D_ModelError}
\end{figure}

\begin{figure}[!h]
{\hspace*{-3cm}\includegraphics[width=18cm]{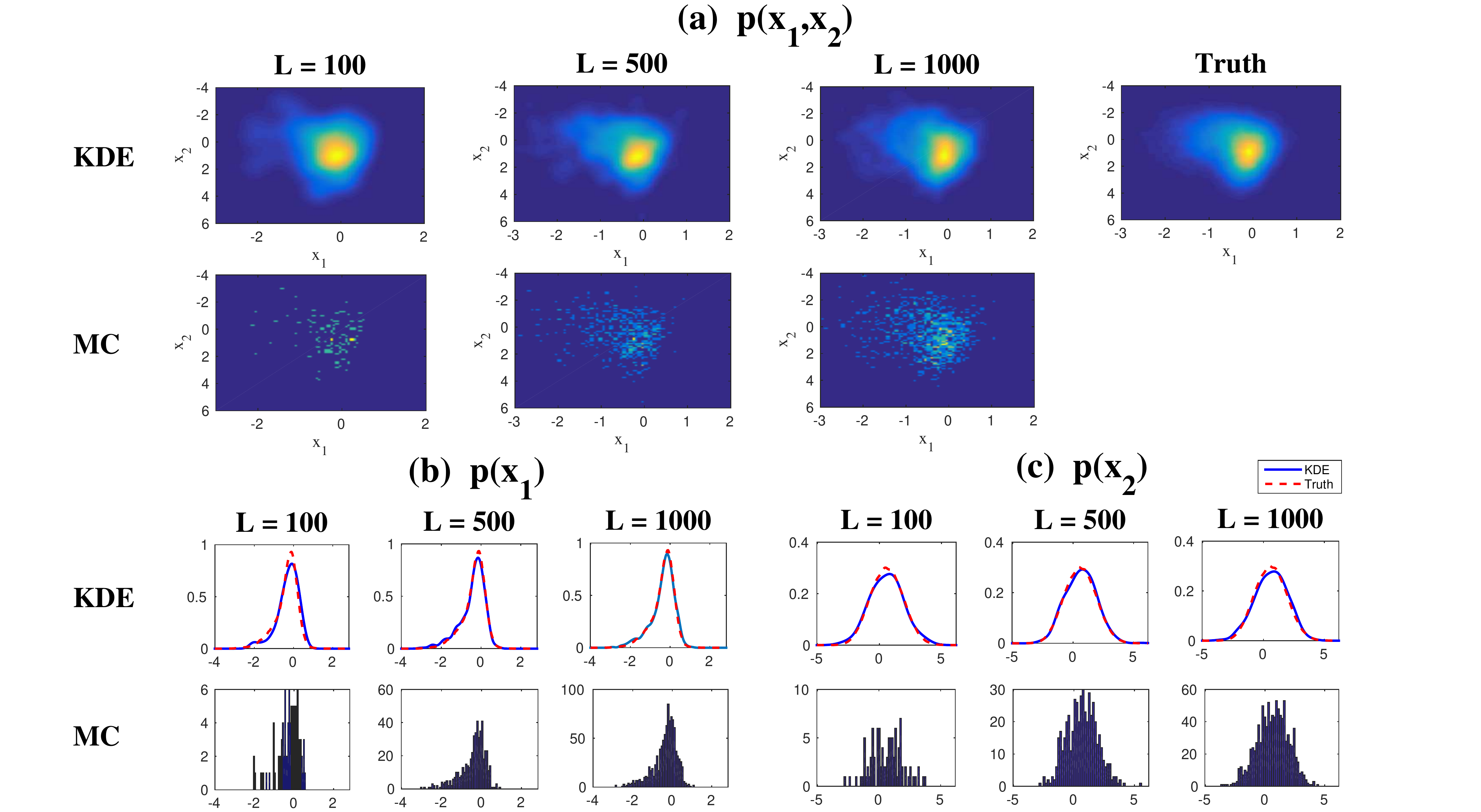}}
\caption{\textbf{4D stochastic climate model}. Comparison of the recovered 2D PDF $p(x_1,x_2)$ and 1D PDFs $p(x_1)$ and $p(x_2)$ using kernel density estimation (KDE) and MC simulation with the same $L$. }\label{4D_Comparison}
\end{figure}\clearpage

\subsection{The 3D nonlinear triad system (Equation \eqref{Linear3D})}

As discussed in Section \ref{Sec:Systems}, the 3D nonlinear triad system includes one observed variable $\mathbf{u}_{\mathbf{I}} = u_1$ and two unobserved variables $\mathbf{u}_{\mathbf{II}} = (u_2, u_3)$. The skill of recovering the non-Gaussian PDFs in the three dynamical regimes discussed in Section \ref{Sec:Systems} are shown here. See Figure \ref{3D_all_RG1}--\ref{3D_all_RG3}. The initial values of the tests in these three figures are all Gaussian PDFs centered at $(0.5,1,1)$ with a diagonal covariance with diagonal entries equal to $0.1$. Figure \ref{3D_all_RG1} shows the recovered PDFs at a transient phase and a nearly statistical equilibrium phase while Figure \ref{3D_all_RG2} and \ref{3D_all_RG3} show those at two different phases within one period. It is clear that all the non-Gaussian 1D PDFs with fat tails are reproduced by the algorithms with high accuracy using only $L=500$. The banana-shaped 2D PDF $p(u_1,u_2)$ with long tails in all the three regimes and the strongly correlated 2D PDF $p(u_2,u_3)$ in Regime III are both almost perfectly recovered as well.

Figure \ref{3D_all_RG3_Stat} compares the statistics that is recovered by the efficient statistically accurate algorithm with the truth in the toughest regime III. The recovered time evolutions of the 1D marginal mean and variance for all the three variables are almost overlapped with the truth. Even the skewness of $u_2$ and the skewness and kurtosis of $u_3$ are recovered with only small errors. The recovered skewness and kurtosis of $u_1$ is more noisy than the truth but the time-periodic trend is captured by the kernel estimation method. Importantly, the cross-covariance between all the three variables are reproduced with high skill, which justifies the block diagonal covariance matrix used in each mixture component \eqref{MeanCov} since the cross-correlation in each mixture component is already included in the conditional distribution \eqref{U_II_Mixture} and the overall cross-correlation also depends on the component locations.

Figure \ref{3D_RG2_ModelError} shows the lack of information in the recovered PDFs as a function of $L$ in Regime II, where the lack of information reduces to an insignificant amount when $L\sim O(100)$. The similar results are found in the other two regimes and are thus omitted here.

Figure \ref{3D_MeanVar} is similar to Figure \ref{L63_MeanVar} that illustrates the posterior mean and variance of each mixture component. Nevertheless, many more fascinating phenomena are revealed here. First, it is noticeable in Columns (a) and (b) that the posterior mean of the 2D distribution $p(u_2,u_3)$ stays almost in a 1D curve, which indicates that the efficient algorithm developed here involves an automatical dimension reduction process for determining the centers of the ensembles. For example, in panel (a), the posterior mean is distributed only in the $u_2$ direction and the corresponding variance of all the components is sufficiently large (right sub-panel) that is able to span the probability space of $u_3$. Secondly, a large discrepancy is likely to appear in the values of the posterior variance in different components. Looking again at panel (a), the posterior variance of  $u_2$ (top sub-panel) is much larger at the locations where $u_2$ is more negative. In fact, the marginal distribution of $u_2$ is skewed with a one-side fat tail towards the negative direction (Figure \ref{3D_all_RG1}). The large values of the posterior variance in this tail region implies that a substantial amount of area is covered by each conditional Gaussian distribution and therefore a small number $L$ is sufficient for an accurate estimation of this fat tail. This is a striking advantage over the Monte Carlo simulation that usually requires a large number of samples to simulate the fat tails.

Finally, we test the algorithm at a short transient phase $t=0.05$ starting from highly non-Gaussian initial values with large variance in the toughest regime III. The initial distribution of $u_2$ is assumed to be either a Gamma distribution or a bimodal distribution. See the left column of Figure \ref{3D_all_RG3_Initial}. Despite the uncorrelated initial distributions of $u_2$ and $u_3$, the strong coupling with fast oscillations between these two variables results in a significant tilted structure in the 2D PDF $p(u_2,u_3)$ at this short transient phase. The PDF recovered by the algorithm is able to capture such tilt as well as the non-Gaussian features starting from different initial values with the lack of information smaller than $0.1$ in the recovered PDFs.

\begin{figure}[!h]
{\hspace*{-3cm}\includegraphics[width=18cm]{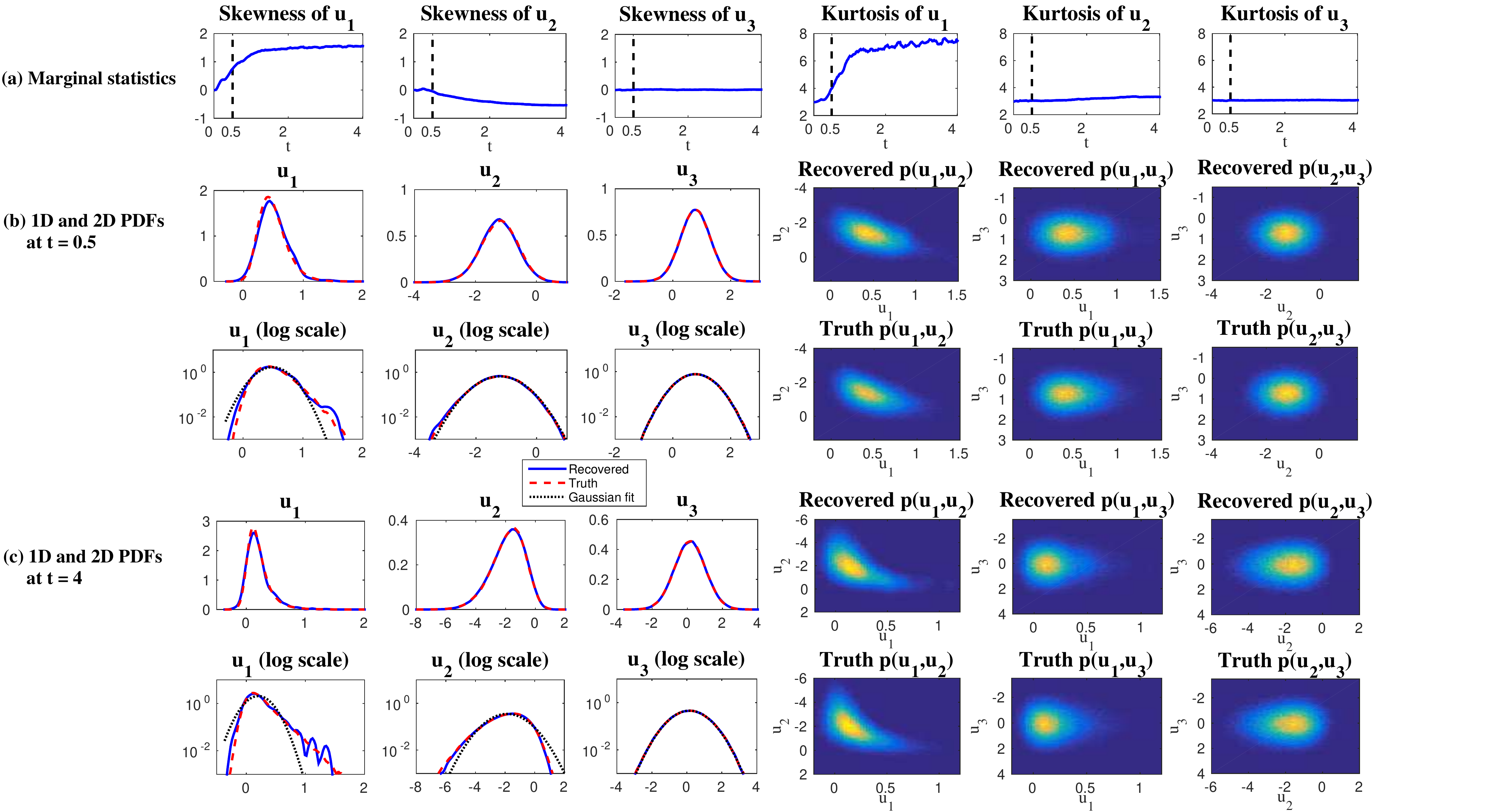}}
\caption{\textbf{3D nonlinear triad system; Regime I}. (a): 1D marginal skewness and kurtosis of each variable. (b): 1D and 2D PDFs at a transition phase $t = 0.5$. (c): 1D and 2D PDFs at a nearly statistical equilibrium phase $t = 4$. In (b) and (c), $L=500$.}\label{3D_all_RG1}
\end{figure}

\begin{figure}[!h]
{\hspace*{-3cm}\includegraphics[width=18cm]{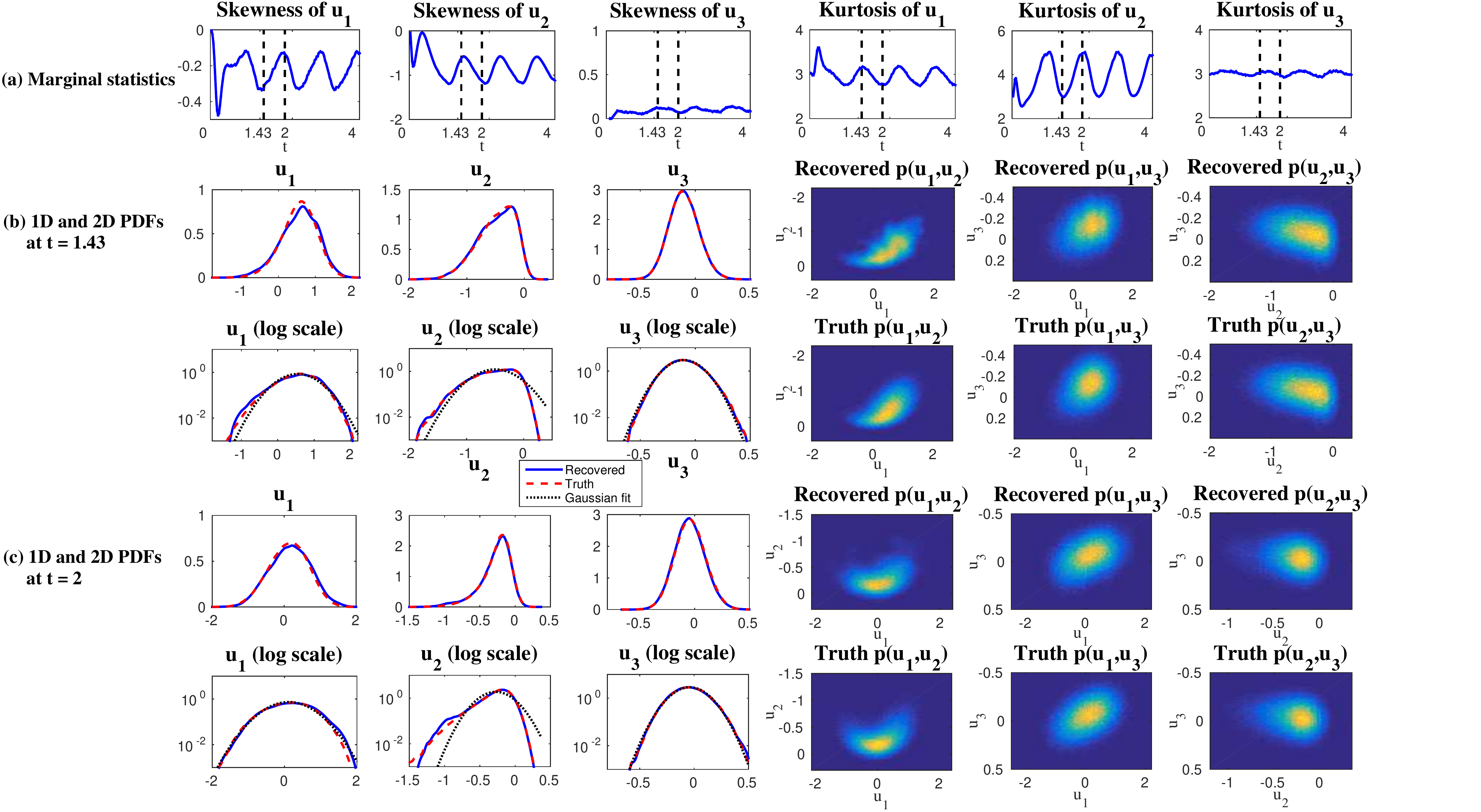}}
\caption{\textbf{3D nonlinear triad system; Regime II}. (a): Same as Figure \ref{3D_all_RG1}. (b) and (c): 1D and 2D PDFs at two transition phases $t = 1.43$ and $t=2$, respectively.}\label{3D_all_RG2}
\end{figure}

\begin{figure}[!h]
{\hspace*{-3cm}\includegraphics[width=18cm]{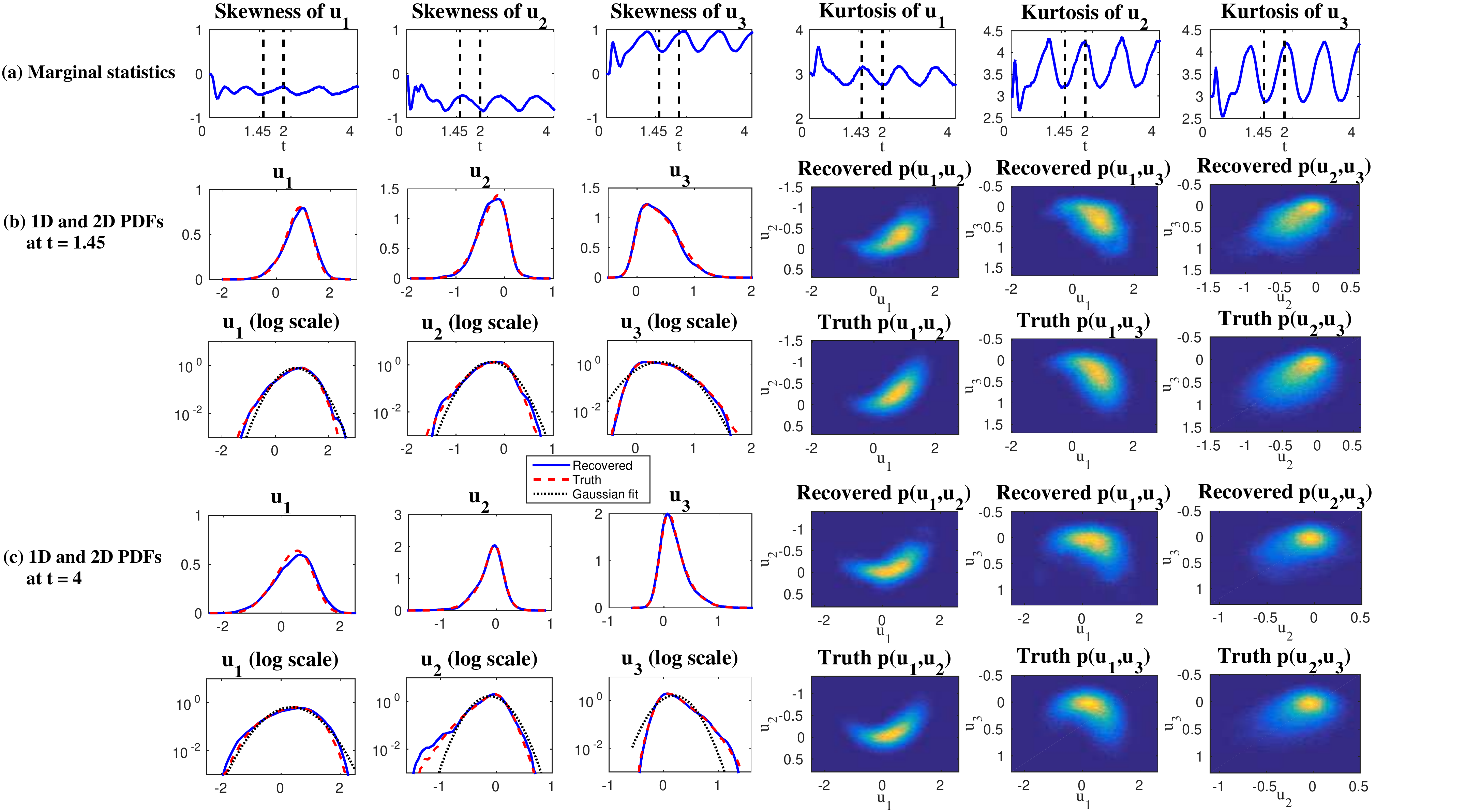}}
\caption{\textbf{3D nonlinear triad system; Regime III}. (a): Same as Figure \ref{3D_all_RG1}. (b) and (c): 1D and 2D PDFs at two transition phases $t = 1.45$ and $t=2$, respectively.}\label{3D_all_RG3}
\end{figure}

\begin{figure}[!h]
{\hspace*{-4cm}\includegraphics[width=18cm]{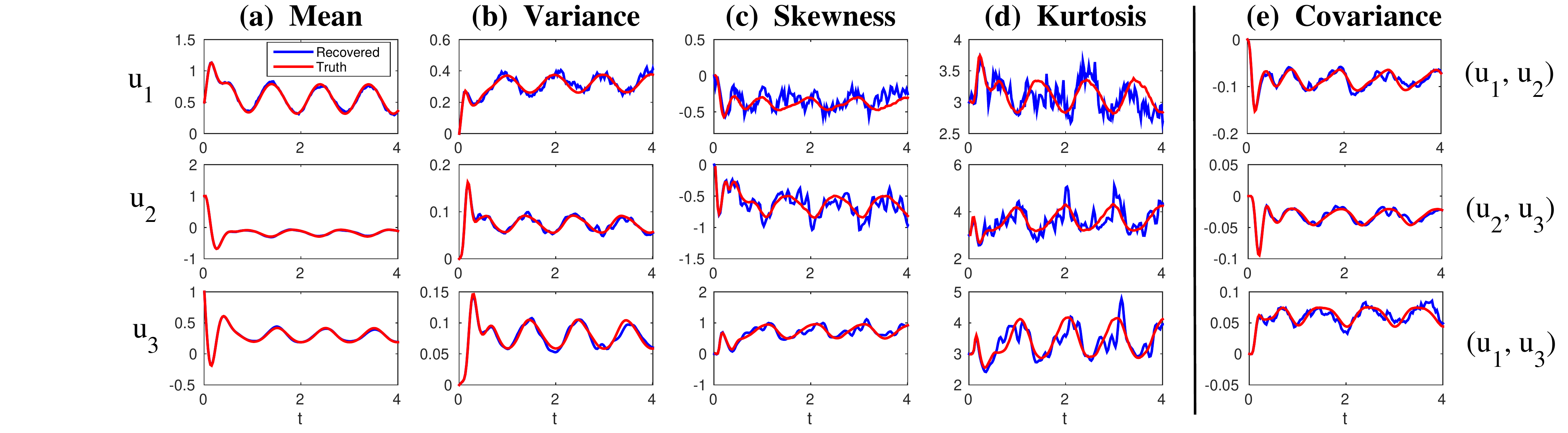}}
\caption{\textbf{3D nonlinear triad system; Regime III}. Comparison of the recovered time evolution of the statistics ($L=500$) with the truth.}\label{3D_all_RG3_Stat}
\end{figure}

\begin{figure}[!h]
{\hspace*{-3cm}\includegraphics[width=18cm]{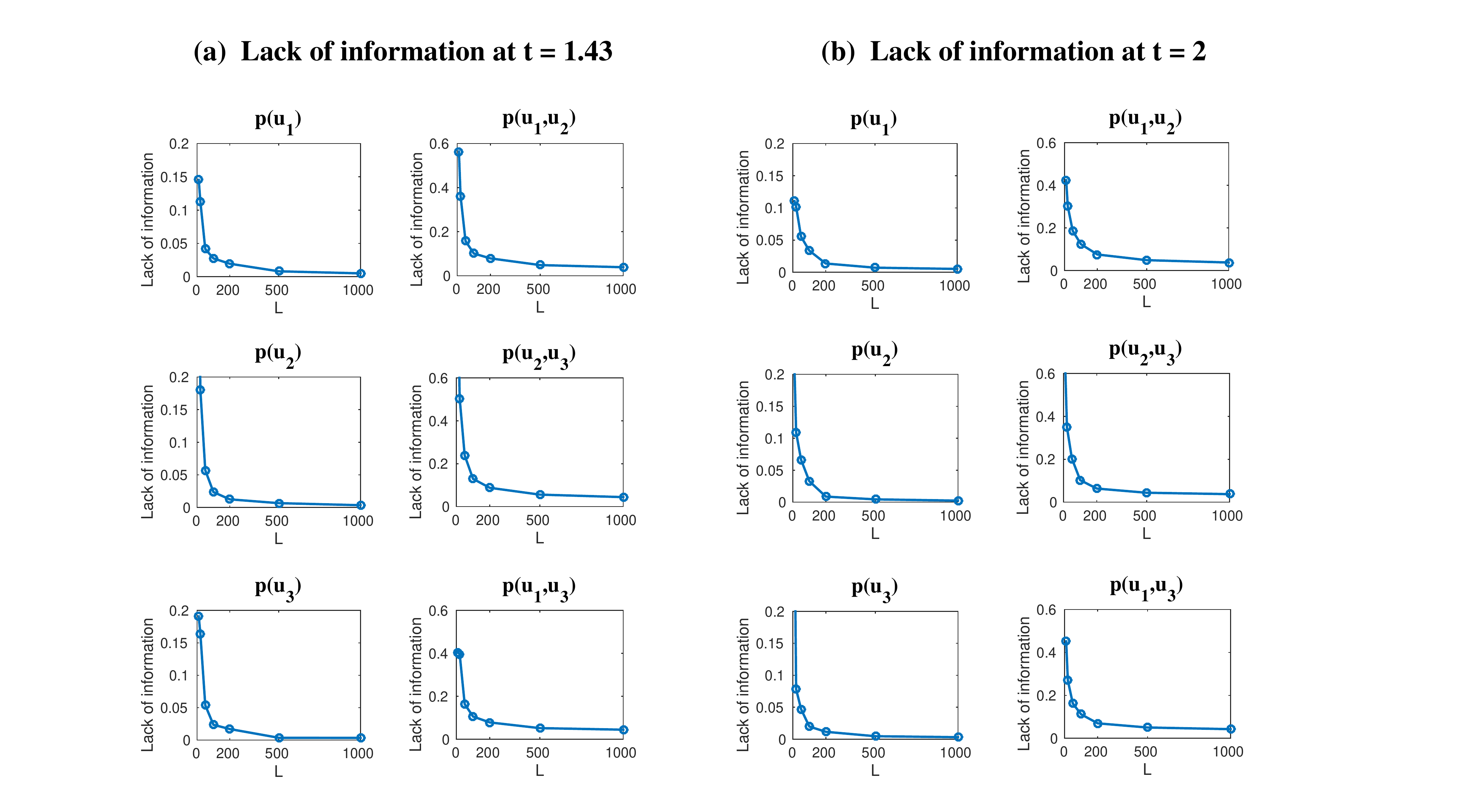}}
\caption{\textbf{3D nonlinear triad system; Regime II}. The lack of information \eqref{Relative_Entropy} in the recovered 1D and 2D PDFs related to the truth as a function of $L$.}\label{3D_RG2_ModelError}
\end{figure}

\begin{figure}[!h]
{\hspace*{-3cm}\includegraphics[width=18cm]{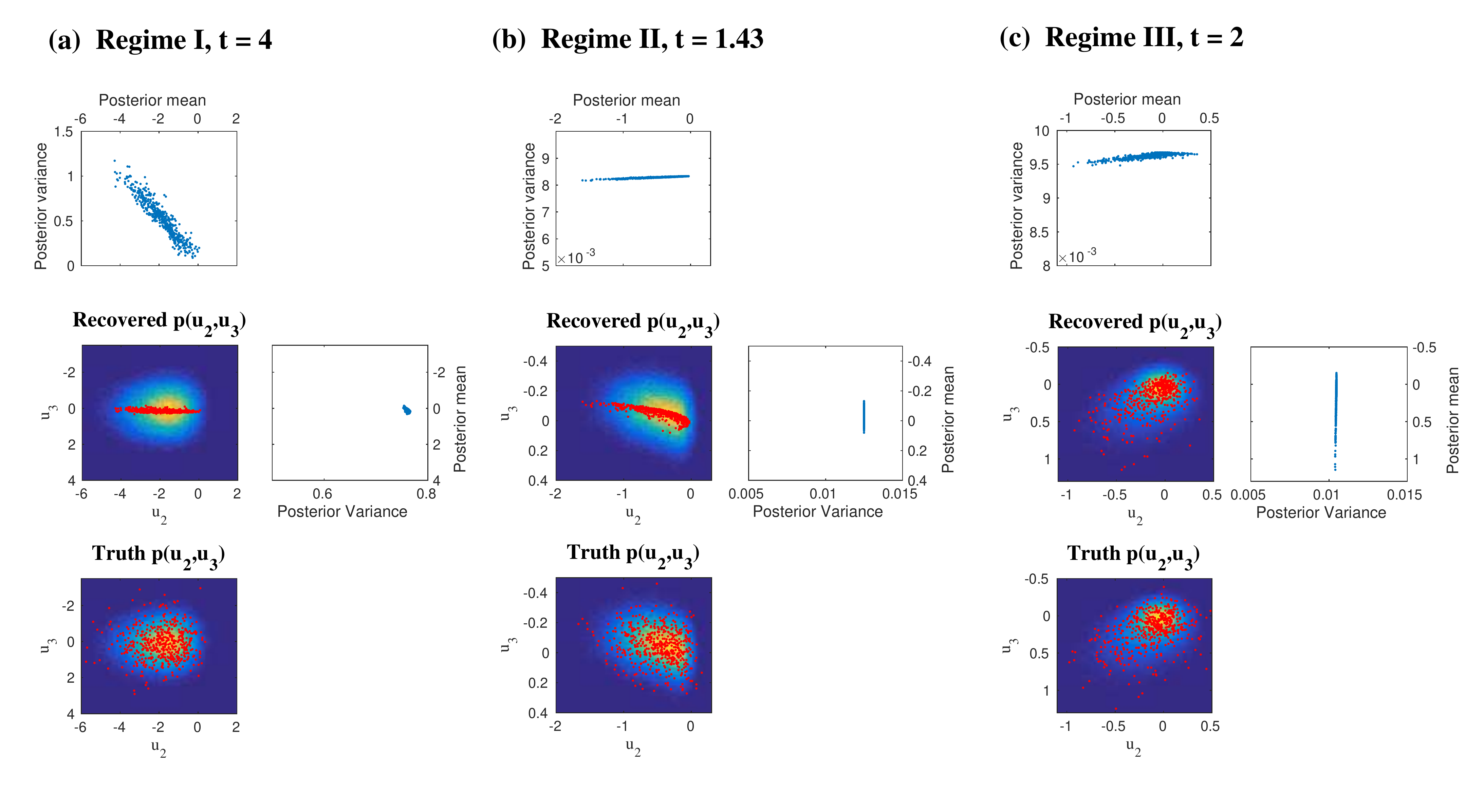}}
\caption{\textbf{3D nonlinear triad system}. The red dots on top of the recovered PDF $p(y,z)$ shows the locations of $L=500$ posterior mean while those on top of the truth $p(y,z)$ are the Monte Carlo points. There are $L=500$ dots in the two panels on the top and right side of the recovered PDF $p(y,z)$. Each dot shows a 1D marginal posterior mean and the corresponding marginal 1D posterior variance of $y$ and $z$, respectively. (a), (b) and (c) show different behaviors in the three regimes.}\label{3D_MeanVar}
\end{figure}

\begin{figure}[!h]
{\hspace*{-4cm}\includegraphics[width=18cm]{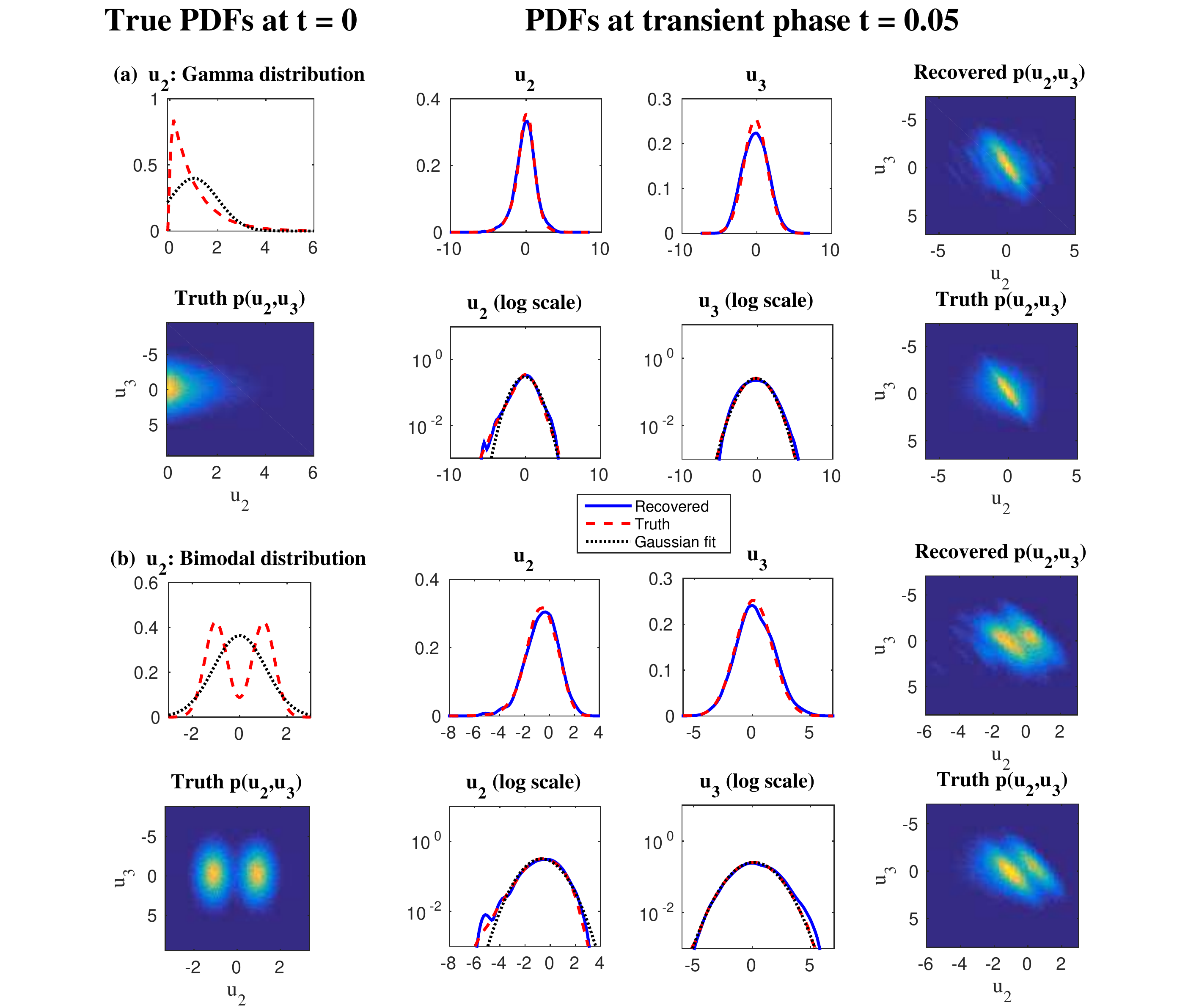}}
\caption{\textbf{3D nonlinear triad system; Regime III}. Recovery of the PDF of the unobserved variables $p(u_2,u_3)$ at a short transient phases $t=0.05$ with a large variance in the highly non-Gaussian initial values. The initial distribution of $u_2$ satisfies a Gamma distribution $\Gamma(1,1)$ in (a) and a bimodal distribution with the superposition of two Gaussians $\mathcal{N}(1,0.2)$ and $\mathcal{N}(-1,0.2)$ in (b). The initial distributions of $u_1$ and $u_3$ in both cases are $\mathcal{N}(0,4)$. Here, $L=500$.}\label{3D_all_RG3_Initial}
\end{figure}\clearpage

\subsection{The 6D conceptual dynamical model for turbulence (Equation \eqref{TurbulentModel})}

Finally, the skill of recovering the highly non-Gaussian PDFs of the 6D turbulent model is reported here, where $\mathbf{u}_\mathbf{I}=u$ and $\mathbf{u}_\mathbf{II} = (v_1,\ldots,v_5)^T$ with zero initial values for all the 6 variables. Figure \ref{6D_Regime} shows the time evolution of the 1D marginal mean, variance, skewness and kurtosis for all the variables. Note that $t=0.6$ is a transient phase at which the small-scale unobserved variables $v_3, v_4$ and $v_5$ have the strongest non-Gaussian features with both large skewness and kurtosis.

Figure \ref{6D_all_t06} compares the recovered 1D and 2D PDFs with the truth at this transient phase $t=0.6$ and Figure \ref{6D_all_t4} shows those at the nearly statistical equilibrium phase $t=4$. It is clear that $L=500$ is sufficient to recover the 1D skewed PDFs with an one-side fat tail associated with the small-scale variables as well as the Gaussian and non-Gaussian features in those medium- and large-scale variables. The efficient statistically accurate algorithm also provides an accurate estimation of all the 2D joint PDFs. Particularly, the banana shapes of the 2D PDFs in $p(u,v_i)$ and the strong correlations between $v_i$ and $v_j$ in $p(v_i,v_j)$ are both reproduced with high accuracy. The lack of information as a function of $L$ in the recovered 1D and 2D PDFs related to the truth at $t=4$ is shown in Figure \ref{6D_ModelError}. This is similar to those in all the previous test models, indicating the robustness of the algorithm in recovering the PDFs for various turbulent systems at different phases.

\begin{figure}[!h]
{\hspace*{-3cm}\includegraphics[width=18cm]{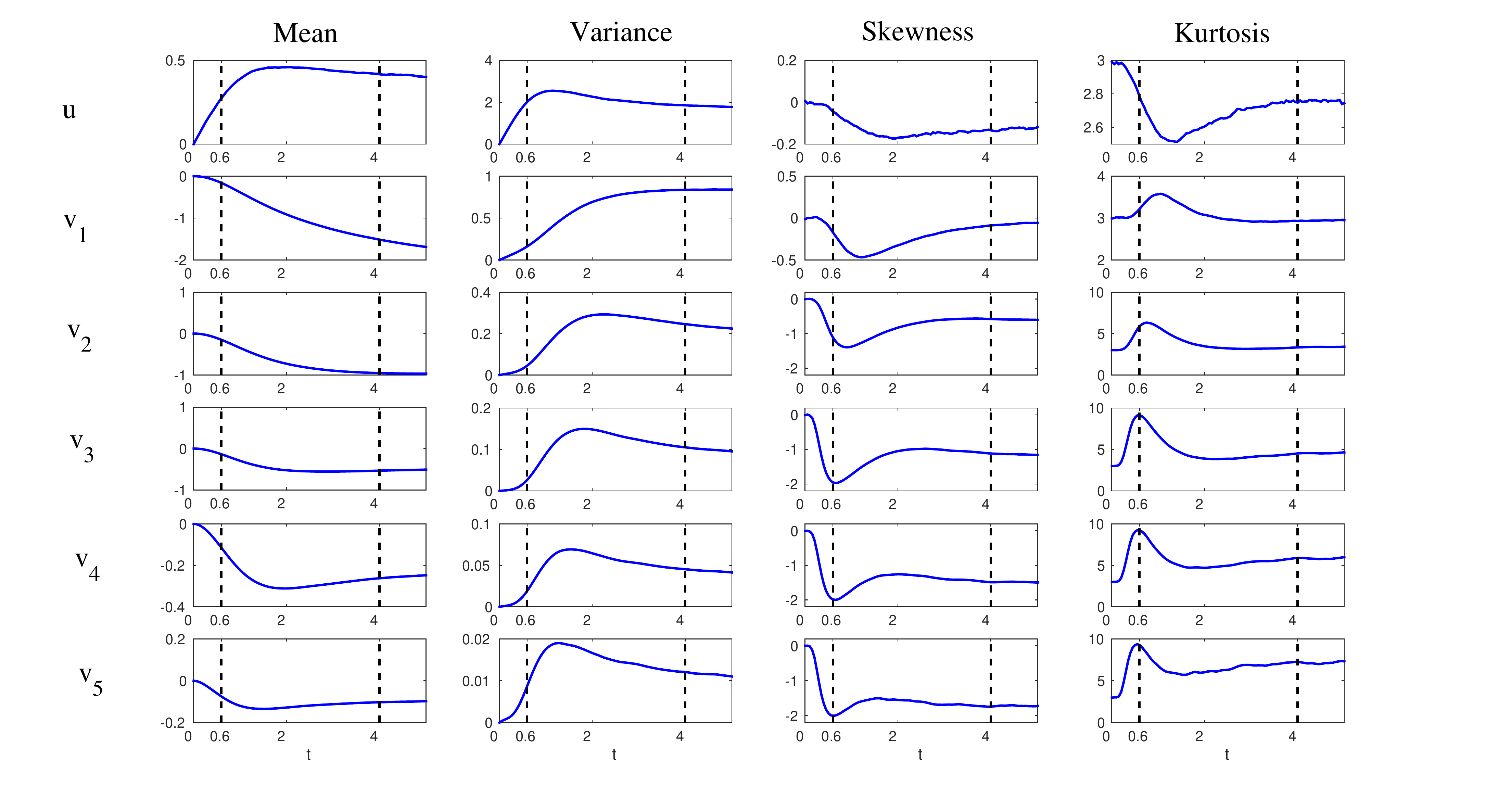}}
\caption{\textbf{6D conceptual dynamical model for turbulence}. Evolutions of 1D marginal statistics.}\label{6D_Regime}
\end{figure}

\begin{figure}[!h]
{\hspace*{-3cm}\includegraphics[width=18cm]{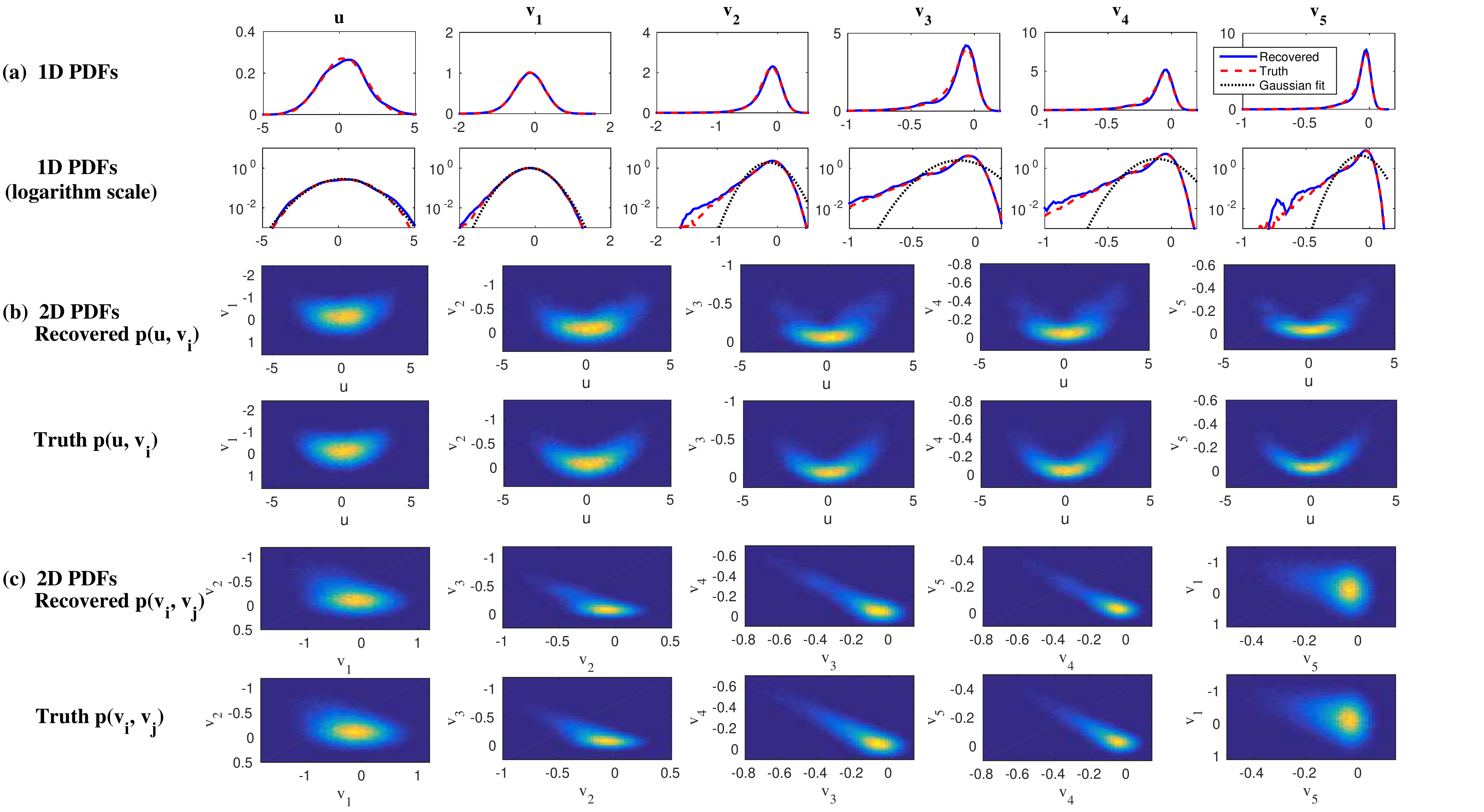}}
\caption{\textbf{6D conceptual dynamical model for turbulence}. At a transient phase $t=0.6$. (a): 1D PDFs. (b): 2D PDFs $p(u,v_i)$. (c): 2D PDFs $p(v_i,v_j)$. Here $L=500$.}\label{6D_all_t06}
\end{figure}

\begin{figure}[!h]
{\hspace*{-3cm}\includegraphics[width=18cm]{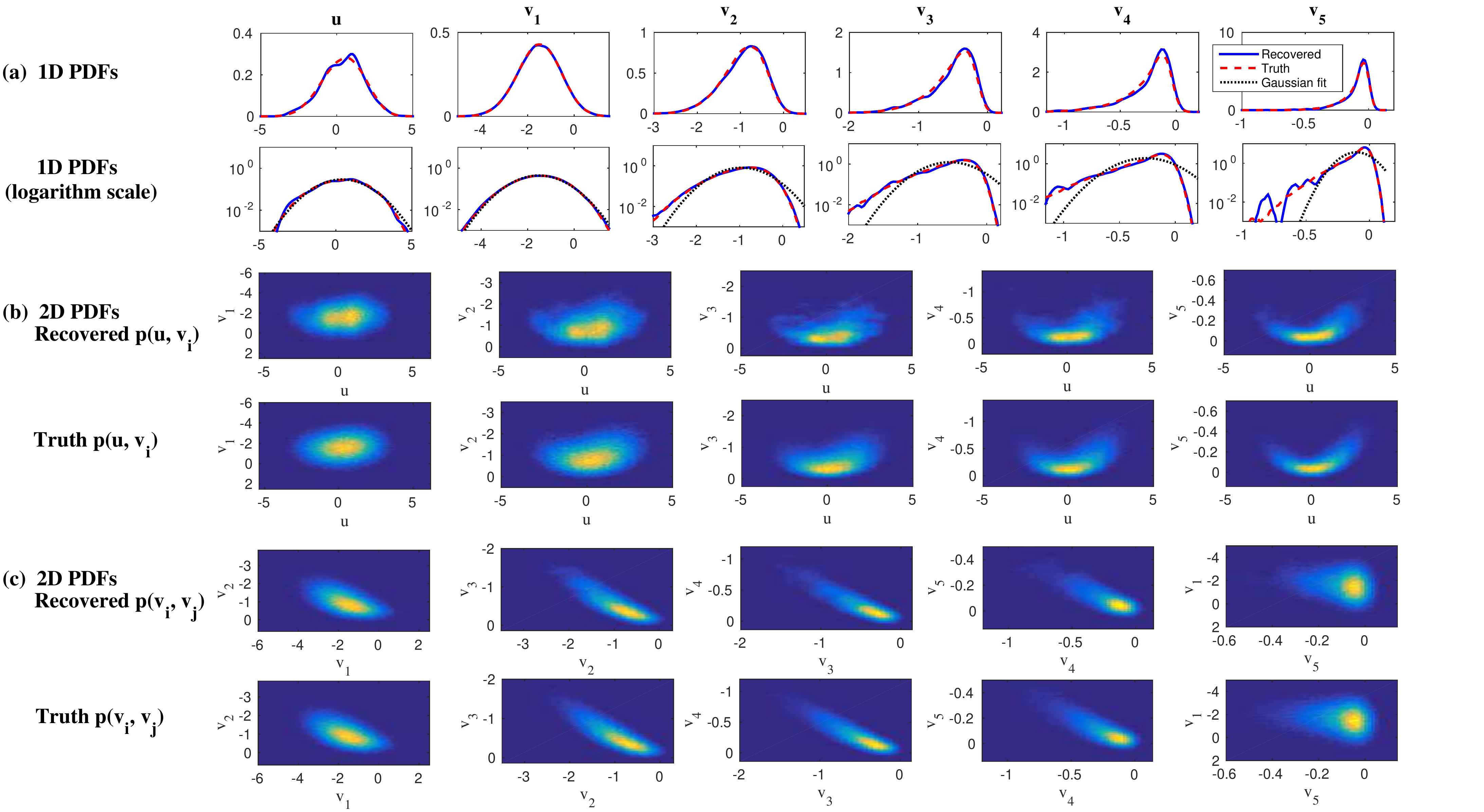}}
\caption{\textbf{6D conceptual dynamical model for turbulence}. At the nearly statistical equilibrium phase $t=4$. (a): 1D PDFs. (b): 2D PDFs $p(u,v_i)$. (c): 2D PDFs $p(v_i,v_j)$. Here $L=500$.}\label{6D_all_t4}
\end{figure}

\begin{figure}[!h]
{\hspace*{-3cm}\includegraphics[width=18cm]{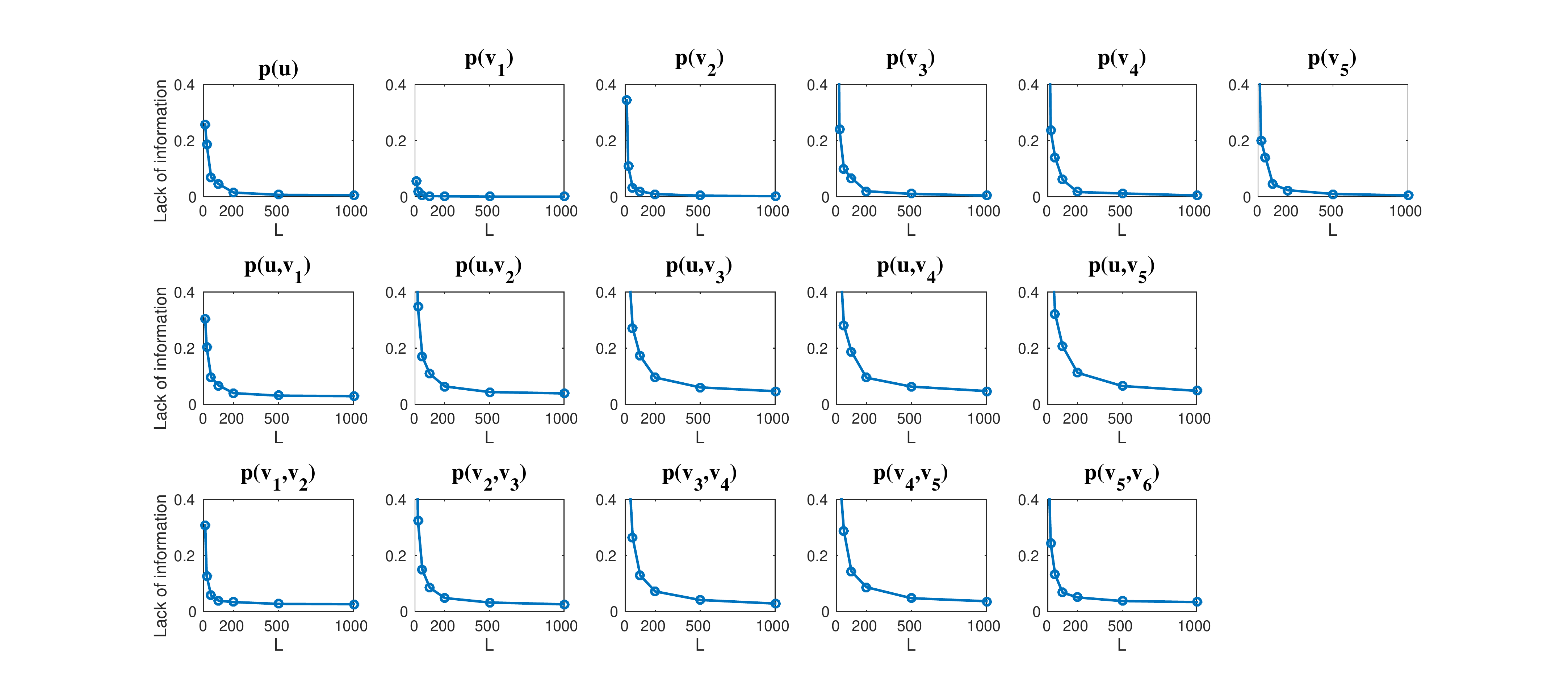}}
\caption{\textbf{6D conceptual dynamical model for turbulence}. The lack of information  in the recovered  1D and 2D PDFs related to the truth as a function of $L$ at $t=4$.}\label{6D_ModelError}
\end{figure}

\clearpage
\section{Discussion and Conclusions}\label{Sec:Conclusion}
In this article, efficient statistically accurate algorithms are developed for solving the Fokker-Planck equation associated with the conditional Gaussian turbulent dynamical systems in large dimensions \eqref{Conditional_Gaussian_System}.

Despite the conditional Gaussianity, the conditional Gaussian systems are nonlinear and can be highly non-Gaussian in both transient phases and the statistical steady state. They are able to capture many desired characteristics of turbulence, neuroscience and excitable media. In particularly, the conditional Gaussian framework includes a rich class of the turbulent models that contain energy-conserving quadratic nonlinear interactions as in nature \citep{majda2016introduction} (Section \ref{Sec:Systems} and \ref{Appendix:A}). One important feature of the conditional Gaussian systems is that the conditional distribution $p(\mathbf{u}_\mathbf{II}(t)|\mathbf{u}_\mathbf{I}(s\leq t))$ of the unobserved variables $\mathbf{u}_\mathbf{II}(t)$ given each trajectory of the observed variables $\mathbf{u}_\mathbf{I}(s\leq t)$ is Gaussian \eqref{CG_PDF} and it can be solved via closed analytical formulae \eqref{CG_Result}.

The efficient statistically accurate algorithms developed here involve a hybrid strategy. The PDF of the high-dimensional unobserved variables $p(\mathbf{u}_\mathbf{II}(t))$ is recovered by a parametric method that employs the average of $L$ conditional Gaussian posterior distributions $p(\mathbf{u}_\mathbf{II}(t)|\mathbf{u}^i_\mathbf{I}(s\leq t)), i=1,\ldots, L$ (Proposition \ref{Prop:Unobserved}). On the other hand, the PDF of the low-dimensional observed variables $p(\mathbf{u}_\mathbf{I}(t))$ is solved using a judicious non-parametric kernel density estimation method with Gaussian kernels (Proposition \ref{Prop:Observed}). The combination of the $L$ Gaussian distributions for $p(\mathbf{u}_\mathbf{I}(t))$ and the $L$ conditional Gaussian distributions for $p(\mathbf{u}_\mathbf{II}(t))$ leads to a Gaussian mixture for recovering the joint distribution $p(\mathbf{u}_\mathbf{I}(t), \mathbf{u}_\mathbf{II}(t))$ (Proposition \ref{Prop:Both}). In the limit $L\to\infty$, the solution resulting from these algorithms is consistent with that of solving the Fokker-Planck equation \eqref{Fokker_Planck_Equation}. Practically, $L\sim O(100)$ is able to provide an accurate estimation of  non-Gaussian PDFs with dimension $\sim O(10)$. The success of solving the high dimensional PDF with such a small number of mixture components is due to the sufficiently large portion of the high-dimensional PDF being covered by each component, which is completely different from the traditional particle methods. As shown in Figure \ref{L63_MeanVar} and \ref{3D_MeanVar} in the performance tests, the marginal variances of each conditional Gaussian distribution associated with the high dimensional unobserved variables are usually large, and therefore the portion consisting of an enormous number of Monte Carlo samples can be covered by only one conditional Gaussian component. In particular, the posterior variances become even more significant when the associated mixture component is located in the fat tails of the distribution (e.g., Panel (a) of Figure \ref{3D_MeanVar}). This is another advantage of the algorithm in improving the efficiency of capturing the non-Gaussian features resulting from the intermittency and extreme events. In addition, the algorithms sometimes also involve an automatic dimension reduction process that makes the posterior mean locate in a lower dimensional subspace (e.g., Panel (a) and (b) of Figure \ref{3D_MeanVar}), which further reduces the number of mixture components required in the algorithms. All the properties presented above play important roles in ameliorating the curse of dimensionality and facilitate the algorithms to deal with high dimensional PDFs with strong non-Gaussian features. We also show in our test examples that the algorithms behave in a uniformly convergent fashion at long times with $t\to\infty$. Note that the posterior distributions of the high-dimensional unobserved part $\mathbf{u}_\mathbf{II}$ are solved via closed analytical formulae and the posterior distributions associated with different components can be solved in a parallel way, which greatly reduce the computational cost and avoid approximate errors.

Numerical simulations in the performance tests show that the error (lack of information) in the recovered PDFs decays exponentially fast as a function of $L$. In addition to these numerical results, a rigorous mathematical analysis of the error bound depending on different factors in the model will be very useful to understand the convergence of the efficient statistically accurate algorithms and provide guidelines for further improvement of the algorithms. Particularly, it is extremely interesting to show in a rigorous way that both the large posterior variance and the automatic dimension reduction of the posterior mean  play crucial roles in ameliorating the curse of dimensionality and allow the algorithms to work in high dimensional systems. These theoretical issues are addressed in an ongoing work \citep{chen2017rigorous}.
Note that the aim of this article is to develop these efficient statistically accurate algorithms and numerically validate their performance. The test models used here have dimensions only up to $6$. This is because obtaining the true PDF of higher dimensional systems is not a simple task for validation. An inaccurate true PDF will introduce difficulties for quantifying the error in the recovered ones. Nevertheless, given the validation tests in this work, the algorithms can be applied to turbulent dynamical systems with higher dimensions in the future. In addition, since the strategy of dealing with $\mathbf{u}_\mathbf{II}$ is already sufficiently efficient and accurate for large dimensions, improving the strategy to handle systems with a larger dimension of $\mathbf{u}_\mathbf{I}$ can be a future direction.
Finally, the efficient statistically accurate algorithms have many important applications. For example, they can be applied to study the causality between different phenomena in the atmosphere and ocean,  which involves computing the information transfer in high-dimensional turbulent dynamical systems based on the associated non-Gaussian PDFs \citep{majda2007information, liang2005information}. They can also be applied to solve the joint PDF of the turbulent ocean flows and the associated noisy Lagrangian tracers transported by the flows. Understanding the evolution of the joint PDF is potentially important for determining the optimal number and the best locations of releasing the tracers \citep{chen2016model}.

\section*{Acknowledgement}
The research of A.J.M. is partially supported by the Office of Naval Research Grant ONR MURI N00014-16-1-2161 and the Center for Prototype Climate Modeling (CPCM) at New York University Abu Dhabi Research Institute. N.C. is supported as a postdoctoral fellow through A.J.M's ONR MURI Grant. The authors thank Yoonsang Lee and Xin Tong for useful discussion.
\appendix
\section{A General Framework of Conditional Gaussian Systems with Energy-Conserving Nonlinear Interactions}\label{Appendix:A}
Recall the general form of turbulent dynamical system with energy-conserving quadratic nonlinear interactions in \eqref{EnergyConserveModel}:
\begin{equation}\label{EnergyConserveModel_Appendix}
\begin{split}
  d\mathbf{u} &= \big[ (\mathbf{L}+\mathbf{D})\mathbf{u} + \mathbf{B}(\mathbf{u},\mathbf{u}) + \mathbf{F}(t) \big] dt + \boldsymbol{\Sigma}(t,\mathbf{u})d\mathbf{W}(t),\\
  &\mbox{with}\qquad\mathbf{u}\cdot\mathbf{B}(\mathbf{u},\mathbf{u}) = 0.
\end{split}
\end{equation}
To find the class of models that belong to the conditional Gaussian framework \eqref{Conditional_Gaussian_System}, we rewrite the equation \eqref{EnergyConserveModel_Appendix} in the following way
\begin{equation}\label{EnergyConserveModel2_Appendix}
\begin{split}
  d\mathbf{u}_\mathbf{I} &= \big(\mathbf{L}_\mathbf{I,1}\mathbf{u}_\mathbf{I} +\mathbf{L}_\mathbf{I,2}\mathbf{u}_\mathbf{II}  + \mathbf{B}_\mathbf{I}(\mathbf{u},\mathbf{u}) + \mathbf{F}_\mathbf{I} \big) dt + \boldsymbol{\Sigma}_\mathbf{I}(\mathbf{u}_\mathbf{I})d\mathbf{W}_\mathbf{I},\\
  d\mathbf{u}_\mathbf{II} &= \big(\mathbf{L}_\mathbf{II,1}\mathbf{u}_\mathbf{I} +\mathbf{L}_\mathbf{II,2}\mathbf{u}_\mathbf{II}  + \mathbf{B}_\mathbf{II}(\mathbf{u},\mathbf{u}) + \mathbf{F}_\mathbf{II} \big) dt + \boldsymbol{\Sigma}_\mathbf{II}(\mathbf{u}_\mathbf{I})d\mathbf{W}_\mathbf{II},
\end{split}
\end{equation}
where the explicit dependence of the coefficients on time $t$ has been omitted for  notation simplicity. In \eqref{EnergyConserveModel2_Appendix}, $\mathbf{L}_\mathbf{I,1}\mathbf{u}_\mathbf{I}$, $\mathbf{L}_\mathbf{I,2}\mathbf{u}_\mathbf{II}$, $\mathbf{L}_\mathbf{II,1}\mathbf{u}_\mathbf{I}$ and $\mathbf{L}_\mathbf{II,2}\mathbf{u}_\mathbf{II}$ correspond to the the linear term $\mathbf{L}+\mathbf{D}$ in \eqref{EnergyConserveModel_Appendix}
while $\mathbf{B}_\mathbf{I}(\mathbf{u},\mathbf{u})$ and $\mathbf{B}_{\mathbf{II}}(\mathbf{u},\mathbf{u})$ represent the nonlinear terms in the processes associated with the observed variables \eqref{Conditional_Gaussian_System1} and unobserved variables \eqref{Conditional_Gaussian_System2}, respectively. Since the conditional Gaussian systems do not allow quadratic nonlinear interactions between $\mathbf{u}_\mathbf{II}$ and itself, both $\mathbf{B}_\mathbf{I}(\mathbf{u},\mathbf{u})$ and $\mathbf{B}_{\mathbf{II}}(\mathbf{u},\mathbf{u})$ can be written down in the following forms
\begin{equation}\label{Decompose_B}
\begin{split}
  \mathbf{B}_\mathbf{I}(\mathbf{u},\mathbf{u}) &= \mathbf{B}_\mathbf{I,1}(\mathbf{u}_\mathbf{I},\mathbf{u}_\mathbf{I}) + \mathbf{B}_\mathbf{I,2}(\mathbf{u}_\mathbf{I},\mathbf{u}_\mathbf{II})\\
  \mathbf{B}_{\mathbf{II}}(\mathbf{u},\mathbf{u}) &= \mathbf{B}_\mathbf{II,1}(\mathbf{u}_\mathbf{I},\mathbf{u}_\mathbf{I}) + \mathbf{B}_\mathbf{II,2}(\mathbf{u}_\mathbf{I},\mathbf{u}_\mathbf{II})
\end{split}
\end{equation}
where $\mathbf{B}_\mathbf{\cdot,1}(\mathbf{u}_\mathbf{I},\mathbf{u}_\mathbf{I})$ stands for the quadratic terms involving only $\mathbf{u}_\mathbf{I}$ and $\mathbf{B}_\mathbf{\cdot,2}(\mathbf{u}_\mathbf{I},\mathbf{u}_\mathbf{II})$ represents the quadratic interactions between $\mathbf{u}_\mathbf{I}$ and $\mathbf{u}_\mathbf{II}$.  Given the nonlinear terms in  \eqref{Decompose_B}, the energy-conserving quadratic nonlinearity in \eqref{EnergyConserveModel_Appendix} implies
\begin{equation}\label{EnergyConserving_B}
  \mathbf{u}_\mathbf{I}\cdot\Big(\mathbf{B}_\mathbf{I,1}(\mathbf{u}_\mathbf{I},\mathbf{u}_\mathbf{I}) + \mathbf{B}_\mathbf{I,2}(\mathbf{u}_\mathbf{I},\mathbf{u}_\mathbf{II})\Big) + \mathbf{u}_\mathbf{II}\cdot\Big(\mathbf{B}_\mathbf{II,1}(\mathbf{u}_\mathbf{I},\mathbf{u}_\mathbf{I}) + \mathbf{B}_\mathbf{II,2}(\mathbf{u}_\mathbf{I},\mathbf{u}_\mathbf{II})\Big) = 0.
\end{equation}
Inserting \eqref{Decompose_B} into \eqref{EnergyConserveModel2_Appendix} yields the conditional Gaussian systems with energy-conserving quadratic nonlinear interactions,
\begin{subequations}\label{Conditional_Gaussian_System_Quadratic}
\begin{align}
    d\mathbf{u}_{\mathbf{I}} &= \Big(\mathbf{B}_\mathbf{I,1}(\mathbf{u}_\mathbf{I},\mathbf{u}_\mathbf{I}) + \mathbf{B}_\mathbf{I,2}(\mathbf{u}_\mathbf{I},\mathbf{u}_\mathbf{II}) + \mathbf{L}_\mathbf{I,1}\mathbf{u}_{\mathbf{I}} + \mathbf{L}_\mathbf{I,2}\mathbf{u}_{\mathbf{II}} + \mathbf{F}_{\mathbf{I}} \Big)dt + \boldsymbol{\Sigma}_{\mathbf{I}}(\mathbf{u}_{\mathbf{I}})d\mathbf{W}_{\mathbf{I}},\label{Conditional_Gaussian_System_Quadratic1}\\
    d\mathbf{u}_{\mathbf{II}} &= \Big(\mathbf{B}_\mathbf{II,1}(\mathbf{u}_\mathbf{I},\mathbf{u}_\mathbf{I}) + \mathbf{B}_\mathbf{II,2}(\mathbf{u}_\mathbf{I},\mathbf{u}_\mathbf{II}) + \mathbf{L}_\mathbf{II,1}\mathbf{u}_{\mathbf{I}} + \mathbf{L}_\mathbf{II,2}\mathbf{u}_{\mathbf{II}} + \mathbf{F}_{\mathbf{II}}\Big)dt + \boldsymbol{\Sigma}_{\mathbf{II}}(\mathbf{u}_{\mathbf{I}})d\mathbf{W}_{\mathbf{II}}, \label{Conditional_Gaussian_System_Quadratic2}
\end{align}
\end{subequations}

Now we explore the detailed forms of the energy-conserving nonlinear terms in \eqref{Conditional_Gaussian_System_Quadratic}.

We start with $\mathbf{B}_\mathbf{II,2}(\mathbf{u}_\mathbf{I},\mathbf{u}_\mathbf{II})$, which can be written as
\begin{equation}\label{Component_B22}
  \mathbf{B}_\mathbf{II,2}(\mathbf{u}_\mathbf{I},\mathbf{u}_\mathbf{II}) = \mathbf{S}_\mathbf{II}(\mathbf{u}_\mathbf{I})\mathbf{u}_\mathbf{II},\qquad \mbox{with}\qquad \mathbf{S}_\mathbf{II}(\mathbf{u}_\mathbf{I}) = \sum_{j=1}^{N_I}S_{\mathbf{II},j} u_{\mathbf{I},j},
\end{equation}
where each $S_{\mathbf{II},j}$ is a skew-symmetric matrix with $S^T_{\mathbf{II},j} = -S_{\mathbf{II},j}$ and $u_{\mathbf{I},j}$ is the $j$-th entry of $\mathbf{u}_{\mathbf{I}}$. The energy-conserving property is easily seen by multiplying $\mathbf{u}_{\mathbf{II}}$ to $\mathbf{B}_\mathbf{II,2}(\mathbf{u}_\mathbf{I},\mathbf{u}_\mathbf{II})$ in \eqref{Component_B22},
\begin{equation*}
  \mathbf{u}_{\mathbf{II}}\cdot\mathbf{B}_\mathbf{II,2}(\mathbf{u}_\mathbf{I},\mathbf{u}_\mathbf{II}) = \mathbf{u}_{\mathbf{II}}\cdot\mathbf{S}(\mathbf{u}_\mathbf{I})\cdot\mathbf{u}_\mathbf{II} = \sum_{j=1}^{N_I} u_{\mathbf{I},j}\cdot\Big(\mathbf{u}_{\mathbf{II}}\cdot S_j\cdot\mathbf{u}_{\mathbf{II}}\Big)=0,
\end{equation*}
due to the skew-symmetric property of $S_j$. In fact, $\mathbf{B}_\mathbf{II,2}(\mathbf{u}_\mathbf{I},\mathbf{u}_\mathbf{II})$ usually represents the internal oscillation with non-constant oscillation frequency that depends on $\mathbf{u}_\mathbf{I}$.

Next, $\mathbf{B}_\mathbf{I,2}(\mathbf{u}_\mathbf{I},\mathbf{u}_\mathbf{II})$ contains three components,
\begin{equation}\label{BI2_Appendix}
  \mathbf{B}_\mathbf{I,2}(\mathbf{u}_\mathbf{I},\mathbf{u}_\mathbf{II}) = \mathbf{B}^{\mathbf{1}}_\mathbf{I,2}(\mathbf{u}_\mathbf{I},\mathbf{u}_\mathbf{II}) + \mathbf{B}^{\mathbf{2}}_\mathbf{I,2}(\mathbf{u}_\mathbf{I},\mathbf{u}_\mathbf{II}) + \mathbf{B}^{\mathbf{3}}_\mathbf{I,2}(\mathbf{u}_\mathbf{I},\mathbf{u}_\mathbf{II}).
\end{equation}
One of the components in \eqref{BI2_Appendix}, say $\mathbf{B}^\mathbf{1}_\mathbf{I,2}(\mathbf{u}_\mathbf{I},\mathbf{u}_\mathbf{II})$, has its own  energy conservation, i.e.,
\begin{equation*}
  \mathbf{u}_\mathbf{I}\cdot \mathbf{B}^\mathbf{1}_\mathbf{I,2}(\mathbf{u}_\mathbf{I},\mathbf{u}_\mathbf{II}) = 0.
\end{equation*}
Here, $\mathbf{B}^\mathbf{1}_\mathbf{I,2}(\mathbf{u}_\mathbf{I},\mathbf{u}_\mathbf{II}) = \mathbf{S}_{\mathbf{I}}(\mathbf{u}_{\mathbf{I}})\mathbf{u}_{\mathbf{II}}$ and therefore
\begin{equation}\label{Form_S_II_Energy}
  \mathbf{u}_\mathbf{I}\cdot \mathbf{S}_{\mathbf{I}}(\mathbf{u}_{\mathbf{I}})\mathbf{u}_{\mathbf{II}} = 0,
\end{equation}
where each column of $\mathbf{S}_{\mathbf{I}}(\mathbf{u}_{\mathbf{I}})$ is given by
\begin{equation}\label{Form_S_II}
  \mathbf{S}_{\mathbf{I},j}(\mathbf{u}_{\mathbf{I}}) = S_{\mathbf{I},j}\mathbf{u}_{\mathbf{I}},
\end{equation}
with $S_{\mathbf{I},j}$ being a skew-symmetric matrix. Thus, with \eqref{Form_S_II} in hand, \eqref{Form_S_II_Energy} becomes
\begin{equation*}
  \sum_{j=1}^{N_\mathbf{II}}\Big(\mathbf{u}_\mathbf{I}\cdot S_{\mathbf{I},j}\cdot\mathbf{u}_{\mathbf{I}}\Big)\mathbf{u}_{\mathbf{II},j} = 0,
\end{equation*}
where $\mathbf{u}_{\mathbf{II},j}$ is the $j$-th entry of $\mathbf{u}_{\mathbf{II}}$.

The other two components of $\mathbf{B}_\mathbf{I,2}(\mathbf{u}_\mathbf{I},\mathbf{u}_\mathbf{II})$ in \eqref{Conditional_Gaussian_System_Quadratic1} involve the interactions with $\mathbf{B}_\mathbf{II,1}(\mathbf{u}_\mathbf{I},\mathbf{u}_\mathbf{I}) = \mathbf{B}^{\mathbf{2}}_\mathbf{II,1}(\mathbf{u}_\mathbf{I},\mathbf{u}_\mathbf{I}) + \mathbf{B}^{\mathbf{3}}_\mathbf{II,1}(\mathbf{u}_\mathbf{I},\mathbf{u}_\mathbf{I})$ in \eqref{Conditional_Gaussian_System_Quadratic2}. On one hand, the energy-conserving property in the following two terms is obvious,
\begin{subequations}\label{Dyad_Interaction}
\begin{align}
  \mathbf{B}^{\mathbf{2}}_\mathbf{I,2}(\mathbf{u}_\mathbf{I},\mathbf{u}_\mathbf{II}) &= \sum_{j=1}^{N_I}\Gamma_j \mathbf{u}_{\mathbf{I},j}\mathbf{u}_\mathbf{II},\label{Dyad_Interaction1}\\
  \mathbf{B}^{\mathbf{2}}_\mathbf{II,1}(\mathbf{u}_\mathbf{I},\mathbf{u}_\mathbf{I}) & = -\sum_{j=1}^{N_I} \Gamma_j^T \mathbf{u}_{\mathbf{I}}^2,\label{Dyad_Interaction2}
\end{align}
\end{subequations}
where each $\Gamma_j$ is a $N_\mathbf{I}\times N_\mathbf{II}$ matrix, $\mathbf{u}_{\mathbf{I},j}$ is the $j$-th entry of $\mathbf{u}_{\mathbf{I}}$ and $\mathbf{u}_{\mathbf{I}}^2$ is a vector of size $N_\mathbf{I}\times1$ with the $j$-th entry being $\mathbf{u}_{\mathbf{I},j}^2$. On the other hand, the remaining two terms $\mathbf{B}^{\mathbf{3}}_\mathbf{I,2}(\mathbf{u}_\mathbf{I},\mathbf{u}_\mathbf{II})$ and $\mathbf{B}^{\mathbf{3}}_\mathbf{II,1}(\mathbf{u}_\mathbf{I},\mathbf{u}_\mathbf{I})$ are similar to those in \eqref{Dyad_Interaction} but deal with the cross-interactions between different components of $\mathbf{u}_{\mathbf{I}}$ such as replacing $\mathbf{u}_{\mathbf{I}}^2$ by $\mathbf{u}_{\mathbf{I},j_1}\mathbf{u}_{\mathbf{I},j_2}$ in \eqref{Dyad_Interaction2}. To this end, we define the following
\begin{equation}\label{Dyad_Interaction2}
  \mathbf{G}(\mathbf{u}_{\mathbf{I}}) = \sum_{j=1}^{N_I} G_j \mathbf{u}_{\mathbf{I},j},
\end{equation}
which satisfies
\begin{equation}\label{Dyad_Interaction2_condition}
  \mathbf{u}_\mathbf{I}\cdot\mathbf{G}(\mathbf{u}_{\mathbf{I}})\mathbf{u}_\mathbf{II} - \mathbf{u}_\mathbf{II} \cdot\mathbf{G}^T(\mathbf{u}_{\mathbf{I}})\mathbf{u}_\mathbf{I} = 0
\end{equation}
In fact, \eqref{Dyad_Interaction}--\eqref{Dyad_Interaction2_condition} are important for generating the intermittent instability, where $\mathbf{u}_{\mathbf{II}}$ plays the role of both damping and anti-damping for the dynamics of $\mathbf{u}_{\mathbf{I}}$.

Finally, $\mathbf{B}_\mathbf{I,1}(\mathbf{u}_\mathbf{I},\mathbf{u}_\mathbf{I})$ involves any iterations between $\mathbf{u}_{\mathbf{I}}$ and itself that satisfies
\begin{equation}\label{BI1}
  \mathbf{u}_{\mathbf{I}}\cdot\mathbf{B}_\mathbf{I,1}(\mathbf{u}_\mathbf{I},\mathbf{u}_\mathbf{I}) = 0.
\end{equation}

Therefore, with \eqref{Component_B22}--\eqref{BI1} in hand, the conditional Gaussian system \eqref{Conditional_Gaussian_System_Quadratic} has the following form,
\begin{subequations}\label{Conditional_Gaussian_System_Details}
\begin{align}
    d\mathbf{u}_{\mathbf{I}} &= \Big(\mathbf{B}_\mathbf{I,1}(\mathbf{u}_\mathbf{I},\mathbf{u}_\mathbf{I}) + \sum_{j=1}^{N_\mathbf{I}}\Gamma_j u_{\mathbf{I},j}\mathbf{u}_\mathbf{II} + \mathbf{S}_{\mathbf{I}}(\mathbf{u}_{\mathbf{I}})\mathbf{u}_{\mathbf{II}} + \mathbf{G}(\mathbf{u}_{\mathbf{I}})\mathbf{u}_{\mathbf{II}} +\mathbf{L}_\mathbf{I,1}\mathbf{u}_{\mathbf{I}} \notag\\
     &\qquad\qquad + \mathbf{L}_\mathbf{I,2}\mathbf{u}_{\mathbf{II}} + \mathbf{F}_{\mathbf{I}} \Big)dt + \boldsymbol{\Sigma}_{\mathbf{I}}(\mathbf{u}_{\mathbf{I}})d\mathbf{W}_{\mathbf{I}},\label{Conditional_Gaussian_System_Details1}\\
    d\mathbf{u}_{\mathbf{II}} &= \Big(  \mathbf{S}_\mathbf{II}(\mathbf{u}_\mathbf{I})\mathbf{u}_\mathbf{II} - \sum_{j=1}^{N_\mathbf{I}} \Gamma_j^T \mathbf{u}_{\mathbf{I}}^2 + \mathbf{L}_\mathbf{II,1}\mathbf{u}_{\mathbf{I}} + \mathbf{L}_\mathbf{II,2}\mathbf{u}_{\mathbf{II}} - \mathbf{G}^T(\mathbf{u}_{\mathbf{I}})\mathbf{u}_{\mathbf{I}}\notag\\
      &\qquad\qquad\qquad\qquad + \mathbf{F}_{\mathbf{II}}\Big)dt + \boldsymbol{\Sigma}_{\mathbf{II}}(\mathbf{u}_{\mathbf{I}})d\mathbf{W}_{\mathbf{II}}. \label{Conditional_Gaussian_System_Details2}
\end{align}
\end{subequations}

In addition to the four examples introduced in Section \ref{Sec:Systems}, another representative example is the advective two-layer Lorenz-96 model \citep{lee2017multiscale}
\begin{equation}\label{model}
	\centering
	\begin{split}
	\frac{du_i}{dt}&=u_{i-1}(u_{i+1}-u_{i-2})+\lambda\sum_{j=1}^{J}v_{i,j}-d_1u_i+F+\sigma_{u_i}\dot{W}_{u_i}, \quad i=1,2,...,I\\
	\frac{dv_{i,j}}{dt}&=\frac{a_Lu_i+a_Sv_{i,j+1}}{\epsilon}(v_{i,j-1}-v_{i,j+2})-\lambda u_i-d_2v_{i,j}, \quad j=1,2,...,J\\
	\end{split}
\end{equation}
where $u_i$ is periodic in $i$ and $v_{ij}$ is periodic in both $i$ and $j$. This model is developed as a test model for multiscale data assimilation methods. As a special case of this model, the model with $a_S=0$, which is a slow-fast system, fits into the conditional Gaussian model framework \eqref{Conditional_Gaussian_System} with $\mathbf{u}_\mathbf{I}=\{u_i\}$ and $\mathbf{u}_\mathbf{II}=\{v_{i,j}\}$. In \citep{lee2017multiscale}, it is shown that the model with appropriate parameters shows non-Gaussian fat-tails in both the observed and hidden variables. As the dimension $I$ and $J$ can be manipulated, this model is a good candidate for the uncertainty quantification and recovering PDFs of high-dimensional systems using the conditional Gaussianity.

\section{Kernel Density Estimation with a Solve-The-Equation Bandwidth}\label{Appendix:B}
Here we summarize the basic idea of the kernel density estimation method that is adopted in this article to solve the distribution $p(\mathbf{u}_{\mathbf{I}})$. We first discuss the idea based on 1D case. Then we describe the multi-dimensional case.

Assume we have $L$ observational data points $u^i, i,\ldots, L$ at a fixed time. The approximation of the unknown 1D PDF $p(u)$ is given by the kernel density estimator
\begin{equation}\label{Kernel_Method}
  \hat{p}_h(u) = \frac{1}{L}\sum_{i=1}^L K_h(u-u^i) = \frac{1}{Lh}\sum_{i=1}^L K\left(\frac{u-u^i}{h}\right),
\end{equation}
where $K(\cdot)$ is the kernel with $K>0$ and $\int K dx= 1$ and $h$ is the so-called the bandwidth that is a crucial parameter for the kernel density estimation. The kernel $K(\cdot)$ has different choices,  and a Gaussian kernel is adopted in the main text.

One of the most commonly used criteria for selecting $h$ is to minimize the mean integrated squared error (MISE):
\begin{equation*}
  \mbox{MISE}(h) = E\left[\int(\hat{p}_h(u)-p(u))^2dx\right].
\end{equation*}
Under the weak assumptions on $p$ and $K$ \cite{rosenblatt1956remarks, parzen1962estimation}, MISE$(h)$ = AMISE$(h)+o(1/(Lh)+h^4)$, where AMISE is the asymptotic MISE and it is given by
\begin{equation*}
  \mbox{AMISE}(h) = \frac{R(K)}{Lh} + \frac{1}{4}m_2(K)^2h^4R(p^{\prime\prime}),
\end{equation*}
with $R(K) = \int K(u)^2 du$, $m_2(K) = \int u^2 K(u)du$ and $p^{\prime\prime}$ being the second derivative of $p$. The minimum of the AMISE is the solution to the following differential equation
\begin{equation*}
  \frac{\partial}{\partial h}\mbox{AMISE}(h) = -\frac{R(K)}{Lh^2} + m_2(K)^2h^3R(p^{\prime\prime}) = 0,
\end{equation*}
the solution of which is given by
\begin{equation}\label{AMISE}
  h_{\mbox{\tiny AMISE}} = \frac{R(K)^{1/5}}{m_2(K)^{2/5}\,R(p^{\prime\prime})^{1/5}\,L^{1/5}}.
\end{equation}
Unfortunately, there is no explicit solution for $h_{\mbox{\tiny AMISE}}$ in \eqref{AMISE} that applies for a general density function $p(u)$ since \eqref{AMISE} involves the  unknown density function $p$ and its second derivative $p^{\prime\prime}$. Under the assumption that the true density is Gaussian, the rule-of-thumb bandwidth estimator can be adopted for solving the optimal bandwidth with explicit expressions. However, the typical PDFs in turbulent dynamical systems are far from Gaussian and the rule-of-thumb bandwidth estimators fail to capture the non-Gaussian features. A practical approximation is to use the ``solve-the-equation plug-in principle'', namely using $\hat{p}^{\prime\prime}$ to replace $p^{\prime\prime}$ in \eqref{AMISE} to solve $h_{\mbox{\tiny AMISE}}$  \cite{botev2010kernel, raykar2006fast, jones1996brief, alexandre2008solve}. The one we adopted in the main text is from \cite{botev2010kernel}, which is free from the arbitrary normal reference rules and its skill has been shown in recovering the highly non-Gaussian PDF.

For multi-dimensional case, the kernel density estimation is defined as
\begin{equation*}
  \hat{p}_{\mathbf{H}}(\mathbf{u}) = \frac{1}{L}\sum_{i=1}^L K_{\mathbf{H}}(\mathbf{u}-\mathbf{u}^i),
\end{equation*}
where $\mathbf{u}=(u_1,\ldots,u_d)^T$ and $\mathbf{H}$ is the bandwidth $d\times d$ matrix that is symmetric and positive definite. The kernel function is a multivariate density. Again, as in the main text, we use a multivariate normal kernel density,
\begin{equation*}
  K_{\mathbf{H}}(\mathbf{u}) = (2\pi)^{-d/2}|\mathbf{H}|^{-1/2}e^{-\frac{1}{2}\mathbf{u}^T\mathbf{H}^{-1}\mathbf{u}}.
\end{equation*}
There are different ways of forming the kernel matrix $\mathbf{H}$. For example, $\mathbf{H}$ can be assumed to be a full matrix, or simplified as a diagonal matrix or even a multiplier of a unit matrix. Here, we adopt a diagonal matrix for $\mathbf{H}$. This greatly reduces the computational costs while remains the results with reasonable accuracy. Nevertheless, the optimal bandwidth in the $(i,i)$-th diagonal entry of $\mathbf{H}$ does not equal to the optimal bandwidth of the corresponding 1D problem, since the minimization of the MISE in the target function here involves the multi-dimensional density.

\section{Convergence of Gaussian Mixture Distribution with and without Off-Diagonal Block Components in Each Component}\label{Appendix:C}
In this Appendix, we show that the Gaussian mixture with each component given by \eqref{MeanCov} in the efficient statistically accurate algorithm (Proposition \ref{Prop:Both}) that contains a block diagonal covariance matrix will converge to the same distribution with a Gaussian mixture that the off-diagonal block components are nonzero.

To this end, consider the two distributions as follows:
\begin{equation}\label{pxy}
\begin{split}
  p(\mathbf{u}_{\mathbf{I}},\mathbf{u}_{\mathbf{II}}) &= \lim_{L\to\infty}\frac{1}{L}\sum_{i=1}^L p_i(\mathbf{u}_{\mathbf{I}},\mathbf{u}_{\mathbf{II}}),\\
  \tilde{p}(\mathbf{u}_{\mathbf{I}},\mathbf{u}_{\mathbf{II}}) &= \lim_{L\to\infty}\frac{1}{L}\sum_{i=1}^L \tilde{p}_i(\mathbf{u}_{\mathbf{I}},\mathbf{u}_{\mathbf{II}}),
\end{split}
\end{equation}
where for each $i = 1,\ldots L$,
\begin{equation}\label{pj}
  p_i\sim \mathcal{N}(\boldsymbol{\mu}_i,\Sigma_i),\qquad\mbox{and}\qquad
  \tilde{p}_i\sim \mathcal{N}(\boldsymbol{\mu}_i,\tilde{\Sigma}_i).
\end{equation}
Here, $\Sigma_i$ is a full matrix while $\tilde{\Sigma}_i$ is a block diagonal matrix as in \eqref{MeanCov},
\begin{equation}\label{Sigmaj}
  \Sigma_i = \left(
               \begin{array}{cc}
                 \Sigma_{i,11} & \Sigma_{i,12} \\
                 \Sigma_{i,21} & \Sigma_{i,22} \\
               \end{array}
             \right),\qquad\mbox{and}\qquad
  \tilde{\Sigma}_i = \left(
               \begin{array}{cc}
                 \tilde{\Sigma}_{i,11} & 0\\
                 0 & \Sigma_{i,22} \\
               \end{array}\right)
\end{equation}
The difference between $\Sigma_{j,11}$ and $\tilde{\Sigma}_{j,11}$ is allowed since the bandwidth in the kernel estimation can be different. But it is required that the decay rates of the elements in $\Sigma_{j,11}$ and $\tilde{\Sigma}_{j,11}$ as a function of $L$ have the same order, i.e., both being $L^{-\delta}$ with $\delta>0$. The other part $\Sigma_{i,22}$ is from the conditional Gaussian posterior distribution and is assumed to be the same in $\Sigma_i$ and $\tilde{\Sigma}_i$.

Now we make use of the characteristic functions to show that the error between  $p(\mathbf{u}_\mathbf{I},\mathbf{u}_\mathbf{II})$ and $\tilde{p}(\mathbf{u}_\mathbf{I},\mathbf{u}_\mathbf{II})$ goes to zero as the number of observational trajectories $L$ goes to infinity. Since the characteristic function and the PDF have one-to-one correspondence, it is sufficient to show that the error in the associated characteristic functions goes to zero.
The definition of a $k$-dimension vector $\mathbf{z}$ is given by
\begin{equation*}
  \psi_\mathbf{z}(\mathbf{w}) = \mathbb{E}\big[\exp(\imath\mathbf{w}^T\mathbf{z})\big],
\end{equation*}
where $\mathbf{w}\in \mathbb{R}^k$ and $\imath$ is the imaginary unit. Particularly, if $\mathbf{z}$ is a multi-dimensional Gaussian variable $\mathcal{N}(\boldsymbol{\mu},\Sigma)$, then its characteristic function is given by
\begin{equation}\label{Charac_Gaussian}
  \psi_\mathbf{z}(\mathbf{w}) = e^{\imath\mathbf{w}^T\boldsymbol{\mu} - \frac{1}{2}\mathbf{w}^T\Sigma\mathbf{w}}.
\end{equation}

\begin{proposition}
Denote the characteristic functions of $p(\mathbf{u}_{\mathbf{I}},\mathbf{u}_{\mathbf{II}})$ and $\tilde{p}(\mathbf{u}_{\mathbf{I}},\mathbf{u}_{\mathbf{II}})$ by $\psi(\mathbf{w})$ and $\tilde\psi(\mathbf{w})$, respectively. With a sufficiently large $L$, the following result holds:
\begin{equation}\label{Charac_error}
  |\psi(\mathbf{w})-\tilde\psi(\mathbf{w})| \leq DL^{-\delta},
\end{equation}
where $D$ is a constant and $L^{-\delta}$ is the decay rate of the bandwidth as a function of $L$ in the kernel density estimation.
\end{proposition}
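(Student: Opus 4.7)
The plan is to work directly with characteristic functions rather than the densities themselves, since Gaussian characteristic functions have an explicit closed form via \eqref{Charac_Gaussian} while Gaussian mixtures do not admit a convenient pointwise comparison. First I would insert \eqref{Charac_Gaussian} into each mixture component of \eqref{pxy}--\eqref{pj}, which turns the two characteristic functions into
\[
\psi(\mathbf{w}) = \lim_{L\to\infty}\frac{1}{L}\sum_{i=1}^L e^{\imath\mathbf{w}^T\boldsymbol{\mu}_i-\tfrac{1}{2}\mathbf{w}^T\Sigma_i\mathbf{w}},\qquad \tilde\psi(\mathbf{w}) = \lim_{L\to\infty}\frac{1}{L}\sum_{i=1}^L e^{\imath\mathbf{w}^T\boldsymbol{\mu}_i-\tfrac{1}{2}\mathbf{w}^T\tilde\Sigma_i\mathbf{w}}.
\]
Because the means $\boldsymbol{\mu}_i$ agree component by component, the oscillating factor $e^{\imath\mathbf{w}^T\boldsymbol{\mu}_i}$, which has modulus one, can be pulled out of the difference and discarded for bounding purposes, so the whole problem reduces to estimating a single scalar, the difference of the two real Gaussian damping exponentials $e^{-\tfrac12\mathbf{w}^T\Sigma_i\mathbf{w}}-e^{-\tfrac12\mathbf{w}^T\tilde\Sigma_i\mathbf{w}}$, uniformly in $i$.

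The next step is to apply the elementary inequality $|e^{-a}-e^{-b}|\leq|a-b|$ for nonnegative $a,b$, which converts the exponential difference into the difference of the two quadratic forms. Splitting $\mathbf{w}=(\mathbf{w}_1,\mathbf{w}_2)^T$ according to the $(\mathbf{u}_\mathbf{I},\mathbf{u}_\mathbf{II})$ decomposition and using \eqref{Sigmaj} to cancel the common $\Sigma_{i,22}$ block, this difference becomes
\[
\mathbf{w}_1^T(\Sigma_{i,11}-\tilde\Sigma_{i,11})\mathbf{w}_1 + 2\mathbf{w}_1^T\Sigma_{i,12}\mathbf{w}_2.
\]
The first term is directly of order $L^{-\delta}$ by the hypothesis that both $\Sigma_{i,11}$ and $\tilde\Sigma_{i,11}$ decay at the bandwidth rate, and their difference inherits that same rate.

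The hard part, and the main obstacle, is controlling the off-diagonal block $\Sigma_{i,12}$, for which no explicit scaling was postulated. Here I would exploit positive semi-definiteness of $\Sigma_i$: the Schur-complement condition $\Sigma_{i,12}\Sigma_{i,22}^{-1}\Sigma_{i,12}^T\preceq\Sigma_{i,11}$ implies the operator-norm estimate $\|\Sigma_{i,12}\|\leq\|\Sigma_{i,11}\|^{1/2}\|\Sigma_{i,22}\|^{1/2}$. Since $\Sigma_{i,22}$ is the conditional Gaussian posterior covariance from \eqref{CG_Result} and is uniformly bounded in $L$, this transfers the bandwidth decay of $\Sigma_{i,11}$ to the off-diagonal block, up to a square root that the constant $D$ in \eqref{Charac_error} must absorb. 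Combining the diagonal and off-diagonal contributions, moving the supremum over $i$ outside the empirical average, and lumping polynomial factors in $|\mathbf{w}|$ into $D=D(\mathbf{w})$, yields the pointwise estimate \eqref{Charac_error}. A closing appeal to L\'evy's continuity theorem then upgrades this characteristic-function bound to the claimed convergence $\tilde{p}\to p$ at the level of distributions.
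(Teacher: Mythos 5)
Your proposal is correct and follows essentially the same route as the paper's proof in \ref{Appendix:C}: compare the mixtures component by component at the level of characteristic functions, strip the unit-modulus phase $e^{\imath\mathbf{w}^T\boldsymbol\mu_i}$, reduce to the difference of the real Gaussian exponentials, and bound that by the difference of the quadratic forms, which inherits the bandwidth decay (your Lipschitz inequality $|e^{-a}-e^{-b}|\leq|a-b|$ plays the role of the paper's factor-out-and-Taylor-expand step in \eqref{diff2}--\eqref{diff4}). The one place you diverge is the off-diagonal block: the paper simply writes $(\Sigma_{i,12})_{kl}=\rho_{kl}\sigma_{k,\mathbf{x}}\sigma_{l,\mathbf{y}}$ and reads off the $O(L^{-\delta})$ rate from the bandwidth standard deviations, whereas you derive the same rate from positive semi-definiteness via the Schur-complement bound $\|\Sigma_{i,12}\|\leq\|\Sigma_{i,11}\|^{1/2}\|\Sigma_{i,22}\|^{1/2}$; your version is more self-contained (it needs no explicit correlation decomposition, only boundedness of the posterior covariance $\Sigma_{i,22}$), and note that the ``square root'' is not absorbed into the constant but is precisely what makes the cross block, at rate $L^{-\delta}$, dominate the diagonal-block difference of order $L^{-2\delta}$ and set the exponent in \eqref{Charac_error}.
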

\begin{proof}
The characteristic functions of $p(\mathbf{u}_{\mathbf{I}},\mathbf{u}_{\mathbf{II}})$ and $\tilde{p}(\mathbf{u}_{\mathbf{I}},\mathbf{u}_{\mathbf{II}})$ are given respectively by
\begin{equation*}
  \psi(\mathbf{w}) = \lim_{L\to\infty}\frac{1}{L}\sum_{i=1}^L\psi_i(\mathbf{w})\qquad\mbox{and}\qquad
  \tilde\psi(\mathbf{w})= \lim_{L\to\infty}\frac{1}{L}\sum_{i=1}^L\tilde\psi_i(\mathbf{w}).
\end{equation*}
The error between $\psi(\mathbf{w})$ and $\tilde\psi(\mathbf{w})$ yields
\begin{equation}\label{psidiff}
\begin{split}
  |\psi(\mathbf{w})-\tilde\psi(\mathbf{w})| &= \lim_{L\to\infty}\left|\frac{1}{L}\sum_{i=1}^L\left(\psi_i(\mathbf{w})-\tilde\psi_i(\mathbf{w})\right)\right|\\
  &\leq\lim_{L\to\infty}\frac{1}{L}\sum_{i=1}^L\left|\psi_i(\mathbf{w})-\tilde\psi_i(\mathbf{w})\right|.
\end{split}
\end{equation}
Below, we focus on the error in each Gaussian component $\left|\psi_i(\mathbf{w})-\tilde\psi_i(\mathbf{w})\right|$. For the simplicity of notation, we omit the subscript $i$ in the mean and covariance, namely we use the notations
\begin{equation*}
  \boldsymbol\mu :=\boldsymbol\mu_i,\qquad \Sigma:=\Sigma_i,\qquad\mbox{and}\qquad \tilde{\Sigma}:=\tilde\Sigma_i.
\end{equation*}
In light of the explicit expression of the characteristic function associated with the multivariate Gaussian in \eqref{Charac_Gaussian}, we have
\begin{align}
  \left|\psi_i(\mathbf{w})-\tilde\psi_i(\mathbf{w})\right| &=\left|e^{\imath\mathbf{w}^T\boldsymbol\mu-\frac{1}{2}\mathbf{w}^T\Sigma\mathbf{w}} - e^{\imath\mathbf{w}^T\boldsymbol\mu-\frac{1}{2}\mathbf{w}^T\tilde\Sigma\mathbf{w}}\right| \notag\\
  & = \left|e^{\imath\mathbf{w}^T\boldsymbol\mu}\right|\left|e^{-\frac{1}{2}\mathbf{w}^T\Sigma\mathbf{w}} - e^{-\frac{1}{2}\mathbf{w}^T\tilde\Sigma\mathbf{w}}\right| \notag\\
  & = \left|e^{-\frac{1}{2}\mathbf{w}^T\Sigma\mathbf{w}} - e^{-\frac{1}{2}\mathbf{w}^T\tilde\Sigma\mathbf{w}}\right| \label{diff1}\\
  & = \left|e^{-\frac{1}{2}\mathbf{w}^T\Sigma\mathbf{w}}\left( e^{-\frac{1}{2}\mathbf{w}^T(\tilde\Sigma-\Sigma)\mathbf{w}}-1\right)\right|\notag\\
  & \leq \left| e^{-\frac{1}{2}\mathbf{w}^T(\tilde\Sigma-\Sigma)\mathbf{w}}-1\right|.\label{diff2}
\end{align}
Since both the covariance matrices $\Sigma$ and $\tilde\Sigma$ are positive definite, according to \eqref{diff1} there exists a large positive number $M$ such that when $|\mathbf{w}|>M$,
\begin{equation*}
  \left|\psi_i(\mathbf{w})-\tilde\psi_i(\mathbf{w})\right| \to 0.
\end{equation*}
On the other hand, in light of \eqref{Sigmaj}, we have
\begin{equation*}
  \tilde{\Sigma}-\Sigma = \left(
                            \begin{array}{cc}
                              \Sigma_{11}-\tilde{\Sigma}_{11} & \Sigma_{12} \\
                              \Sigma_{21} & 0 \\
                            \end{array}
                          \right),
\end{equation*}
where each component has the following form
\begin{equation*}
\begin{split}
    (\Sigma_{11}-\tilde{\Sigma}_{11})_{k_1k_2} &= \sigma_{k_1,\mathbf{x}}\sigma_{k_2,\mathbf{x}} - \tilde\sigma_{k_1,\mathbf{x}}\tilde\sigma_{k_2,\mathbf{x}},\\
    (\Sigma_{12})_{kl} &= \rho_{kl}\sigma_{k,\mathbf{x}}\sigma_{l,\mathbf{y}}.
\end{split}
\end{equation*}
All the entries of $\Sigma_{11}-\tilde{\Sigma}_{11}$ are bounded by $\tilde{c}L^{-2\delta}$ and those of  $\Sigma_{12}$ and $\Sigma_{21}$ are bounded by $\tilde{c}L^{-\delta}$. For a fixed $\mathbf{w}$ with $|\mathbf{w}|\leq M$, there exists an $L$ such that
\begin{equation}\label{diff3}
  \left|\mathbf{w}^T(\tilde\Sigma-\Sigma)\mathbf{w}\right|\leq \tilde{c}_0L^{-\delta},
\end{equation}
where $|\cdot|$ is the absolute value not the determinant.
Adopting the Taylor expansion of \eqref{diff2} and making use of \eqref{diff3} yields
\begin{equation}\label{diff4}
  \left|\psi_i(\mathbf{w})-\tilde\psi_i(\mathbf{w})\right| \leq c_1L^{-\delta}.
\end{equation}
With \eqref{diff4} in hand, it is straightforward to arrive at the conclusion with respect to \eqref{psidiff} that
\begin{equation}\label{diff5}
  \left|\psi(\mathbf{w})-\tilde\psi(\mathbf{w})\right| \leq DL^{-\delta}
\end{equation}
\end{proof}

It can be shown that the bandwidth selector $\mathbf{H}$ has $\mathbf{H}=O(n^{-2/(N_\mathbf{I}+4)})$ elementwise \cite{wand1995kernel}. If we denote $\rho_{ij}\sigma_{i}\sigma_{j}$ as the $(i,j)$-component of $\mathbf{H}$, then $\sigma_{i}=O(n^{-1/(N_\mathbf{I}+4)})$ for all $i=1,\ldots,d_1$, which implies $\delta = \frac{1}{N_\mathbf{I}+4}$. Therefore, \eqref{diff5} becomes
\begin{equation}\label{diff6}
  \left|\psi(\mathbf{w})-\tilde\psi(\mathbf{w})\right| \leq Dn^{-\frac{1}{N_\mathbf{I}+4}}.
\end{equation}
\clearpage
\section*{References}


\begin{thebibliography}{10}
\expandafter\ifx\csname url\endcsname\relax
  \def\url#1{\texttt{#1}}\fi
\expandafter\ifx\csname urlprefix\endcsname\relax\def\urlprefix{URL }\fi
\expandafter\ifx\csname href\endcsname\relax
  \def\href#1#2{#2} \def\path#1{#1}\fi

\bibitem{gardiner1985stochastic}
C.~W. Gardiner, Stochastic methods, Springer-Verlag, Berlin--Heidelberg--New
  York--Tokyo, 1985.

\bibitem{risken1989fokker}
H.~Risken, The {F}okker-{P}lanck equation. {M}ethods of solution and
  applications, vol. 18 of, Springer Series in Synergetics.

\bibitem{majda2016introduction}
A.~Majda, Introduction to turbulent dynamical systems in complex systems,
  Frontiers in Applied Dynamical Systems: Reviews and Tutorials 5, Springer,
  2016.

\bibitem{lindner2004effects}
B.~Lindner, J.~Garc{\i}a-Ojalvo, A.~Neiman, L.~Schimansky-Geier, Effects of
  noise in excitable systems, Physics reports 392~(6) (2004) 321--424.

\bibitem{chen2014predicting}
N.~Chen, A.~J. Majda, D.~Giannakis, Predicting the cloud patterns of the
  {M}adden-{J}ulian {O}scillation through a low-order nonlinear stochastic
  model, Geophysical Research Letters 41~(15) (2014) 5612--5619.

\bibitem{cousins2014quantification}
W.~Cousins, T.~P. Sapsis, Quantification and prediction of extreme events in a
  one-dimensional nonlinear dispersive wave model, Physica D: Nonlinear
  Phenomena 280 (2014) 48--58.

\bibitem{ghil2011extreme}
M.~Ghil, P.~Yiou, S.~Hallegatte, B.~Malamud, P.~Naveau, A.~Soloviev,
  P.~Friederichs, V.~Keilis-Borok, D.~Kondrashov, V.~Kossobokov, et~al.,
  Extreme events: dynamics, statistics and prediction, Nonlinear Processes in
  Geophysics 18~(3) (2011) 295--350.

\bibitem{palmer2002quantifying}
T.~Palmer, J.~R{\"a}is{\"a}nen, Quantifying the risk of extreme seasonal
  precipitation events in a changing climate, Nature 415~(6871) (2002)
  512--514.

\bibitem{mohamad2016probabilistic}
M.~A. Mohamad, T.~P. Sapsis, Probabilistic response and rare events in
  {M}athieu's equation under correlated parametric excitation, Ocean
  Engineering 120 (2016) 289--297.

\bibitem{cousins2016reduced}
W.~Cousins, T.~P. Sapsis, Reduced-order precursors of rare events in
  unidirectional nonlinear water waves, Journal of Fluid Mechanics 790 (2016)
  368--388.

\bibitem{thual2016simple}
S.~Thual, A.~J. Majda, N.~Chen, S.~N. Stechmann, Simple stochastic model for
  {E}l {N}i{\~n}o with westerly wind bursts, Proceedings of the National
  Academy of Sciences (2016) 201612002.

\bibitem{majda2012lessons}
A.~J. Majda, M.~Branicki, Lessons in uncertainty quantification for turbulent
  dynamical systems, Discrete Cont. Dyn. Systems 32~(9) (2012) 3133--3221.

\bibitem{branicki2012quantifying}
M.~Branicki, A.~J. Majda, Quantifying uncertainty for predictions with model
  error in non-gaussian systems with intermittency, Nonlinearity 25~(9) (2012)
  2543.

\bibitem{greco2009statistical}
A.~Greco, W.~Matthaeus, S.~Servidio, P.~Chuychai, P.~Dmitruk, Statistical
  analysis of discontinuities in solar wind {ACE} data and comparison with
  intermittent {MHD} turbulence, The Astrophysical Journal Letters 691~(2)
  (2009) L111.

\bibitem{neelin2010long}
J.~D. Neelin, B.~R. Lintner, B.~Tian, Q.~Li, L.~Zhang, P.~K. Patra, M.~T.
  Chahine, S.~N. Stechmann, Long tails in deep columns of natural and
  anthropogenic tropospheric tracers, Geophysical Research Letters 37~(5).

\bibitem{huang2001application}
H.-L. Huang, P.~Antonelli, Application of principal component analysis to
  high-resolution infrared measurement compression and retrieval, Journal of
  Applied Meteorology 40~(3) (2001) 365--388.

\bibitem{pichler2013numerical}
L.~Pichler, A.~Masud, L.~A. Bergman, Numerical solution of the
  {F}okker--{P}lanck equation by finite difference and finite element
  methods¡ªa comparative study, in: Computational Methods in Stochastic
  Dynamics, Springer, 2013, pp. 69--85.

\bibitem{kumar2006solution}
P.~Kumar, S.~Narayanan, Solution of {F}okker-{P}lanck equation by finite
  element and finite difference methods for nonlinear systems, Sadhana 31~(4)
  (2006) 445--461.

\bibitem{spencer1993numerical}
B.~Spencer, L.~Bergman, On the numerical solution of the {F}okker-{P}lanck
  equation for nonlinear stochastic systems, Nonlinear Dynamics 4~(4) (1993)
  357--372.

\bibitem{robert2004monte}
C.~P. Robert, Monte {C}arlo methods, Wiley Online Library, 2004.

\bibitem{daum2003curse}
F.~Daum, J.~Huang, Curse of dimensionality and particle filters, in: Aerospace
  Conference, 2003. Proceedings. 2003 IEEE, Vol.~4, IEEE, 2003, pp.
  4\_1979--4\_1993.

\bibitem{ackerman2010fat}
F.~Ackerman, E.~A. Stanton, R.~Bueno, Fat tails, exponents, extreme
  uncertainty: {S}imulating catastrophe in {DICE}, Ecological Economics 69~(8)
  (2010) 1657--1665.

\bibitem{majda1999models}
A.~J. Majda, I.~Timofeyev, E.~V. Eijnden, Models for stochastic climate
  prediction, Proceedings of the National Academy of Sciences 96~(26) (1999)
  14687--14691.

\bibitem{majda2001mathematical}
A.~J. Majda, I.~Timofeyev, E.~Vanden~Eijnden, A mathematical framework for
  stochastic climate models, Communications on Pure and Applied Mathematics
  54~(8) (2001) 891--974.

\bibitem{majda2006stochastic}
A.~Majda, I.~Timofeyev, E.~Vanden-Eijnden, Stochastic models for selected slow
  variables in large deterministic systems, Nonlinearity 19~(4) (2006) 769.

\bibitem{er2011methodology}
G.-K. Er, Methodology for the solutions of some reduced {F}okker-{P}lanck
  equations in high dimensions, Annalen der Physik 523~(3) (2011) 247--258.

\bibitem{er2012state}
G.-K. Er, V.~P. Iu, State-space-split method for some generalized
  {F}okker-{P}lanck-{K}olmogorov equations in high dimensions, Physical Review
  E 85~(6) (2012) 067701.

\bibitem{von2000calculation}
U.~von Wagner, W.~V. Wedig, On the calculation of stationary solutions of
  multi-dimensional {F}okker--{P}lanck equations by orthogonal functions,
  Nonlinear Dynamics 21~(3) (2000) 289--306.

\bibitem{sun2014numerical}
Y.~Sun, M.~Kumar, Numerical solution of high dimensional stationary
  {F}okker--{P}lanck equations via tensor decomposition and {C}hebyshev
  spectral differentiation, Computers \& Mathematics with Applications 67~(10)
  (2014) 1960--1977.

\bibitem{chen2016filtering}
N.~Chen, A.~J. Majda, Filtering nonlinear turbulent dynamical systems through
  conditional {G}aussian statistics, Monthly Weather Review 144~(12) (2016)
  4885--4917.

\bibitem{liptser2001statistics}
R.~S. Liptser, A.~N. Shiryaev, Statistics of Random Processes {II}: {II}.
  {A}pplications, Vol.~2, Springer, 2001.

\bibitem{chen2015predicting}
N.~Chen, A.~J. Majda, Predicting the real-time multivariate {M}adden--{J}ulian
  oscillation index through a low-order nonlinear stochastic model, Monthly
  Weather Review 143~(6) (2015) 2148--2169.

\bibitem{chen2015predicting2}
N.~Chen, A.~J. Majda, Predicting the cloud patterns for the boreal summer
  intraseasonal oscillation through a low-order stochastic model, Mathematics
  of Climate and Weather Forecasting 1~(1) (2015) 1--20.

\bibitem{chen2016filtering2}
N.~Chen, A.~J. Majda, Filtering the stochastic skeleton model for the
  {M}adden--{J}ulian oscillation, Monthly Weather Review 144~(2) (2016)
  501--527.

\bibitem{chen2014information}
N.~Chen, A.~J. Majda, X.~T. Tong, Information barriers for noisy {L}agrangian
  tracers in filtering random incompressible flows, Nonlinearity 27~(9) (2014)
  2133.

\bibitem{chen2015noisy}
N.~Chen, A.~J. Majda, X.~T. Tong, Noisy {L}agrangian tracers for filtering
  random rotating compressible flows, Journal of Nonlinear Science 25~(3)
  (2015) 451--488.

\bibitem{chen2016model}
N.~Chen, A.~J. Majda, Model error in filtering random compressible flows
  utilizing noisy {L}agrangian tracers, Monthly Weather Review 144~(11) (2016)
  4037--4061.

\bibitem{branicki2013dynamic}
M.~Branicki, A.~J. Majda, Dynamic stochastic superresolution of sparsely
  observed turbulent systems, Journal of Computational Physics 241 (2013)
  333--363.

\bibitem{keating2012new}
S.~R. Keating, A.~J. Majda, K.~S. Smith, New methods for estimating ocean eddy
  heat transport using satellite altimetry, Monthly Weather Review 140~(5)
  (2012) 1703--1722.

\bibitem{majda2014new}
A.~J. Majda, I.~Grooms, New perspectives on superparameterization for
  geophysical turbulence, Journal of Computational Physics 271 (2014) 60--77.

\bibitem{majda2012physics}
A.~J. Majda, J.~Harlim, Physics constrained nonlinear regression models for
  time series, Nonlinearity 26~(1) (2012) 201.

\bibitem{harlim2014ensemble}
J.~Harlim, A.~Mahdi, A.~J. Majda, An ensemble {K}alman filter for statistical
  estimation of physics constrained nonlinear regression models, Journal of
  Computational Physics 257 (2014) 782--812.

\bibitem{majda2014blended}
A.~J. Majda, D.~Qi, T.~P. Sapsis, Blended particle filters for
  large-dimensional chaotic dynamical systems, Proceedings of the National
  Academy of Sciences 111~(21) (2014) 7511--7516.

\bibitem{sapsis2013blending}
T.~P. Sapsis, A.~J. Majda, Blending modified gaussian closure and
  non-{G}aussian reduced subspace methods for turbulent dynamical systems,
  Journal of Nonlinear Science 23~(6) (2013) 1039--1071.

\bibitem{sapsis2013statistically}
T.~P. Sapsis, A.~J. Majda, Statistically accurate low-order models for
  uncertainty quantification in turbulent dynamical systems, Proceedings of the
  National Academy of Sciences 110~(34) (2013) 13705--13710.

\bibitem{slivinski2015hybrid}
L.~Slivinski, E.~Spiller, A.~Apte, B.~Sandstede, A hybrid particle--ensemble
  {K}alman filter for {L}agrangian data assimilation, Monthly Weather Review
  143~(1) (2015) 195--211.

\bibitem{hamill2000hybrid}
T.~M. Hamill, C.~Snyder, A hybrid ensemble {K}alman filter--3{D} variational
  analysis scheme, Monthly Weather Review 128~(8) (2000) 2905--2919.

\bibitem{majda2015statistical}
A.~J. Majda, Statistical energy conservation principle for inhomogeneous
  turbulent dynamical systems, Proceedings of the National Academy of Sciences
  112~(29) (2015) 8937--8941.

\bibitem{majda2012fundamental}
A.~J. Majda, Y.~Yuan, Fundamental limitations of ad hoc linear and quadratic
  multi-level regression models for physical systems, Discrete and Continuous
  Dynamical Systems B 17~(4) (2012) 1333--1363.

\bibitem{salmon1998lectures}
R.~Salmon, Lectures on geophysical fluid dynamics, Oxford University Press,
  1998.

\bibitem{thompson2006scaling}
A.~F. Thompson, W.~R. Young, Scaling baroclinic eddy fluxes: {V}ortices and
  energy balance, Journal of physical oceanography 36~(4) (2006) 720--738.

\bibitem{majda2006nonlinear}
A.~Majda, X.~Wang, Nonlinear dynamics and statistical theories for basic
  geophysical flows, Cambridge University Press, 2006.

\bibitem{vallis2017atmospheric}
G.~K. Vallis, Atmospheric and oceanic fluid dynamics, Cambridge University
  Press, 2017.

\bibitem{lorenz1963deterministic}
E.~N. Lorenz, Deterministic nonperiodic flow, Journal of the atmospheric
  sciences 20~(2) (1963) 130--141.

\bibitem{majda2008applied}
A.~J. Majda, C.~Franzke, B.~Khouider, An applied mathematics perspective on
  stochastic modelling for climate, Philosophical Transactions of the Royal
  Society of London A: Mathematical, Physical and Engineering Sciences
  366~(1875) (2008) 2427--2453.

\bibitem{majda2005information}
A.~Majda, R.~V. Abramov, M.~J. Grote, Information theory and stochastics for
  multiscale nonlinear systems, Vol.~25, American Mathematical Soc., 2005.

\bibitem{majda2009normal}
A.~J. Majda, C.~Franzke, D.~Crommelin, Normal forms for reduced stochastic
  climate models, Proceedings of the National Academy of Sciences 106~(10)
  (2009) 3649--3653.

\bibitem{majda2010low}
A.~J. Majda, B.~Gershgorin, Y.~Yuan, Low-frequency climate response and
  fluctuation--dissipation theorems: {T}heory and practice, Journal of the
  Atmospheric Sciences 67~(4) (2010) 1186--1201.

\bibitem{mantua2002pacific}
N.~J. Mantua, S.~R. Hare, The {P}acific decadal oscillation, Journal of
  oceanography 58~(1) (2002) 35--44.

\bibitem{majda2014conceptual}
A.~J. Majda, Y.~Lee, Conceptual dynamical models for turbulence, Proceedings of
  the National Academy of Sciences 111~(18) (2014) 6548--6553.

\bibitem{botev2010kernel}
Z.~I. Botev, J.~F. Grotowski, D.~P. Kroese, et~al., Kernel density estimation
  via diffusion, The Annals of Statistics 38~(5) (2010) 2916--2957.

\bibitem{chen2017rigorous}
N.~Chen, A.~J. Majda, X.~T. Tong, Rigorous analysis for efficient statistically
  accurate algorithms for solving {F}okker-{P}lanck equations in large
  dimensions~(In preparation).

\bibitem{majda2010quantifying}
A.~J. Majda, B.~Gershgorin, Quantifying uncertainty in climate change science
  through empirical information theory, Proceedings of the National Academy of
  Sciences 107~(34) (2010) 14958--14963.

\bibitem{majda2011link}
A.~J. Majda, B.~Gershgorin, Link between statistical equilibrium fidelity and
  forecasting skill for complex systems with model error, Proceedings of the
  National Academy of Sciences 108~(31) (2011) 12599--12604.

\bibitem{branicki2014non}
M.~Branicki, N.~Chen, A.~J. Majda, Non-{G}aussian test models for prediction
  and state estimation with model errors, in: Partial Differential Equations:
  Theory, Control and Approximation, Springer, 2014, pp. 99--138.

\bibitem{kullback1951information}
S.~Kullback, R.~A. Leibler, On information and sufficiency, The annals of
  mathematical statistics 22~(1) (1951) 79--86.

\bibitem{majda2007information}
A.~J. Majda, J.~Harlim, Information flow between subspaces of complex dynamical
  systems, Proceedings of the National Academy of Sciences 104~(23) (2007)
  9558--9563.

\bibitem{liang2005information}
X.~S. Liang, R.~Kleeman, Information transfer between dynamical system
  components., Physical review letters 95~(24) (2005) 244101--244101.

\bibitem{lee2017multiscale}
Y.~Lee, A.~J. Majda, Multiscale data assimilation and prediction using
  clustered particle filters, Journal of Computational Physics~(Submitted).

\bibitem{rosenblatt1956remarks}
M.~Rosenblatt, et~al., Remarks on some nonparametric estimates of a density
  function, The Annals of Mathematical Statistics 27~(3) (1956) 832--837.

\bibitem{parzen1962estimation}
E.~Parzen, On estimation of a probability density function and mode, The annals
  of mathematical statistics 33~(3) (1962) 1065--1076.

\bibitem{raykar2006fast}
V.~C. Raykar, R.~Duraiswami, Fast optimal bandwidth selection for kernel
  density estimation, in: Proceedings of the 2006 SIAM International Conference
  on Data Mining, SIAM, 2006, pp. 524--528.

\bibitem{jones1996brief}
M.~C. Jones, J.~S. Marron, S.~J. Sheather, A brief survey of bandwidth
  selection for density estimation, Journal of the American Statistical
  Association 91~(433) (1996) 401--407.

\bibitem{alexandre2008solve}
L.~A. Alexandre, A solve-the-equation approach for unidimensional data kernel
  bandwidth selection, Portugal: University of Beira Interior, Technical
  Report, 2008.

\bibitem{wand1995kernel}
M.~Wand, M.~Jones, Kernel smoothing. 1995, Chapman\&Hall, London.

\end{thebibliography}
\end{document}